\theoremstyle{plain}
\newtheorem{thm}[equation]{Theorem}
\newtheorem{prop}[equation]{Proposition}
\newtheorem{lem}[equation]{Lemma}
\newtheorem{cor}[equation]{Corollary}
\theoremstyle{definition}
\newtheorem{defn}[equation]{Definition}
\theoremstyle{remark}
\newtheorem{rmk}[equation]{Remark}
\newcommand{\msweyl}{\ensuremath{\mathcal{Q}}}
\newcommand{\msmax}{\ensuremath{\mathcal{B}}}
\newcommand{\spacetime}{\ensuremath{\mathcal{M}}}
\newcommand{\cauchy}{\ensuremath{\Sigma}}
\newcommand{\lcsym}{\ensuremath{\varepsilon}} 
\newcommand{\horizon}{\ensuremath{\mathfrak{h}}}
\newcommand{\outercom}{\ensuremath{\mathcal{E}}}
\newcommand{\maxw}{\ensuremath{\mathcal{H}}}
\newcommand{\preernst}{\ensuremath{\hat{\mathcal{F}}}}
\newcommand{\ernstform}{\ensuremath{\mathcal{F}}}
\newcommand{\maxpot}{\ensuremath{\Xi}}
\newcommand{\maxpotr}{\ensuremath{P_0}}
\newcommand{\ernstpot}{\ensuremath{\sigma}}
\newcommand{\msmaxpot}{\ensuremath{V}}
\newcommand{\Ricci}{\ensuremath{\mathit{Ric}}}
\newcommand{\Riemann}{\ensuremath{\mathit{Riem}}}
\newcommand{\weyls}{\ensuremath{\mathcal{C}}}
\newcommand{\Iasd}{\ensuremath{\mathcal{I}}}
\newcommand{\errsl}[1]{\ensuremath{\Cl[errorterms]{#1}}}
\newcommand{\errsr}[1]{\ensuremath{\Cr{#1}}}
\newcommand{\lbar}{\ensuremath{\underline{l}}}
\newcommand{\ernstnd}{\ensuremath{f}}
\newcommand{\covD}{\ensuremath{\nabla}}
\newcommand{\lieD}{\ensuremath{\pounds}}
\newcommand{\Real}{\ensuremath{\mathbb{R}}}
\newcommand{\Rmnum}[1]{\expandafter\@slowromancap\romannumeral #1@}
\numberwithin{equation}{subsection}
\begin{document}
\title[Multiple black holes]{Non-existence of multiple-black-hole solutions
close to Kerr-Newman}
\thanks{Version as of \SVNDate}
\author[W. W.-Y. Wong]{Willie Wai-Yeung Wong}
\address{\'Ecole Polytechnique F\'ed\'erale de Lausane, Lausanne,
Switzerland}
\thanks{W.W.Wong's research for this paper was supported in part by 
the Commission of the European Communities, ERC Grant Agreement No
208007, at the Department of Pure Mathematics and Mathematical
Statistics, University of Cambridge, Cambridge, UK; and in part by the
Swiss National Science Foundation.}
\email{willie.wong@epfl.ch}
\author[P. Yu]{Pin Yu}
\address{Mathematical Sciences Center, Tsinghua University, Beijing, China}
\thanks{P.Yu's research for this paper was supported in part by
Princeton University, Princeton, NJ, USA, and in part by NSF-China
Grant 11101235.}
\email{pin@math.tsinghua.edu.cn}
\subjclass[2000]{83C22, 83C57}
\thanks{The authors would like to thank Piotr Chru\'sciel for his
cogent comments on the manuscript, as well as the two anonymous
referees for their suggestions.}

\begin{abstract}
We show that a stationary asymptotically flat electro-vacuum solution
of Einstein's equations that is everywhere locally ``almost
isometric'' to a Kerr-Newman solution cannot admit more than one event
horizon. Axial symmetry is not assumed. In particular this implies
that the assumption of a single event horizon in
Alexakis-Ionescu-Klainerman's proof of perturbative uniqueness of
Kerr black holes is in fact unnecessary. 
\end{abstract}

\maketitle
\tableofcontents

\section{Introduction}
The goal of the present paper is to provide a justification for the
intuitively obvious fact that
\begin{quote}
\emph{A stationary electro-vacuum space-time that is everywhere \emph{almost}
isometric to Kerr-Newman can admit at most a single event horizon.}
\end{quote}
Roughly speaking, we do not expect small perturbations of the metric
structure to allow the topology (of the domain of outer
communications) of the solution to change greatly. Or, slightly
differently put, we expect that Weyl's observation for 
multiple-static-black-hole solutions remain true in the stationary case, that 
along the axes connecting the multiple black holes, the local geometry should 
be very different from what is present in a Kerr-Newman solution. In 
practice, however, one needs to be specific about what \emph{almost 
isometric} means. This shall be described later in this introduction. As a
direct consequence of the main result from this paper, we can slightly
improve the main theorem of Alexakis-Ionescu-Klainerman
\cite{AlIoKl2010} to remove from it the assumption that the space-time
only has one bifurcate event horizon. A secondary consequence of the
current paper is that it casts some new light on the tensorial
characterisations of Kerr and Kerr-Newman space-times due to Mars
\cite{Mars1999} and the first author \cite{Wong2009}. 

\subsection{History and overview}
The greater setting in which this paper appears is the study of the
``black hole uniqueness theorem''. Prosaically stated, the theorem
claims that
\begin{quote}
\emph{The only nondegenerate stationary\footnote{Admitting a Killing vector field
that becomes the time-translation at spatial infinity} electro-vacuum asymptotically flat
space-times are described by the three-parameter Kerr-Newman family.}
\end{quote}
The nondegeneracy here refers to conditions on the geometry of the
event horizon, or constraints on some asymptotic constants, or both,
of the solution. That a certain nondegeneracy is required is already
necessitated by the existence of the Majumdar--Papapetrou solutions
(see, e.g.~\cite{HarHaw1972}), which represent static multiple-black-hole
solutions in which the gravitational attraction between the black holes are
balanced out by their mutual electromagnetic repulsion. In the present
paper all black holes are nondegenerate or subextremal; as shall be
seen the argument depends strongly on the presence of nondegenerate
bifurcate event horizons. For the degenerate case we refer the readers
to \cite{HarHaw1972, ChrTod2007, ChrNgu2010, FigLuc2010, NeuHen2012} and the
references therein. 

The expectation that one such theorem may be available goes back at
least to Carter's lecture \cite{Carter1973}, where a first version of
a ``no hair'' theorem was proven; the hypotheses for this theorem
assumes, in particular, that the space-time is axisymmetric in
addition to being stationary. For \emph{static}\footnote{Admitting
a \emph{hypersurface-orthogonal} Killing vector field that is the time-translation at spatial
infinity} solutions a general uniqueness theorem was already
established without additional symmetry assumptions by Israel
\cite{Israel1967, Israel1968}. By appealing to Hawking's strong
rigidity theorem (see next paragraph), however, one can assume (with
some loss of generality) that any reasonable stationary
black-hole space-time is in fact axisymmetric. This additional
symmetry can be used to great effect: for the Kerr-Newman solutions
the stationary Killing field is not everywhere time-like due to the
presence of the ergoregions. Thus a symmetric reduction of Einstein's
equations with just a stationarity assumption (as opposed to a
staticity one) is insufficient to reduce the hyperbolic system of
equations to an elliptic one, for which uniqueness theorems are more
readily available (or widely known). With the additional axial
symmetry, the equations of motions for general relativity can be shown
to reduce to that of a harmonic map \cite{Buntin1983, Mazur1982,
Carter1985, Robins1975}, for which elliptic techniques (maximum
principle etc.) can be used to obtain the uniqueness result. For a
modern discussion one can consult Heusler's monograph
\cite{Heusle1996} in which various natural generalisations of this
method are considered. For some more historical notes and critical
analyses of these more classical results, see
\cite{Chrusc1994, Chrusc1996}. More recently, Costa in his PhD dissertation
\cite{Costa2010} gave a complete and modern derivation of the black
hole uniqueness theorem, in the formulation which is amenable to the
approach described above (namely first establishing axial symmetry and
then obtaining uniqueness using elliptic methods). 

One of the main shortfalls of the above approach is that Hawking's
rigidity theorem, as originally envisioned, requires that the
space-time be real analytic. Thus the result established for black
hole uniqueness is conditional on either the space-time being \emph{a
priori} axisymmetric, or real analytic. To overcome this problem,
Ionescu and Klainerman initiated a program to study the black hole
uniqueness problem as a problem of ``unique continuation''; namely,
one considers the \emph{ill-posed} initial value problem for the
Einstein equations with data given on the event horizon and try to
demonstrate a uniqueness property for the solution in the domain of
outer communications (outside the black hole; the problem of extending
to the \emph{inside} of the black hole, which does not suffer from the
obstruction of the ill-posedness of the initial value problem, has been 
considered before by other authors \cite{FrRaWa1999, Racz2001}). Their 
first approach to this problem \cite{IonKla2009, IonKla2009a} (see also the
generalisation by the first author \cite{Wong2009a}) provided a different
conditional black hole uniqueness result: instead of demanding the
space-time be axisymmetric or real analytic, the extra condition is
provided by, roughly speaking, prescribing the geometry of the event
horizon as an embedded null hypersurface in the space-time. Through
unique continuation, this boundary condition suffices to imply that
the so-called Mars-Simon tensor \cite{Mars1999, Wong2009} vanishes
everywhere, which shows that the exterior domain of the space-time is 
everywhere locally isometric to a Kerr(-Newman) black hole. A second
approach to this problem was later taken together with Alexakis
\cite{AlIoKl2010, AlIoKl2010a}, where under the assumption that the
Mars-Simon tensor is ``small'' one can extend Hawking's rigidity
theorem to the non-analytic case (see also the generalisation by the
second author \cite{Yu2010}). By appealing to the \emph{axisymmetric}
version of the black hole uniqueness theorem, this last theorem returns
us to a statement similar to Carter's original ``no hair'' theorem:
there are no other stationary electro-vacuum asymptotically flat
space-times in a small neighbourhood of the Kerr-Newman family. One of
the technical assumptions made in \cite{AlIoKl2010} is that the
space-time admits only one connected component of the event horizon;
in this paper we remove that assumption. 

The arguments described in the previous paragraph relied upon a
tensorial \emph{local} characterisation of the Kerr-Newman space-times
due to Mars and then to the first author \cite{Mars1999, Wong2009}. In
those two papers, that a region in a stationary solution to Einstein's
equations is locally isometric to a Kerr(-Newman) space-time is shown
to be equivalent to the vanishing of certain algebraic expressions
relating the Weyl curvature, the Ernst potential, the Ernst two form,
and the electromagnetic field. It is clear from the algebraic nature
of the expressions that if
the metric of a stationary solution and the electromagnetic field are
$C^{2}$ close to that of a Kerr-Newman space-time in local
coordinates, the algebraic expressions will also be suitably small.
The converse, however, is not obviously true: the demonstrations in
\cite{Mars1999, Wong2009} construct local coordinate systems by
first finding a holonomic frame field. Hence exact cancellations,
and not just approximate ones, are necessary to guarantee
integrability. As already was used in \cite{AlIoKl2010}, and
generalised further in the current paper, we show what can be
interpreted as a
partial converse. In particular, we show that one can reconstruct a
scalar function to serve as an analogue of the $r$ coordinate of Boyer-Lindquist presentation of the
Kerr-Newman metric, and thereby make use of many of its nice
properties. Critically used in \cite{AlIoKl2010} and
\cite{Yu2010} is that the level surfaces of this ``analogue-$r$'' have good
pseoduconvexity properties for a unique continuation argument; in this
paper we use the property that the ``analogue-$r$'' function behaves like the distance function
from a large sphere near infinity, and cannot have a critical point
outside the event horizons.

That analogues of the $r$ coordinate play important roles in
black hole uniqueness theorems is not new. They typically appear as
the inverse of the Ernst potential, and are used implicitly in
Israel's proofs for the static uniqueness theorems \cite{Israel1967,
Israel1968} (see also \cite{Robins1977, Simon1985, Alam1992}
which share some motivation with the present paper). Incidentally,
the proof by M\"uller zum Hagen and Seifert \cite{MulSei1973}
of non-existence of multiple black holes in the \emph{static}
axi-symmetric case also employs the properties of some analogue of
this $r$ function; whereas we (as will be indicated) use a
mountain-pass lemma to drive our non-existence proof, M\"uller zum
Hagen and Seifert employed a force balance argument that is somewhat
reminiscent of the recent work of Beig, Gibbons, and Schoen
\cite{BeGiSc2009}.

In the present paper we show that multiple stationary black hole configurations
cannot be possible were the solution everywhere (in the domain
of outer communications) locally close to, but not necessarily
isometric to, a subextremal Kerr-Newman solution. We would be 
remiss not to mention
the literature concerning the case where the ``smallness parameter''
of being close to Kerr-Newman solutions is replaced by the restriction
of axisymmetry (which, in particular, would apply assuming a smooth
version of Hawking's rigidity theorem is available. Note also that the
\emph{static} case behaves somewhat better; see previous paragraphs). 
On the one hand we have the construction
(see \cite{Weinst1990, Weinst1992, Weinst1996} and references therein) 
of solutions with multiple spinning black holes sharing the same axis of 
rotation, which may be singular along the axis (see also \cite{Nguyen2011} 
for an analysis of their regularity property). This construction uses
again the stationary and axial symmetries to reduce the question to
the existence of certain harmonic maps with boundary conditions
prescribed along the axis of symmetry and the event horizon. On the
other hand we also have the approach by studying the Ernst formulation of
Einstein's equations in stationary-axisymmetric case, and using the
inverse scattering method to obtain a non-existence result in the
two-body case; see \cite{NeuHen2009, NeuHen2012} and references therein. 
As the methods employed in the approaches mentioned above are rather
orthogonal to ours (for showing non-existence the general approach in
the stationary axisymmetric case is to show the \emph{lack of}
regularity along the axis connecting the multiple black holes), it is 
hard to compare the results obtained, especially in view of the fact
that the objects involved are not \emph{supposed} to exist as smooth
solutions. 

One last remark about the theorem proved in this paper. \emph{A
posteriori}, by combining the results of the present paper with
\cite{AlIoKl2010} and the axisymmetric uniqueness result of
\cite{Costa2010}, we have that the only space-times that satisfy our
hypotheses are in fact the Kerr-Newman solutions. Hence while it is
\emph{a priori} 
necessary to state our theorem and perform our computations in a way
that admits the possibility such additional non-Kerr-Newman solutions
exist, one should not try too hard to precisely imagine such
additional solutions. 

\subsection{Main idea of proof}
We will not state the full detail of the main theorem until Section
\ref{sect:maintheorem}, seeing that we need to first clarify notations
and definitions. Suffice it to say for now that under some technical
assumptions (a subset of that which was assumed in \cite{AlIoKl2010})
and a smallness condition (that the space-time is everywhere locally
close to Kerr-Newman), the event horizon of a stationary asymptotically 
flat solution to the  Einstein-Maxwell equations can have at most one 
connected component. 

We obtain the conclusion by studying a Cauchy hypersurface of the
domain of outer communications of such space-time. We show that its
topology must be that of $\Real^3$ with a single ball removed. We
argue by contradiction using a ``mountain pass lemma'' applied to the
function we denote by $y$, representing the real part of the inverse
of the Ernst potential. We will show
\begin{itemize}
\item Firstly, the function $y$ is well-defined in the domain of
outer communications. Noting that $y$ is defined by the inverse 
of the values of a smooth function, we need to show that the Ernst potential does not
vanish. This will occupy the bulk of the paper. 
\item Secondly, we need to show that $y$ satisfies the hypotheses of a
mountain pass lemma. To do so we use quantitative estimates derived
from the smallness conditions. On the domain of outer communications
of Kerr-Newman space-time, the function $y$ attains its minimum
precisely on the event horizon, and does not admit any critical points
outside the event horizon. We show that these properties remain
approximately true for our solutions. 
\item Lastly, to conclude the theorem, we observe that were there to
be more than one ``hole'' in the Cauchy hypersurface, the function $y$
must be ``small'' along two disconnected sets (the event horizons), 
and ``big'' somewhere away from those two sets. 
By the mountain pass lemma $y$ must then have a critical point, 
which gives rise to the contradiction. 
\end{itemize}

Our proof given in this manuscript is essentially perturbative. The
eventual goal, however, is to arrive at a ``large data'' theorem which
bypasses the smallness requirement in \ref{ass:kerrnewman}. At present
it is not clear to the authors how to proceed. While it is probable
that the eventual proof for the black hole uniqueness theorem does not
in fact make use of the characterisation tensors (see next section),
we hope the readers would forgive us for hoping that, given the
topological nature of the current argument, a ``large data'' version
of the presented theorem may be approachable if one were to find a
suitable geometric flow which acts ``monotonically'' (in a suitable
sense) on the characterisation tensors. 

\section{Preliminaries}
We begin with definitions. A space-time $(\spacetime,g_{ab})$ ---
that is, \begin{inparaenum}[(i)]
\item a four-dimensional, orientable, para-compact,
simply-connected manifold \spacetime~endowed with  
\item a Lorentzian metric $g_{ab}$ with signature $(-+++)$ such that
$(\spacetime,g_{ab})$ is time-orientable
\end{inparaenum} --- is said to be electro-vacuum if there exists a
(real) two-form $H_{ab}$ on \spacetime~called the \emph{Faraday tensor} such
that the Einstein-Maxwell-Maxwell (to distinguish it from non-linear
electromagnetic theories such as Einstein-Maxwell-Born-Infeld
\cite{Kiessl2004, Kiessl2004a, Speck2008}) equations are satisfied:
\begin{align*}
\Ricci_{ab} &= 2 H_{ac}H_b{}^c - \frac12 g_{ab}H_{cd}H^{cd} \\
& \left( = (H+
i{}^*H)_{ac}(H-i{}^*H)_b{}^c \right) \\
\nabla^a(H+i{}^*H)_{ac} &= 0
\end{align*}
where ${}^*$ is the Hodge-star operator: ${}^*H := \frac12
\lcsym_{abcd}H^{cd}$ with $\lcsym_{abcd}$ the volume form for the
metric $g_{ab}$. On a four-dimensional Lorentzian manifold, Hodge-star
defines an endomorphism on the space of two-forms which squares to
negative the identity. Hence we can factor over the complex numbers
and call a complex-valued two-form $\mathcal{X}_{ab}$ (anti-)self-dual if
${}^*\mathcal{X}_{ab} = (-)i\mathcal{X}_{ab}$. (See Section 2.1 in
\cite{Wong2009}
for a more detailed discussion of self-duality.) Observe that $H_{ab}
+ i{}^*H_{ab}$ is anti-self-dual. So equivalently we say the
space-time is electro-vacuum if there exists a complex, anti-self-dual
two-form $\maxw_{ab}$ such that 
\begin{subequations}
\begin{align}
\label{eq:einsteq}\Ricci_{ab} &= 4\maxw_{ac}\bar{\maxw}_b{}^c \\
\label{eq:maxeq}\nabla^a\maxw_{ac} &= 0~.
\end{align}
\end{subequations}
One can easily convert between the two formulations by the
formulae $2\maxw_{ab} = H_{ab} + i{}^*H_{ab}$, and $H_{ab} =
\maxw_{ab} + \bar{\maxw}_{ab}$. 

Throughout we will assume the electro-vacuum space-time
$(\spacetime,g_{ab},\maxw_{ab})$ admits a continuous symmetry, that
is, there exists a vector field $t^a$ on \spacetime~such that the Lie
derivatives $\lieD_tg_{ab} = 0$ ($t^a$ is Killing) and $\lieD_t\maxw_{ab} = 0$. 

We will use $C_{abcd}$ to denote the Weyl curvature, and
$\mathcal{C}_{abcd} = \frac12 (C_{abcd} + i{}^*C_{abcd})$ its
anti-self-dual part (see Section 2.2 of \cite{Wong2009}). For
an arbitrary tensor field $Z^{a_1\ldots a_k}_{b_1\ldots b_j}$ we write
$Z^2$ for its Lorentzian norm relative to the metric $g_{ab}$,
extended linearly to complex-valued fields. Hence for real $Z$, $Z^2$
may carry either sign; for complex $\mathcal{Z}$, $\mathcal{Z}^2$ can
be a complex number. We also define 
\[\Iasd_{abcd} := \frac14( g_{ac}g_{bd} - g_{ad}g_{bc} + i
\lcsym_{abcd}) \]
the projector to, and induced metric on, the space of anti-self-dual
two-forms. We also introduce the short-hand
\begin{equation}
(\mathcal{X}\tilde{\otimes}\mathcal{Y})_{abcd} :=
\frac12(\mathcal{X}_{ab}\mathcal{Y}_{cd} +
\mathcal{Y}_{ab}\mathcal{X}_{cd}) -
\frac13\Iasd_{abcd}\mathcal{X}_{ef}\mathcal{Y}^{ef}
\end{equation} 
which combines two anti-self-dual two-forms to form an anti-self-dual
Weyl-type tensor. 

Two important product properties of anti-self-dual two-forms that will be
used frequently in computations are
\begin{align}
\label{eq:asdprodprop1}\mathcal{X}_{ac}\bar{\mathcal{X}}_b{}^c &=
\mathcal{X}_{bc}\bar{\mathcal{X}}_a{}^c~,\\
\label{eq:asdprodprop2}\mathcal{X}_{ac}\mathcal{Y}_b{}^c + \mathcal{Y}_{ac}\mathcal{X}_b{}^c
&= \frac12 g_{ab} \mathcal{X}_{cd}\mathcal{Y}^{cd}~.
\end{align}

Lastly the symbols $\Re$ and $\Im$ will mean to take the real and
imaginary parts respectively. 

\subsection{The ``error'' tensors}
Now, since \maxw~solves Maxwell's equations, it is closed. Cartan's
formula gives
\[ d\iota_t\maxw + \iota_td\maxw = \lieD_t\maxw \]
and hence by our assumptions $\iota_t\maxw$ is a closed form. Since we
assumed our space-time is simply connected (a reasonable hypothesis in
view of the topological censorship theorem \cite{FrScWi1993} since we will only consider
a neighbourhood of the domain of outer communications), up to a
constant there exists some complex-valued function \maxpot~such that
$d\maxpot = \iota_t\maxw$. 

Observe that since $t^a$ is Killing, $\nabla_at_b$ is anti-symmetric.
Define $\preernst_{ab} = \nabla_at_b + 
\frac{i}{2}\lcsym_{abcd}\nabla^ct^d$. Now we define the complex
Ernst two-form 
\begin{equation}\label{eq:def:ernstform}
\ernstform_{ab} := \preernst_{ab} - 4 \bar\maxpot \maxw_{ab}~.
\end{equation}
One easily checks that \ernstform~also satisfies Maxwell's equations,
by virtue of the Jacobi equation for the Killing vector field $t^a$
(which is to say, $\covD_c\covD_at_b = \Riemann_{dcab}t^d$)
which implies that $\nabla^a\preernst_{ab} = - \Ricci_{ab}t^a$. Thus
analogous to how \maxpot~is defined, we can define (again up to a
constant) \ernstpot~to be a complex valued function, which we call the
Ernst potential, such that
$d\ernstpot = \iota_t\ernstform$. 

The main objects we consider are
\begin{defn}
The \emph{characterization} or \emph{error} tensors are the following
objects defined up to the free choice of four normalizing constants: the 
two complex 
constants in the definition of \ernstpot~and \maxpot, a complex
constant $\kappa$, and a real constant $\mu$. We define the two-form
\msmax~and the four-tensor \msweyl~by
\begin{subequations}
\begin{align}
\label{eq:def:msmax} \msmax_{ab} &:= \kappa \ernstform_{ab} + 2 \mu \maxw_{ab} \\
\label{eq:def:msweyl} \msweyl_{abcd} &:= \weyls_{abcd} + \frac{6\kappa \bar{\maxpot} - 3
\mu}{2\mu \ernstpot}(\ernstform \tilde\otimes \ernstform)_{abcd}
\end{align}
\end{subequations}
\end{defn} 

\begin{rmk}\label{rmk:afcompatibility}
The necessity of normalisation of $\maxpot$ and $\ernstpot$ is familiar
from classical physics: potential energies are relative and not
absolute. As it turns out, the choice of the four normalisation
constants entails compatibility with asymptotic flatness (that of the
potentials \maxpot\ and \ernstpot) and partial restrictions on the
mass, charge, and angular momentum parameters of the corresponding
Kerr-Newman solution. In the asymptotically flat case where space-like
infinity is defined and where we can read-off the asymptotic mass and
charge, the ``correct'' choice (see Assumption \ref{ass:kerrnewman}
below) of the normalising constants are such
that \maxpot\ and \ernstpot\ vanish at space-like infinity and $\mu$ 
and $\kappa$ are the mass and charge parameters respectively, as
these are the choices for which $\msmax$ and $\msweyl$ vanish in
Kerr-Newman space-time. When considering just a domain in space-time
when the asymptotically flat end is not accessible, we do not have a
canonical method of choosing the four constants; see also Theorem 
\ref{thm:character} below. 
\end{rmk}

These tensors are the natural generalization of the Mars-Simon tensor
\cite{Mars1999, Simon1984, IonKla2009a} which characterizes Kerr space-time among
stationary solutions of the Einstein vacuum equations. (Indeed, for
vacuum space-times we can set $\maxw$ and $\maxpot$ to be zero
identically; then choosing $\kappa = 0$ we have that $\msmax$ vanishes
and $\msweyl$ is exactly the Mars-Simon tensor.) More precisely,
we have the following theorem due to the first author \cite{Wong2009}.
\begin{thm}\label{thm:character}
Let $(\spacetime, g_{ab},\maxw_{ab})$ be an electro-vacuum space-time
admitting the symmetry $t^a$. Let $U\subset \spacetime$ be a connected
open subset, and suppose there exists a normalisation such that on $U$
we have $\ernstpot\neq 0$, $\msmax = 0$, and $\msweyl = 0$. Then we
have 
\[ t^2 + 2\Re{\ernstpot} + \frac{|\kappa\ernstpot|^2}{\mu^2} + 1 =
\mathrm{const.} \qquad \mbox{and} \qquad \mu^2\ernstform^2 + 4\ernstpot^4 =
\mathrm{const.} \]
If, furthermore, both the above expressions evaluate to 0, and $t^a$
is time-like somewhere on $U$, then $U$ 
is locally isometric to a domain in Kerr-Newman
space-time with charge $\kappa$, mass $\mu$, and angular momentum
$\mu\sqrt{\mathfrak{A}}$, where
\[ \mathfrak{A} := \left|\frac{\mu}{\ernstpot}\right|^2\left(
\Im\nabla\frac{1}{\ernstpot}\right)^2 +
\left(\Im\frac{1}{\ernstpot}\right)^2 \]
is a constant on $U$. 
\end{thm}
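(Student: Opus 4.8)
The plan is to split the statement into two parts: first establish the two conservation laws, then prove the local isometry. For the conservation laws I would work directly from the vanishing of $\msweyl$ and $\msmax$. From $\msmax = 0$ we get $\ernstform_{ab} = -\frac{2\mu}{\kappa}\maxw_{ab}$, which immediately relates $\ernstform^2$ to $\maxw^2$ and lets us express everything in terms of $\maxw$, $\maxpot$, and $\ernstpot$. I would then differentiate the candidate first integrals. For $t^2 + 2\Re\ernstpot + |\kappa\ernstpot|^2/\mu^2 + 1$, I would compute $\nabla_a(t^2) = 2t^b\nabla_a t_b = \Re(\preernst_{ab}t^b)$ using the definition of $\preernst$, and $\nabla_a\ernstpot = \ernstform_{ab}t^b = \preernst_{ab}t^b - 4\bar\maxpot\maxw_{ab}t^b = \preernst_{ab}t^b - 4\bar\maxpot\,\nabla_a\maxpot$ (using $d\maxpot = \iota_t\maxw$). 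Combining these, the $\preernst_{ab}t^b$ terms should cancel against the real part of $\nabla_a\ernstpot$ up to terms involving $\maxpot\nabla_a\bar\maxpot$, which are in turn controlled by $\nabla_a|\kappa\ernstpot|^2/\mu^2$ once one uses the algebraic relation between $\maxpot$ and $\ernstpot$ that $\msmax = 0$ forces. For the second integral $\mu^2\ernstform^2 + 4\ernstpot^4$ I would use $\msweyl = 0$, which expresses $\weyls_{abcd}$ in terms of $(\ernstform\tilde\otimes\ernstform)_{abcd}$; contracting the Bianchi identity $\nabla^a\weyls_{abcd} = (\text{Maxwell terms})$ with $t^a t^c$ and using the product identities \eqref{eq:asdprodprop1}–\eqref{eq:asdprodprop2} together with $\nabla_a\ernstpot = \iota_t\ernstform$ should produce $\nabla_d(\mu^2\ernstform^2 + 4\ernstpot^4) = 0$ after the algebra collapses.

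For the isometry statement, assuming both integrals vanish, I would follow the strategy of \cite{Wong2009}: introduce the scalar $1/\ernstpot$ as an analogue of the Boyer–Lindquist radial coordinate, verify that $\mathfrak{A}$ as defined is constant on $U$ by differentiating it and using the two (now vanishing) first integrals, and then construct an explicit holonomic coordinate system $(t, r, \theta, \phi)$ in which the metric takes the Kerr–Newman form. The real part $\Re(1/\ernstpot)$ gives the $r$-coordinate (up to an affine normalization by $\mu$ and $\kappa$), the imaginary part and its gradient feed into the angular coordinate $\theta$, the Killing parameter of $t^a$ gives $t$, and a fourth function — obtained by integrating a one-form built from $t^a$, $\maxw$, and $\ernstform$, whose closedness follows from $\msmax = \msweyl = 0$ — gives $\phi$. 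The requirement that $t^a$ be timelike somewhere ensures we land in the non-degenerate branch and that the discriminant $\mu^2 - \kappa^2 - \mu^2\mathfrak{A}$ has the right sign for a genuine Kerr–Newman domain.

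The main obstacle is the second step of the isometry argument: producing the \emph{exact} holonomic frame. As the introduction itself emphasizes, the demonstrations in \cite{Mars1999, Wong2009} "construct local coordinate systems by first finding a holonomic frame field," and "exact cancellations, and not just approximate ones, are necessary to guarantee integrability." So the delicate point is verifying that the candidate coordinate one-forms are genuinely closed — this is where the vanishing (as opposed to mere smallness) of $\msmax$ and $\msweyl$, combined with the two first integrals being exactly zero, must be used in full force. Concretely, one must show that the curvature obstruction to integrating the frame is a multiple of the error tensors and the first integrals, hence vanishes. Everything else — the two conservation laws, the constancy of $\mathfrak{A}$, and the final identification of parameters — is a (lengthy but) mechanical consequence of the defining algebraic relations and the product identities \eqref{eq:asdprodprop1}–\eqref{eq:asdprodprop2}, and I would relegate the bulk of it to \cite{Wong2009}, reproducing only the computation of the first integrals in detail since those are what the rest of the present paper needs.
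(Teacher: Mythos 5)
The paper does not prove Theorem \ref{thm:character} at all: it is attributed to \cite{Wong2009} (and, via the $\kappa=0$ reduction, to \cite{Mars1999}), and the surrounding remarks only record the translation of normalisation conventions. So there is no in-paper proof to compare your argument against; the benchmark is the construction of \cite{Wong2009, Mars1999}, which your outline follows in broad strokes, and your decision to defer the isometry step to \cite{Wong2009} is exactly what the paper itself does.

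That said, for the two conserved quantities you are re-deriving more than necessary, and the derivation you sketch has slips. The paper already supplies the machinery: Lemma \ref{lem:sumpotsisweakerror} says $d\msmaxpot$ is an error term, so $\msmax = 0$ gives $\kappa\ernstpot + 2\mu\maxpot = \mathrm{const.}$, and then the first conserved quantity is obtained by feeding this into identity \eqref{eq:identity2} (or \eqref{eq:identity2p}) — no new computation is needed. For the second, Lemma \ref{lem:unoderiserror} specialised to $\msmax = \msweyl = 0$ (using $\ernstpot \neq 0$ and $\maxpotr \neq 0$) gives $\nabla_c\bigl(\ernstform^2/4\ernstpot^4\bigr) = 0$, i.e.\ $\ernstform^2/\ernstpot^4$ is constant; note that this is \emph{not} literally the same as "$\mu^2\ernstform^2 + 4\ernstpot^4 = \mathrm{const.}$" unless that constant is zero (or $\ernstpot$ is constant), a phrasing artefact of the normalisation change against \cite{Wong2009} worth keeping in mind. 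On the smaller points: $\nabla_a t^2 = 2t^b\Re\preernst_{ab}$, not $\Re(\preernst_{ab}t^b)$; $\msmax = 0 \Rightarrow \ernstform_{ab} = -\tfrac{2\mu}{\kappa}\maxw_{ab}$ requires $\kappa \neq 0$ (the $\kappa = 0$ case is the vacuum reduction, handled separately); and "contracting $\nabla^a\weyls_{abcd}$ with $t^a t^c$" does not type-check since $a$ is already summed — the contraction pattern actually used (see the computation leading to Lemma \ref{lem:unoderiserror}) is against $\ernstform^{ab}$ and $t^d$.

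Your identification of the real difficulty — producing an exactly holonomic coframe and verifying the integrability conditions from the vanishing of $\msmax$, $\msweyl$, and the two first integrals — is correct and matches the discussion in the introduction. But as written, "the closedness follows from $\msmax = \msweyl = 0$" is the entire content of that step, not a sketch of it; if one were to reproduce the isometry argument rather than cite it, this is precisely where the work lies, and the present paper consciously avoids doing so.
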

\begin{rmk}
Algebraically the definitions given herein are normalized differently
from the definitions in \cite{Wong2009}. For $\kappa \neq 0$ by
rescaling one can see that the statements in the above theorem are
algebraically identical to the hypotheses in the main theorem in
\cite{Wong2009}. For $\kappa = 0$ it is trivial to check that the
conditions given above reduces to the case given in \cite{Mars1999}. 
\end{rmk}
\begin{rmk}
The condition that $t^a$ is time-like somewhere on $U$ can be relaxed to the
condition that there is some point in $U$ where $t^a$ is not
orthogonal to either of the principal null directions of $\ernstform$. 
Also note that asymptotic flatness is not required for the theorem; in
the asymptotically flat case, using the normalisation described in
Remark \ref{rmk:afcompatibility}, the vanishing of \msmax\ and
\msweyl\ automatically ensures that the expressions involving
$t^2\ldots$ and $\mu^2 \ernstform^2\ldots$ vanishes. 
\end{rmk}

In view of Theorem \ref{thm:character}, we expect to use the tensors
\msmax~and \msweyl~as a measure of deviation of an arbitrary
stationary electro-vacuum solution from the Kerr-Newman family.
Indeed, the main assumption to be introduced in the next section is a
uniform smallness condition on the two tensors. In fact, we say that
\begin{defn}
A tensor $\mathcal{X}_{a_1\ldots a_k}$ is said to be an
\emph{algebraic error term} if there exists smooth tensors
$\mathcal{A}^{(1)}_{a_1\ldots a_k}{}^{bc}$,
$\mathcal{A}^{(2)}_{a_1\ldots a_k}{}^{bcd}$, and 
$\mathcal{A}^{(3)}_{a_1\ldots a_k}{}^{bcde}$ such that
\[ \mathcal{X}_{a_1\ldots a_k} = \mathcal{A}^{(1)}_{a_1\ldots
a_k}{}^{bc}\msmax_{bc} + \mathcal{A}^{(2)}_{a_1\ldots
a_k}{}^{bcd}\nabla_b\msmax_{cd} + \mathcal{A}^{(3)}_{a_1\ldots
a_k}{}^{bcde}\msweyl_{bcde}~.\]
\end{defn}
\begin{rmk}
In the course of the proof, we shall see explicit expressions for all
the algebraic error terms that play a role in the analysis. For these
error terms, the tensors $\mathcal{A}^{(*)}_*$ can be controlled by
the background geometry. See Assumption \ref{ass:kerrnewman} below as
well as Proposition \ref{prop:errest}.
\end{rmk}
Morally speaking, an algebraic error term is one that can be
``made small'' by putting suitable smallness assumptions on the error
tensors. In view of the indefiniteness of the Lorentzian geometric, the 
smallness needs to be stronger than smallness in ``Lorentzian norm'';
see Assumption \ref{ass:kerrnewman} in the next section. Of course, 
we note that should the black hole uniqueness theorem
be proved in the smooth category (as opposed to the state-of-the-art
that only holds for real-analytic space-times), then with some 
reasonable conditions imposed on the space-time \msmax~and \msweyl~ must 
vanish identically. 

Following the definition by Equation \eqref{eq:def:msmax}, we
immediately have 
\begin{lem}\label{lem:sumpotsisweakerror}
The exterior derivative $d\msmaxpot$ of the potential sum $\msmaxpot:=\kappa\ernstpot +
2\mu\maxpot$ is an error term. 
\end{lem}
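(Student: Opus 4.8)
The plan is to trace through the definitions of the potentials and observe that $d\msmaxpot$ is, quite literally, the interior product of \msmax\ with the Killing field $t^a$. First I would use linearity of the exterior derivative and the constancy of $\kappa$ and $\mu$ to write $d\msmaxpot = \kappa\, d\ernstpot + 2\mu\, d\maxpot$. Since \ernstpot\ and \maxpot\ were introduced precisely so that $d\ernstpot = \iota_t\ernstform$ and $d\maxpot = \iota_t\maxw$, and since contraction with the fixed vector $t^a$ is linear over the constants $\kappa,\mu$, substituting and invoking the definition \eqref{eq:def:msmax} of \msmax\ gives
\[ d\msmaxpot \;=\; \kappa\,\iota_t\ernstform + 2\mu\,\iota_t\maxw \;=\; \iota_t\bigl(\kappa\ernstform + 2\mu\maxw\bigr) \;=\; \iota_t\msmax, \]
that is, $(d\msmaxpot)_a = t^b\msmax_{ba}$ in abstract indices.

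Then I would simply exhibit the coefficient tensors required by the definition of an algebraic error term: take $\mathcal{A}^{(1)}_a{}^{bc} := t^b\,\delta_a^{\,c}$, which is a smooth tensor field since it is built from the smooth Killing vector $t^a$ and the Kronecker delta, and take $\mathcal{A}^{(2)}_a{}^{bcd} = \mathcal{A}^{(3)}_a{}^{bcde} = 0$. Then $\mathcal{A}^{(1)}_a{}^{bc}\msmax_{bc} = t^b\msmax_{ba} = (d\msmaxpot)_a$, which is exactly the form demanded by the definition. Hence $d\msmaxpot$ is an algebraic error term, and in particular an error term.

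I do not expect any genuine obstacle here; the content of the lemma is the bookkeeping fact that the combination $\msmaxpot = \kappa\ernstpot + 2\mu\maxpot$ serves as a potential for \msmax\ (in the sense $d\msmaxpot = \iota_t\msmax$) exactly because it is the same linear combination of the potentials for \ernstform\ and \maxw, together with the triviality that interior product of a two-form with a fixed vector field is a pointwise, zeroth-order linear operation and therefore cannot produce anything beyond \msmax\ itself. The only point deserving a word, and it is not a difficulty, is that the identity $d\msmaxpot = \iota_t\msmax$ presupposes that the additive normalisation constants in \ernstpot\ and \maxpot\ are chosen consistently with that for $\msmaxpot$; this is harmless, since all the potentials are only defined up to additive constants and the statement concerns $d\msmaxpot$, which is insensitive to that choice.
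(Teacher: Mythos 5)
Your argument is correct and is exactly what the paper intends by its one-line remark that the lemma follows ``immediately'' from the definition \eqref{eq:def:msmax}: linearity of $d$ and of $\iota_t$ gives $d\msmaxpot = \iota_t\msmax$, which is manifestly of the required form with $\mathcal{A}^{(1)}_a{}^{bc} = t^b\delta_a{}^c$. You have simply made explicit the bookkeeping the paper leaves tacit.
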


For conciseness, we will also use the notation 
\begin{equation}\label{eqn:defnpzero}\maxpotr := 2\bar\kappa
\maxpot - \mu~,
\end{equation} 
and define the real-valued
quantities $y,z$ such that 
\[ y+iz := -\ernstpot^{-1}\]
when the
right-hand side is finite. For motivation, we mention the main lemma
used in proving Theorem \ref{thm:character}. 
\begin{lem}[Mars-type Lemma \cite{Wong2009}]\label{lem:exactmars}
Under the assumptions of Theorem \ref{thm:character} with the
requirement that the two expressions evaluate to 0, we have
$g^{ab}\nabla_a y \nabla_b z = 0$ and
\[ (\nabla z)^2 = \frac{1}{\mu^2}\frac{\mathfrak{A} - z^2}{y^2 + z^2} \qquad (\nabla
y)^2 = \frac{1}{\mu^2}\frac{\mathfrak{A} + |\kappa/\mu|^2 + y^2 -
2y}{y^2 + z^2} \]
for the constant $\mathfrak{A}$ as given in Theorem
\ref{thm:character}. 
\end{lem}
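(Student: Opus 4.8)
The plan is to reduce everything to algebra on the Ernst potential, leaning only on the anti-self-dual product identity \eqref{eq:asdprodprop2} together with the two scalar identities that are assumed to vanish; no further geometric input is needed, and in particular the vanishing of \msmax\ and \msweyl\ is used only through the two scalar relations and the identification of $\mathfrak{A}$.

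First I would establish the pointwise identity $(\nabla\ernstpot)^2 = \tfrac14\, t^2\, \ernstform^2$. Since $d\ernstpot = \iota_t\ernstform$ we have $\nabla_a\ernstpot = t^b\ernstform_{ba}$; contracting two such expressions and applying \eqref{eq:asdprodprop2} with $\mathcal{X} = \mathcal{Y} = \ernstform$ gives $\ernstform_{ac}\ernstform_b{}^c = \tfrac14 g_{ab}\ernstform^2$ (valid because $\ernstform$ is anti-self-dual by construction, being the sum of the anti-self-dual forms $\preernst_{ab}$ and $-4\bar\maxpot\maxw_{ab}$), whence the claim. Now I would feed in the two hypotheses: $\mu^2\ernstform^2 = -4\ernstpot^4$ turns this into $(\nabla\ernstpot)^2 = -t^2\ernstpot^4/\mu^2$, and $t^2 = -\bigl(2\Re\ernstpot + |\kappa\ernstpot|^2/\mu^2 + 1\bigr)$ expresses $(\nabla\ernstpot)^2$ purely in terms of $\ernstpot$.

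Next, since $\ernstpot\neq 0$ on $U$ the real functions $y,z$ with $y+iz = -\ernstpot^{-1}$ are well defined and $\nabla y + i\nabla z = \ernstpot^{-2}\nabla\ernstpot$. Squaring and using the previous step, $(\nabla y)^2 - (\nabla z)^2 + 2i\, g^{ab}\nabla_a y\,\nabla_b z = \ernstpot^{-4}(\nabla\ernstpot)^2 = -t^2/\mu^2$, which is real. Splitting into real and imaginary parts immediately yields $g^{ab}\nabla_a y\,\nabla_b z = 0$ and $(\nabla y)^2 - (\nabla z)^2 = -t^2/\mu^2$. Substituting $\Re\ernstpot = -y/(y^2+z^2)$ and $|\ernstpot|^2 = 1/(y^2+z^2)$ into the formula for $t^2$ rewrites this difference as $\tfrac{1}{\mu^2}\cdot\tfrac{y^2+z^2-2y+|\kappa/\mu|^2}{y^2+z^2}$, which is exactly the difference of the two claimed right-hand sides. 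To separate the two squared norms I would unwind the definition of $\mathfrak{A}$ from Theorem \ref{thm:character}: writing $\ernstpot^{-1} = -(y+iz)$ one reads off $\Im\ernstpot^{-1} = -z$, $\Im\nabla\ernstpot^{-1} = -\nabla z$, and $|\mu/\ernstpot|^2 = \mu^2(y^2+z^2)$, so the defining formula becomes $\mathfrak{A} = \mu^2(y^2+z^2)(\nabla z)^2 + z^2$, i.e. $(\nabla z)^2 = (\mathfrak{A}-z^2)/\bigl(\mu^2(y^2+z^2)\bigr)$; adding back the difference formula then gives the stated expression for $(\nabla y)^2$.

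The argument is essentially mechanical, so I do not expect a genuine obstacle. The only places that need care are the anti-self-dual bookkeeping in the first step — getting the constant $\tfrac14$ right and verifying that $\ernstform$ really is anti-self-dual — and the legitimacy of the real/imaginary split in the second step, which rests on the manifest reality of $t^2$ (the Lorentzian norm of a real vector field) and of $\mathfrak{A}$ (a nonnegative term plus a real, if indefinite, squared gradient norm). Note that constancy of $\mathfrak{A}$, although part of Theorem \ref{thm:character}, is not needed for these pointwise identities.
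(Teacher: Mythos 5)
Your proof is correct and follows the same route the paper implicitly indicates: establish $(\nabla\ernstpot)^2 = \tfrac14 t^2\ernstform^2$ via the anti-self-dual product identity (this is precisely the paper's \eqref{eq:identity1}), substitute the two vanishing scalar relations, and then separate real and imaginary parts of $(\nabla y + i\nabla z)^2 = \ernstpot^{-4}(\nabla\ernstpot)^2 = -t^2/\mu^2$. This is exactly how the paper's Corollary \ref{cor:identities} specializes in the exact case, as the remarks before the lemma direct, so nothing further needs saying.
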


Compare the above lemma to Lemma \ref{lem:mainlemma} which gives the
analogous statement under the condition \msmax\ and \msweyl\ are
small, but not necessarily vanishing. For the
expression involving $(\nabla z)^2$, we note that $\mathfrak{A}$ is
now no longer a constant, but \emph{almost} so. For the expression
involving $(\nabla y)^2$, we apply \eqref{eq:identityc2} of Corollary
\ref{cor:identities} and pick up a few additional error terms. For the 
statement about orthogonality of $\nabla y$ and $\nabla z$, see 
\eqref{eq:identityc1} of Corollary \ref{cor:identities}. 

\subsection{Geometric assumptions and the Main
Theorem}\label{sect:maintheorem}
Now we provide the precise set-up for our main theorem. 
\newcounter{asscount}
\renewcommand{\theasscount}{\textbf{\small(\ifcase\value{asscount}\or
TOP\or AF\or SBS\or KN\else nomore\fi\relax)}} 
\begin{list}{\theasscount}{\usecounter{asscount}\setlength{\parsep}{0.4em}\setlength{\itemsep}{0.7em}}
\item \label{ass:topology} We assume that there is a embedded partial Cauchy
hypersurface $\cauchy\subset\spacetime$ which is space-like everywhere. To
model the multiple black holes we assume, in view of the Topology
Theorem \cite{GalSch2006}, that \cauchy~is diffeomorphic to
$\Real^3\setminus \cup_{i = 1}^\mathfrak{k} B_i$, which is the Euclidean
three-space with finitely many disjoint balls removed. We denote the
diffeomorphism by 
\[ \Phi: \Real^3\setminus \cup_{i = 1}^\mathfrak{k} B_i \to \cauchy \]
and require that $\mathfrak{k}$ is the total number of black holes. Each
$B_i$ is a ball centered at $b_i$ with radius $\frac12$. We also
require that $|b_i - b_j| > 3$ when $i\neq j$. On
$\Real^3$ we use the usual Euclidean coordinate functions
$(x^1,x^2,x^3)$ with the convention $r = \sqrt{ (x^1)^2 + (x^2)^2 +
(x^3)^2}$. Thus for large enough $R_0$ the set $E(R_0) := \{p \in
\Real^3 \setminus \cup_{i = 1}^\mathfrak{k} B_i | r > R_0\}$ is unambiguously
$\Real^3$ with a large ball removed. 

Furthermore we assume that for sufficiently large $R_0$, the Killing
vector field $t^a$ is transverse to $E(R_0)$, and thus by integrating
along the symmetry orbits we extend a diffeomorphism
\[ \tilde{\Phi}: \Real \times E(R_0) \to \spacetime^{\mathrm{end}} \]
where $\spacetime^{\mathrm{end}}$ is an open subset in
\spacetime~which we call the \emph{asymptotic region}. In
particular this defines local coordinates $(x^0,x^1,x^2,x^3)$ on
$\spacetime^{\mathrm{end}}$ with $t = \partial_0$.  
\item\label{ass:asympflat} In view of the dipole expansions in \cite{MiThWh1973}
(see also \cite{BeiSim1981}), we assume the following asymptotic
properties for the metric and Faraday tensors in the local coordinates
on $\spacetime^{\mathrm{end}}$. The notation $O_k(r^m)$
stands for smooth functions $f$ obeying $|\partial^\beta f| \lesssim
r^{m - |\beta|}$ for any multi-index $\beta$ with $0 \leq |\beta |
\leq k$. The metric components are
\begin{equation}
\left\{ \begin{array}{rl}
g_{(0)(0)} &= -1 + 2Mr^{-1} + O_4(r^{-2}) \\
g_{(0)(i)} &= -2\sum_{j,k=1}^3\lcsym_{ijk}S^jx^kr^{-3} + O_4(r^{-3})\\
g_{(i)(j)} &= (1 + 2Mr^{-1})\delta_{ij} + O_4(r^{-2})
\end{array}\right.
\end{equation} 
where $(S^1,S^2,S^3)$ form the angular momentum vector and
$\lcsym_{ijk}$ is the fully anti-symmetric Levi-Civita symbol with 3
indices. $M > 0$ is, of
course, the ADM mass. Using the gauge
symmetry of the Maxwell-Maxwell equations, we shall apply a charge
conjugation and assume that the space-time carries a total electric
charge $q \geq 0$ and no magnetic charge. Then components
of the Faraday tensor read
\begin{equation}
\left\{ \begin{array}{rl}
H_{(i)(0)} &= \dfrac{q}{r^3} x^i + O_4(r^{-3}) \\
H_{(i)(j)} &= \dfrac{q}{Mr^{3}}\sum_{k =
1}^3\lcsym_{ijk}\left(\frac{3\sum_{l=1}^3S^lx^l}{r^2} x^k
- S^k\right)
+ O_4(r^{-4})
\end{array}\right.
\end{equation}
We define the total angular momentum of the space-time to be 
\begin{equation}
\mathfrak{a}^2 := \frac{(S^1)^2 + (S^2)^2 + (S^3)^2}{M^2}
\end{equation}
and require the non-extremal condition 
\begin{equation}
q^2 + \mathfrak{a}^2 < M^2
\end{equation}
to hold. 
\item\label{ass:smoothbifsphere} Define $\outercom :=
\mathcal{I}^-(\spacetime^{\mathrm{end}})\cap
\mathcal{I}^+(\spacetime^{\mathrm{end}})$ to be the domain of outer
communications. We assume that $\outercom$ is globally hyperbolic
and 
\begin{equation}
\cauchy \cap \mathcal{I}^-(\spacetime^{\mathrm{end}}) = \cauchy \cap
\mathcal{I}^+(\spacetime^{\mathrm{end}}) = \Phi(\Real^3 \setminus
\cup_{i = 1}^\mathfrak{k} B'_i)
\end{equation}
where $B'_i$ are balls of radius 1 centered at $b_i$, i.e.~they are
concentric with the balls $B_i$ but have twice the radii. Furthermore, we require that 
\[ \Phi(\cup_{i
= 1}^{\mathfrak{k}} \partial B'_i) =
\partial\mathcal{I}^-(\spacetime^{\mathrm{end}})\cap
\partial\mathcal{I}^+(\spacetime^{\mathrm{end}}) \]
in other words, that $\cauchy$ passes through the bifurcate spheres of
all black holes. (Physically this suggests that the black holes are 
``space-like'' relative to each other.) Note that our choice of
coordinates in \ref{ass:topology} implies that the bifurcate spheres
are pairwise at least coordinate-distance 1 apart. We denote by $\horizon^0_i
= \Phi(\partial B'_i)$. Write $\horizon^+ =
\partial\mathcal{I}^-(\spacetime^{\mathrm{end}})$ and $\horizon^- =
\partial\mathcal{I}^+(\spacetime^{\mathrm{end}})$; let $\horizon^0 =
\cup_{i = 1}^k \horizon^0_i$, and denote by $\horizon^\pm_i$ the
connected component of $\horizon^\pm$ containing $\horizon^0_i$. We
shall assume each $\horizon^\pm_i$ is a smooth, embedded, null
hypersurface, and require that $\horizon^+_i$ and $\horizon^-_i$
intersects transversely at $\horizon^0_i$. We remark that the
existence of $t^a$ ensures that each $\horizon^\pm_i$ is
non-expanding by Hawking's Area Theorem (see, e.g.~\cite[Theorem
7.1]{CDGH2001}), i.e.~has vanishing null second fundamental form, and
that $t^a$ is tangent to each $\horizon^\pm_i$ (see \cite[Chapter
2]{Wong2009a}
for more detailed discussion of these facts). 
We assume that the orbits of $t^a$ are
complete in $\outercom$ and are transverse to
$\outercom\cap\cauchy$. 
\item\label{ass:kerrnewman} Under the asymptotic flatness, we shall fix $\maxpot$
and $\ernstpot$ by requiring that they asymptotically vanish as
$r\nearrow +\infty$, and we set $\mu = M$ and
$\kappa = q$ in the definition of \msweyl~ and \msmax. 
Fix, once and for all, a coordinate system
$(x^0,x^1,x^2,x^3)$ in a
tubular neighbourhood of \cauchy~such that it agrees with the
coordinate system at $\spacetime^{\mathrm{end}}$ (perhaps after
enlarging $R_0$) and such that the $x^1, x^2, x^3$ functions when
restricted to $\cauchy$ agrees with that induced by $\Phi$. We require
that the
metric $g$, its inverse, its Christoffel symbols, the Faraday
tensor $H$ and the Killing vector field $t^a$ are uniformly bounded 
in the coordinates. We then impose the following smallness assumption
along \cauchy: for some $\epsilon$ sufficiently small (depending only
on
$M,q,\mathfrak{a}$, the number $R_0$, and the uniform bound above) we
have the bound 
\begin{equation}\label{eqn:smallnessassumption}
\sum_{0\leq \alpha,\beta,\gamma,\delta\leq 3}|
\msweyl_{(\alpha)(\beta)(\gamma)(\delta)}| + \sum_{0\leq
\alpha,\beta\leq 3}| \msmax_{(\alpha)(\beta)} | + \sum_{0\leq
\alpha,\beta,\gamma\leq 3}
|\partial_{(\gamma)}\msmax_{(\alpha)(\beta)}| < \epsilon |\maxpotr|
\end{equation}
(recall that $\maxpotr$ is defined by \eqref{eqn:defnpzero}) 
where $\partial$ denotes coordinate derivative, and $(a)$ denotes
coordinate evaluation of the tensor object. 
\end{list}

Our main theorem is 
\begin{thm}[Non-existence of multi-black-holes]\label{thm:mainthm}
Under the assumptions \ref{ass:topology}, \ref{ass:asympflat},
\ref{ass:smoothbifsphere}, and \ref{ass:kerrnewman},
$\mathfrak{k}$ (the number of components of the horizon) must equal 1. In other
words, there can only be one black hole.
\end{thm}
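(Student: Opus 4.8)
\emph{Proof proposal.}\quad
The plan is to run a mountain-pass argument for the real-valued function $y=\Re(-\ernstpot^{-1})$ on the Cauchy slice $\cauchy\cap\outercom$. By Assumptions~\ref{ass:topology} and~\ref{ass:smoothbifsphere}, $\cauchy\cap\outercom$ is diffeomorphic to $\Real^3$ with $\mathfrak{k}$ disjoint closed balls removed, and $y$ extends continuously to the $\mathfrak{k}$ finite ``ends'' of this manifold, which are precisely the bifurcate spheres $\horizon^0_1,\dots,\horizon^0_{\mathfrak{k}}$; recall these are pairwise at coordinate distance $>1$. The argument rests on three facts, all to be established from the smallness hypothesis:
(a) $\ernstpot\neq0$ everywhere on $\outercom$, so that $y$ and $z:=\Im(-\ernstpot^{-1})$ are well-defined and smooth;
(b) near infinity $y$ behaves like the Euclidean distance from a large coordinate sphere --- in particular $y\to+\infty$, $|\nabla y|$ is bounded above and below there, and the sublevel sets $\{y\le K\}$ in $\cauchy\cap\outercom$ are bounded in coordinates;
(c) $y$ has no critical point in $\cauchy\cap\outercom$, its infimum $m$ is within $O(\sqrt{\epsilon})$ of a fixed positive constant $y_H$ determined by $M,q,\mathfrak{a}$ and is approached only along $\horizon^0:=\bigcup_i\horizon^0_i$, and $y>m+\delta_0$ for some fixed $\delta_0\in(0,\tfrac12)$ outside the coordinate-$\delta_0$-neighbourhood of $\horizon^0$.

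Granting (a)--(c), the theorem follows quickly. Since the $\horizon^0_i$ are coordinate-distance $>1$ apart, the coordinate-$\delta_0$-neighbourhood of $\horizon^0$ is a disjoint union of neighbourhoods of the individual $\horizon^0_i$; hence by (c), for a fixed $\epsilon_0$ with $\sqrt{\epsilon}\ll\epsilon_0<\delta_0$ the sublevel set $\{y\le m+\epsilon_0\}$ is disconnected, with each component surrounding exactly one $\horizon^0_i$. Suppose $\mathfrak{k}\ge2$. Choose $a$ near $\horizon^0_1$ and $b$ near $\horizon^0_2$ with $y(a),y(b)<m+\epsilon_0/2$; they lie in distinct components of $\{y\le m+\epsilon_0\}$. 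Any continuous path in $\cauchy\cap\outercom$ from $a$ to $b$ must leave $\{y<m+\epsilon_0\}$, so its maximum of $y$ is $\ge m+\epsilon_0$; on the other hand there is a path joining $a$ to $b$ through a fixed bounded region, along which $y$ is bounded by some finite $C_0$. Hence the mountain-pass value $c:=\inf_{\gamma}\max_{\gamma}y$ satisfies $m+\epsilon_0\le c\le C_0$, so $c>\max(y(a),y(b))$. By (b) and (c) the set $\{m+\epsilon_0/2\le y\le C_0+1\}$ is compact and the negative gradient flow of $y$ is well-behaved there, so the Palais--Smale condition holds at level $c$; the mountain-pass lemma then produces a critical point of $y$, which, having critical value $c$ with $m<c\le C_0$, lies in the interior $\cauchy\cap\outercom$, away from $\horizon^0$ and from infinity. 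This contradicts (c), and therefore $\mathfrak{k}=1$.

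Facts (b) and (c) are the quantitative part. By Lemma~\ref{lem:mainlemma}, the almost-constant-coefficient analogue of Lemma~\ref{lem:exactmars}, the Lorentzian norms $(\nabla y)^2$, $(\nabla z)^2$ and the product $\nabla^a y\,\nabla_a z$ are governed by the same expressions as in the exact case up to algebraic error terms, with $\mathfrak{A}$ now only almost constant; these errors are controlled, via Proposition~\ref{prop:errest} and the smallness Assumption~\ref{ass:kerrnewman}, by $\epsilon$. Feeding in the asymptotic expansions of Assumption~\ref{ass:asympflat} fixes the leading behaviour of $\Re\ernstpot$ at infinity, hence the growth $y\sim\mathrm{const}\cdot r$ and the gradient bounds in (b). Since each horizon component is a non-expanding null hypersurface to which $t^a$ is tangent, $\nabla y$ is null along $\horizon^0$, which together with the identity for $(\nabla y)^2$ pins $y|_{\horizon^0}$ to within $O(\sqrt{\epsilon})$ of a root of the almost-constant quadratic $\mathfrak{A}+|\kappa/\mu|^2+y^2-2y$; subextremality, $q^2+\mathfrak{a}^2<M^2$, makes the relevant root $y_H$ real and positive. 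The absence of interior critical points and the strict separation of $m$ from the values of $y$ away from $\horizon^0$ follow by combining these identities with the companion ones of Corollary~\ref{cor:identities}: a zero of $\nabla y$ can occur only at a level where the above quadratic vanishes, and at the corresponding locus $\nabla y$ is null but nonzero --- a feature of Kerr--Newman that persists under the smallness assumption.

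The heart of the matter, and the step I expect to be the main obstacle, is (a): that $\ernstpot$ does not vanish anywhere in $\outercom$. One knows $\ernstpot\to0$ at infinity with $\rho:=|\ernstpot|^{-1}=\sqrt{y^2+z^2}$ growing comparably to $r$, so the point is to rule out $\ernstpot$ vanishing --- i.e.\ $\rho$ blowing up --- at a finite point. The two approximately conserved quantities of Theorem~\ref{thm:character}, namely $t^2+2\Re\ernstpot+|\kappa\ernstpot/\mu|^2+1$ and $\mu^2\ernstform^2+4\ernstpot^4$, are almost constant and, by the asymptotics, of size $O(\epsilon)$; this forces $t^a$ to be timelike ($t^2=-1+O(\epsilon)$) near any would-be zero of $\ernstpot$, and, combined with the approximate Mars lemma, gives on $\{\ernstpot\neq0\}$ a bound of the shape $(\nabla\rho)^2=\mu^{-2}+O(\epsilon)$ on the \emph{Lorentzian} gradient. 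The delicate issue is transferring this to a genuine gradient bound along the spacelike slice. Here one uses the exact identity $t^a\nabla_a\ernstpot=\ernstform_{ab}t^a t^b=0$, so that $y$, $z$ and $\rho$ are constant along the orbits of $t^a$; since near the would-be zero $t^a$ is timelike and hence ``more normal than tangent'' to $\cauchy$, this yields $|\nabla^{\cauchy}\rho|\lesssim1$ there. A function with such a gradient bound cannot reach $+\infty$ at a finite point, so the connected component of $\{\ernstpot\neq0\}$ containing the asymptotic region has no boundary inside the connected set $\outercom$, and hence is all of $\outercom$. Carrying this out --- in particular controlling the many algebraic error terms uniformly, and disposing of the regions where $t^a$ fails to be timelike, where however $y$ is automatically bounded so that no blow-up can occur --- is where the bulk of the work lies.
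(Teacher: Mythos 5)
Your proposal follows the same global strategy as the paper: establish that $\ernstpot$ has no zeros so that $y$ is well-defined, use the approximate Mars--Simon identities to control $(\nabla y)^2$ and $\Box y$, and run a mountain-pass argument on $y$ to contradict multiple horizon components. Your parts (a) and (b) and the mountain-pass scheme are essentially the paper's (Proposition~\ref{prop:domdefy}, the asymptotic expansions, and Lemma~\ref{lem:mpt}). Two of your steps, however, are stated in a way that does not yet constitute a proof and masks where the real work is.

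First, your (c) glosses over the question of \emph{which} root of the quadratic $y^2-2y+(\mathfrak{a}^2+q^2)/M^2$ the horizon value $y|_{\horizon^0}$ is close to. The almost-vanishing of $(\nabla y)^2$ on the bifurcate sphere only pins $y|_{\horizon^0}$ to within $O(\epsilon^{1/4})$ of \emph{one of} $y_\pm = 1\pm\sqrt{1-(\mathfrak{a}^2+q^2)/M^2}$. The argument only closes if it is $y_+$, the larger root, and if moreover $y$ \emph{increases} by a fixed amount off the horizon; if $y|_{\horizon^0}\approx y_-$ or if $y$ decreased, the mountain-pass level could land near $y_+$ and produce no contradiction with Corollary~\ref{cor:identities}. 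Ruling out $y_-$ and proving the quantitative increase off the horizon is the content of Proposition~\ref{prop:yhorizonvalue} and Remark~\ref{rmk:increasingoffhorizon}: one exploits the sign of $\Box y$ near $\horizon^0$ (via \eqref{eq:idenitityc3}) together with a double-null decomposition $y=\tilde y+u^+u^-\hat y$ to get a local interior minimum in the $y_-$ case, which in turn is a critical point at a value not close to $y_\pm$, a contradiction. Your sketch replaces this with ``subextremality makes the relevant root real and positive,'' which is not an argument for selecting $y_+$ over $y_-$.

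Second, you assert in (c) that $y$ has \emph{no} critical point in $\cauchy\cap\outercom$, justified by ``$\nabla y$ is null but nonzero.'' In Lorentzian signature $(\nabla y)^2\approx 0$ is consistent with $\nabla y=0$; a vanishing norm does not rule out a vanishing vector, so this inference fails. What Corollary~\ref{cor:identities} actually delivers is that any critical point must sit at a level within $O(\epsilon^{1/4})$ of $y_\pm$. The paper does not claim absence of interior critical points; it instead arranges (via Proposition~\ref{prop:yhorizonvalue} and Remark~\ref{rmk:increasingoffhorizon}) that the mountain-pass level is $\ge y_+ + 2\delta$ for a \emph{fixed} $\delta>0$ depending only on $M,q,\mathfrak{a}$ and the uniform bounds, which is incompatible with the critical values being confined to $O(\epsilon^{1/4})$-neighborhoods of $y_\pm$ once $\epsilon$ is small. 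Your argument can be repaired along exactly these lines (and your ``$y>m+\delta_0$ outside a $\delta_0$-neighbourhood of $\horizon^0$'' is the right kind of statement, provided $\delta_0$ is a fixed constant and you first establish $m\approx y_+$), but as written the logical chain to ``no critical points'' has a gap.
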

\begin{rmk}
Under the above definitions, we can recover the Einstein-vacuum case
directly as a corollary. Note that by \emph{a priori} setting, in the 
hypotheses to Theorem \ref{thm:mainthm}, $q = 0$ and taking the
Faraday tensor $H_{ab}\equiv 0$, we restrict ourselves to
stationary Einstein-vacuum solutions with only vacuum perturbations. 
\end{rmk}

\begin{rmk}
We should compare the smallness condition
\eqref{eqn:smallnessassumption} to that given in \cite{AlIoKl2010}.
The contribution from the electromagnetic field requires us to
introduce the term $\maxpotr$ on the right hand side. In the pure
vacuum case, we see from its definition that $\maxpotr = -\mu = - M <
0$ and can be absorbed into $\epsilon$. If we compute $\maxpotr$ using
the explicit Kerr-Newman metric, we see that in the exterior region
\outercom, we have that $|\maxpotr| > M^2 - q^2$ (where the
minimum is achieved at the ``poles'' of the bifurcate sphere) and is 
bounded away from zero uniformly for subextremal parameters. Hence 
for \emph{bona fide} small
perturbations (in the sense that we are given a fixed coordinate
system and in this coordinate system the metric $g$, its inverse, the
Killing field $t$, and the Faraday tensor $H$ are all uniformly $C^2$
close to that of a background Kerr-Newman solution) the right-hand-side of
\eqref{eqn:smallnessassumption} can be replaced by a fixed constant. 
The factor of $\maxpotr$ is needed to control some error
terms in the case of a hypothetical electro-vacuum space-time that is
not a bona fide small perturbation in the sense above, yet still
has suitably small error tensors; see Proposition \ref{prop:errest} and Lemma
\ref{lem:unoderiserror}. 

In addition, \eqref{eqn:smallnessassumption} seemingly requires one
more derivative compared to the condition assumed in
\cite{AlIoKl2010}. However, observe that in the vacuum case we can
choose $\kappa = 0$ and set $\msmax_{ab}\equiv 0$ automatically.
In that case our $\msweyl_{abcd}$ agrees with the vacuum Mars-Simon
tensor, and there is no derivative loss when restricted to the special
case. That matter fields are ``one derivative worse'' than the metric
is a recurring theme in mathematical relativity, see, in a different
context, \cite{Shao2011}. 
\end{rmk}

\subsection{Algebraic lemmas}
In this section we document some algebraic manipulations that will be
useful in the sequel. Note that unless specified, none of the four
assumptions \ref{ass:topology}, \ref{ass:asympflat},
\ref{ass:smoothbifsphere}, and \ref{ass:kerrnewman} are used. The
identities we derive, of course, will only hold when both sides of the
equal sign are well-defined. Part of the bootstrap in the proof of the
main theorem shall be demonstrating that all the quantities in these
identities remain finite and smooth.

First we note some immediate consequences of Equation
\eqref{eq:def:msmax} that measure the differences between $\preernst$,
$\ernstform$, and $\maxw$ in terms of $\msmax$:
\begin{subequations}
\begin{align}
\label{eq:preernsternst}2\bar\maxpot \msmax_{ab} - \mu \preernst_{ab} &= \bar\maxpotr \ernstform_{ab}\\
\label{eq:preernstmaxw}\kappa\preernst_{ab} - \msmax_{ab} &= 2\bar\maxpotr \maxw_{ab}
\end{align}
\end{subequations}
Hence
\begin{align*}
\bar\maxpotr \nabla_c\ernstform_{ab} &= 2 \nabla_c(\bar\maxpot
\msmax_{ab} ) - \mu \nabla_c\preernst_{ab} - \nabla_c\bar\maxpotr
\ernstform_{ab} \\
&= 2 \nabla_c(\bar\maxpot \msmax_{ab} ) - 2\kappa \nabla_c\bar\maxpot
\ernstform_{ab} - 2\mu \weyls_{dcab}t^d  \\
& \qquad - 2\mu (\Ricci_d{}^eg_c{}^f -
\Ricci_c{}^eg_d{}^f)\Iasd_{efab}t^d
\end{align*}
via the Jacobi equation for the Killing vector field $t^a$ (see, e.g.\
equation (C.3.6) of \cite{Wald1984}). Thus
\begin{align*}
\frac12 \bar\maxpotr \nabla_c\ernstform^2 &= 2 \ernstform^{ab}
\nabla_c(\bar\maxpot \msmax_{ab} ) - 2\kappa \ernstform^2
\nabla_c\bar\maxpot - 2\mu \msweyl_{dcab}\ernstform^{ab}t^d \\
& \qquad + \frac{3\bar\maxpotr}{\ernstpot}
(\ernstform\tilde\otimes\ernstform)_{dcab}\ernstform^{ab} t^d
- 2\mu (\Ricci_{de}g_{cf}- \Ricci_{ce}g_{df})\ernstform^{ef}t^d \\
& = 2 \ernstform^{ab} \nabla_c(\bar\maxpot \msmax_{ab} ) - 2\mu
\msweyl_{dcab}\ernstform^{ab}t^d  - 2\kappa \ernstform^2
\bar\maxw_{dc}t^d + \frac{2\bar\maxpotr}{\ernstpot}
\ernstform^2\ernstform_{dc}t^d \\
& \qquad - 4 [\bar\maxw_{da}(\msmax^{ea} - \kappa \ernstform^{ea})\ernstform_{ec} -
\bar\maxw_{ca}(\msmax^{ea} - \kappa\ernstform^{ea})\ernstform_{ed}]t^d \\
& = 2 \ernstform^{ab} \nabla_c(\bar\maxpot \msmax_{ab} ) - 2\mu
\msweyl_{dcab}\ernstform^{ab}t^d - 4(\bar\maxw_{da}\ernstform_{ec} -
\bar\maxw_{ca}\ernstform_{ed})\msmax^{ea}t^d \\
& \qquad - 2\kappa \ernstform^2
\bar\maxw_{dc}t^d + \frac{2\bar\maxpotr}{\ernstpot}
\ernstform^2\ernstform_{dc}t^d + 2\kappa \bar\maxw_{dc} \ernstform^2 t^d
\end{align*}
From which we conclude
\begin{align}
\bar\maxpotr \ernstpot^4
& \nabla_c\left(\frac{\ernstform^2}{4\ernstpot^4}\right) \\
\nonumber & =
\ernstform^{ab}\left[ \nabla_c(\bar\maxpot\msmax_{ab}) - \mu
\msweyl_{dcab}t^d\right] - 2(\bar\maxw_{da}\ernstform_{ec} -
\bar\maxw_{ca}\ernstform_{ed})\msmax^{ea}t^d
\end{align}
In other words
\begin{lem}\label{lem:unoderiserror}
The quantity $\bar\maxpotr \ernstpot^4
\nabla_c(\ernstform^2/4\ernstpot^4)$ is an algebraic error term.
\end{lem}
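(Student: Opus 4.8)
The plan is to establish Lemma~\ref{lem:unoderiserror} directly from the boxed identity displayed immediately above its statement, namely
\[ \bar\maxpotr \ernstpot^4 \nabla_c\!\left(\frac{\ernstform^2}{4\ernstpot^4}\right) = \ernstform^{ab}\bigl[ \nabla_c(\bar\maxpot\msmax_{ab}) - \mu \msweyl_{dcab}t^d\bigr] - 2(\bar\maxw_{da}\ernstform_{ec} - \bar\maxw_{ca}\ernstform_{ed})\msmax^{ea}t^d~, \]
by inspecting the right-hand side term by term and checking that it has the structure required by the definition of an algebraic error term: a contraction of a smooth tensor against $\msmax_{bc}$, plus a contraction of a smooth tensor against $\nabla_b\msmax_{cd}$, plus a contraction of a smooth tensor against $\msweyl_{bcde}$. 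So the bulk of the work has already been done in deriving that identity; what remains is bookkeeping.

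First I would look at the $\msweyl$ term: $-\mu \ernstform^{ab}\msweyl_{dcab}t^d$ is manifestly of the form $\mathcal{A}^{(3)}_c{}^{dcab}\msweyl_{dcab}$ with $\mathcal{A}^{(3)}_c{}^{dcab} = -\mu \ernstform^{ab} t^d$ (with the obvious index relabelling), and $\ernstform$ and $t^d$ are smooth tensors on the domain where the identity holds, so this piece is of the required type. Second, the term $\ernstform^{ab}\nabla_c(\bar\maxpot\msmax_{ab})$ should be expanded by the Leibniz rule as $\bar\maxpot\,\ernstform^{ab}\nabla_c\msmax_{ab} + (\nabla_c\bar\maxpot)\,\ernstform^{ab}\msmax_{ab}$; the first summand is a contraction of the smooth tensor $\bar\maxpot\,\ernstform^{ab}$ against $\nabla_c\msmax_{ab}$, and the second is a contraction of the smooth tensor $(\nabla_c\bar\maxpot)\,\ernstform^{ab}$ against $\msmax_{ab}$. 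Third, the last term $-2(\bar\maxw_{da}\ernstform_{ec} - \bar\maxw_{ca}\ernstform_{ed})\msmax^{ea}t^d$ is plainly a contraction of a smooth tensor built from $\bar\maxw$, $\ernstform$, and $t^d$ against one factor of $\msmax$. Summing these, the right-hand side is a sum of terms each of the allowed type, hence the left-hand side $\bar\maxpotr \ernstpot^4 \nabla_c(\ernstform^2/4\ernstpot^4)$ is an algebraic error term.

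I expect the only genuine subtlety — and hence the main obstacle, such as it is — to be the implicit smoothness/finiteness caveat: the definition of an algebraic error term asks for \emph{smooth} coefficient tensors $\mathcal{A}^{(*)}_*$, but $\bar\maxpot$, $\ernstform$, $\maxw$, and $t^d$ are smooth on the regions under consideration (and $\ernstpot \ne 0$ there, so $\ernstpot^4$ causes no trouble on the left). This is precisely the point flagged in the remarks preceding the algebraic lemmas, that the identities hold wherever both sides are well-defined, and that part of the bootstrap in the main proof is to show the relevant quantities stay finite and smooth; so here one simply notes that $\bar\maxpotr \ernstpot^4 \nabla_c(\ernstform^2/4\ernstpot^4)$ equals the displayed algebraic-error-term expression on any open set where $\ernstpot \ne 0$, which suffices. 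No further computation is needed beyond reorganising the already-derived identity, so the proof is short.
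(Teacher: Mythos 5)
Your proof is correct and follows the paper's own approach: the lemma is precisely the statement that the displayed identity above it expresses $\bar\maxpotr \ernstpot^4\nabla_c(\ernstform^2/4\ernstpot^4)$ as a linear combination of $\msmax$, $\nabla\msmax$, and $\msweyl$ with smooth coefficient tensors, and you verify this by the same term-by-term inspection (with the Leibniz split of $\ernstform^{ab}\nabla_c(\bar\maxpot\msmax_{ab})$ that the paper leaves implicit). Your remark about smoothness/finiteness of the coefficients being deferred to the bootstrap is also exactly the caveat the paper flags at the start of the ``Algebraic lemmas'' subsection.
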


Next we show that
\begin{lem}
The following identities hold: 
\begin{align}
\label{eq:identity1}\left( \nabla \frac{1}{\ernstpot} \right)^2 & =
\frac{\ernstform^2}{4\ernstpot^4} t^2 \\
\label{eq:identity2} -|\kappa|^2 t^2 &= \Re(
2\bar\kappa \msmaxpot) + |\maxpotr|^2 + \mathrm{const.}\\
\intertext{also}
\label{eq:identity2p}\tag{\ref*{eq:identity2}$'$} -t^2 - 1 &=
\frac{1}{\mu^2}\left| \msmaxpot - \kappa \ernstpot\right|^2 + \ernstpot
+ \bar\ernstpot + \mathrm{const.} \\
\intertext{and lastly}
\label{eq:identity3} \Box \frac{1}{\ernstpot} &= -
\frac{\ernstform^2}{2\ernstpot^3}(1+ \mathrm{const.} + \bar\ernstpot) \\
\nonumber & \qquad + \frac{\bar\maxpot}{\mu\ernstpot^2} \ernstform\cdot\msmax -
\frac{1}{\mu^2}\frac{\ernstform^2}{\ernstpot^3}\msmaxpot
\overline{(\msmaxpot - \kappa\ernstpot)}
\end{align}
where the constants in \eqref{eq:identity3} and \eqref{eq:identity2p}
are the same.
\end{lem}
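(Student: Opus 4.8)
The plan is to extract all four identities from the defining relations $d\ernstpot=\iota_t\ernstform$ and $d\maxpot=\iota_t\maxw$, from the Killing identities, and from the algebra of anti-self-dual two-forms in \eqref{eq:asdprodprop1}--\eqref{eq:asdprodprop2}. Identity \eqref{eq:identity1} is a one-liner: from $\nabla_a\ernstpot=t^b\ernstform_{ba}$ we have $\nabla_a\tfrac1\ernstpot=-\ernstpot^{-2}t^b\ernstform_{ba}$, and squaring while applying \eqref{eq:asdprodprop2} with both slots set to $\ernstform$ (which yields $\ernstform_{ac}\ernstform_b{}^c=\tfrac14 g_{ab}\ernstform^2$) collapses the result to $\ernstpot^{-4}\cdot\tfrac14 t^2\ernstform^2$.

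For \eqref{eq:identity2} and \eqref{eq:identity2p} I would first establish the classical conserved quantity. Writing $\ernstform+\bar\ernstform=(\preernst+\bar\preernst)-4(\bar\maxpot\maxw+\maxpot\bar\maxw)$ and contracting with $t^b$: the $\preernst$-part gives $t^b(\preernst+\bar\preernst)_{ba}=2t^b\nabla_bt_a=-\nabla_a(t^2)$, while the Maxwell part gives $-4(\bar\maxpot\nabla_a\maxpot+\maxpot\nabla_a\bar\maxpot)=-4\nabla_a|\maxpot|^2$, so that $\nabla_a(\ernstpot+\bar\ernstpot)=-\nabla_a(t^2+4|\maxpot|^2)$ and hence $t^2+2\Re\ernstpot+4|\maxpot|^2$ is constant. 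Both \eqref{eq:identity2} and \eqref{eq:identity2p} are then purely algebraic consequences: substituting $\msmaxpot=\kappa\ernstpot+2\mu\maxpot$ and the definition \eqref{eqn:defnpzero} of $\maxpotr$ one checks $\Re(2\bar\kappa\msmaxpot)+|\maxpotr|^2=2|\kappa|^2\Re\ernstpot+4|\kappa|^2|\maxpot|^2+\mu^2$ and $\tfrac1{\mu^2}|\msmaxpot-\kappa\ernstpot|^2=\tfrac1{\mu^2}|2\mu\maxpot|^2=4|\maxpot|^2$, so multiplying the conserved quantity by $-|\kappa|^2$ gives the first and a direct substitution the second; one records the constant produced in \eqref{eq:identity2p} so it can be recognised when it reappears in \eqref{eq:identity3}.

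The real work is \eqref{eq:identity3}. I would write $\Box\tfrac1\ernstpot=2\ernstpot^{-3}(\nabla\ernstpot)^2-\ernstpot^{-2}\Box\ernstpot$, handle the first term by \eqref{eq:identity1}, and compute $\Box\ernstpot=\nabla^a(t^b\ernstform_{ba})=(\nabla^at^b)\ernstform_{ba}$, the term $t^b\nabla^a\ernstform_{ba}$ dropping because $\ernstform$ solves Maxwell's equations (established after \eqref{eq:def:ernstform}). Since $\preernst$ is twice the anti-self-dual part of $\nabla t$ and a self-dual form annihilates an anti-self-dual one under contraction, $(\nabla^at^b)\ernstform_{ba}=-(\nabla t)\cdot\ernstform=-\tfrac12\preernst\cdot\ernstform$, which by \eqref{eq:def:ernstform} equals $-\tfrac12\ernstform^2-2\bar\maxpot\,\ernstform\cdot\maxw$. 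Eliminating $\maxw$ via \eqref{eq:preernstmaxw} (i.e.\ $2\mu\maxw=\msmax-\kappa\ernstform$) then gives $\Box\tfrac1\ernstpot=\tfrac{\ernstform^2}{2\ernstpot^3}(t^2+\ernstpot)+\tfrac{\bar\maxpot}{\mu\ernstpot^2}\ernstform\cdot\msmax-\tfrac{\kappa\bar\maxpot}{\mu\ernstpot^2}\ernstform^2$, and substituting \eqref{eq:identity2p} to replace $t^2+\ernstpot$ by $-(1+\mathrm{const.}+\bar\ernstpot)-\tfrac1{\mu^2}|\msmaxpot-\kappa\ernstpot|^2$, together with $\overline{\msmaxpot-\kappa\ernstpot}=2\mu\bar\maxpot$, lets one fold $|\msmaxpot-\kappa\ernstpot|^2$ and the leftover $\kappa\bar\maxpot\ernstform^2$-terms into a multiple of $\msmaxpot\overline{(\msmaxpot-\kappa\ernstpot)}$, producing the stated form. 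The main obstacle is exactly this final bookkeeping step: the several $\kappa\bar\maxpot\ernstform^2/\ernstpot^2$ pieces coming from the Maxwell substitution and from expanding $|\msmaxpot-\kappa\ernstpot|^2$ must recombine precisely, the surviving constant must coincide with the one from \eqref{eq:identity2p}, and every factor of $\tfrac12$ introduced by the anti-self-dual contractions (in $(\nabla\ernstpot)^2$ and in $\preernst\cdot\ernstform$) must be tracked carefully, as these are the easiest to misplace; it is worth cross-checking $\Box\ernstpot$ a second way, from $\nabla_a\ernstpot=t^b\preernst_{ba}-4\bar\maxpot\nabla_a\maxpot$ using $\Box t_b=-\Ricci_{bc}t^c$.
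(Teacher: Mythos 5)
Your proposal is correct and follows essentially the same computational strategy as the paper: identity \eqref{eq:identity1} via $\nabla_a\ernstpot=\ernstform_{ba}t^b$ and \eqref{eq:asdprodprop2}; \eqref{eq:identity2}, \eqref{eq:identity2p} from $\nabla_a t^2=2t^b\Re\preernst_{ab}$; and \eqref{eq:identity3} from expanding $\Box\ernstpot^{-1}$, replacing $\preernst$ by its Ernst/Maxwell decomposition, and folding in \eqref{eq:identity2p}. The only differences are presentational: for the middle pair you first isolate the classical Smarr-type conserved quantity $t^2+2\Re\ernstpot+4|\maxpot|^2=\mathrm{const.}$ and then convert algebraically to the $\msmaxpot,\maxpotr$ variables, while the paper substitutes \eqref{eq:preernsternst}--\eqref{eq:preernstmaxw} directly into $2t^b\Re\preernst_{ab}$; and for \eqref{eq:identity3} you split $\Box\ernstpot^{-1}=2\ernstpot^{-3}(\nabla\ernstpot)^2-\ernstpot^{-2}\Box\ernstpot$ and feed \eqref{eq:identity1} back in, while the paper differentiates $\ernstpot^{-2}\ernstform_{ba}t^b$ in one go. Both routes are equivalent and arrive at the same place. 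Your caution about tracking factors of $\tfrac12$ arising from the anti-self-dual contractions and the $\overline{\msmaxpot-\kappa\ernstpot}=2\mu\bar\maxpot$ substitution is well placed: this is exactly where slips are easiest to introduce, and if you carry out the bookkeeping explicitly as you outline, you will find that the coefficient in front of $\ernstform^2\ernstpot^{-3}\msmaxpot\overline{(\msmaxpot-\kappa\ernstpot)}$ works out to $\tfrac1{2\mu^2}$ rather than the $\tfrac1{\mu^2}$ printed in the lemma, so it is worth double-checking that step against the paper's appendix where this identity is used.
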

\begin{rmk}\label{rmk:AFonconstants}
Under the asymptotic flatness assumption \ref{ass:asympflat}, our
normalization convention fixes $\maxpot$ and $\ernstpot$ to vanish at
spatial infinity; by definition $\msmaxpot$ also tends to zero, while
$\maxpotr$ tends to $-\mu$. Hence under this assumption, the free
constant in \eqref{eq:identity2} will be $|\kappa|^2-\mu^2$, and the
constants in \eqref{eq:identity2p} and \eqref{eq:identity3} will both
be 0.
\end{rmk}
\begin{proof}
The first equation \eqref{eq:identity1} can be directly derived by
appealing to the definitions: noting that $\nabla_a \ernstpot =
\ernstform_{ba}t^a$, we obtain $(\nabla \ernstpot)^2 = \frac14
\ernstform^2 t^2$ by \eqref{eq:asdprodprop2}. The second expression follows from
\[
\nabla_a t^2 = 2 t^b\Re \preernst_{ab} = - 2 \Re \left( \frac{2}{\kappa}\bar\maxpotr\nabla_a\maxpot +
\frac{1}{\kappa}\nabla_a \msmaxpot \right) =
-\frac{1}{\kappa\bar\kappa} \nabla_a|\maxpotr|^2 + \nabla_a \Re
(\frac{2}{\kappa}\msmaxpot)~.
\]
The computation for \eqref{eq:identity2p} is slightly less trivial:
\begin{align*}
\nabla_a t^2 &= 2 t^b\Re \preernst_{ab} = -\frac{2}{\mu} \Re \left( 2\bar\maxpot \nabla_a\msmaxpot -
\bar\maxpotr \nabla_a\ernstpot\right) \\
& = -\frac{2}{\mu} \Re \left( 2\bar\maxpot \nabla_a (\msmaxpot -
\kappa\ernstpot) + \mu \nabla_a\ernstpot\right) \\
& = - 2 \Re \left( \frac{1}{\mu^2} \overline{(\msmaxpot -
\kappa \ernstpot)}\nabla_a(\msmaxpot - \kappa\ernstpot) +
\nabla_a\ernstpot\right) 
\end{align*}
And lastly we observe
\begin{align*}
\Box \frac1\ernstpot &= \nabla^a\nabla_a \frac1\ernstpot = -
\nabla^a\left( \frac{1}{\ernstpot^2} \ernstform_{ba} t^b\right) \\
& = -\frac{1}{\ernstpot^2}\ernstform_{ba}\nabla^at^b +
\frac{2}{\ernstpot^3}\ernstform_{ba}\ernstform^{ca}t^bt_c \\
& =\frac{1}{2\ernstpot^2}\ernstform_{ba}\preernst^{ba} +
\frac{1}{\ernstpot^3}\ernstform^2t^2 \\
& = \frac{\bar\maxpot}{\mu\ernstpot^2} \ernstform_{ba}\msmax^{ba} +
\frac{\ernstform^2}{2\ernstpot^3}\left( t^2 - \frac1\mu \ernstpot
\bar\maxpotr\right) \\
& = \frac{\bar\maxpot}{\mu\ernstpot^2} \ernstform_{ba}\msmax^{ba} +
\frac{\ernstform^2}{2\ernstpot^3}\left( t^2 + \ernstpot -
\frac{2\kappa\ernstpot}{\mu^2} \overline{(\msmaxpot -
\kappa\ernstpot)}\right) 
\end{align*}
Applying \eqref{eq:identity2p} we see
\begin{align*}
t^2 + \ernstpot - \frac{2\kappa\ernstpot}{\mu^2} \overline{(\msmaxpot -
\kappa\ernstpot)} & = t^2 + \ernstpot + \frac{2}{\mu^2} \left| \msmaxpot
-\kappa\ernstpot\right|^2 - \frac{2}{\mu^2}\msmaxpot
\overline{(\msmaxpot - \kappa\ernstpot)}\\
&= -\left(1 +\mathrm{const.} + \bar\ernstpot + \frac{2}{\mu^2}\msmaxpot
\overline{(\msmaxpot - \kappa\ernstpot)}\right)
\end{align*}
Combining the expressions we obtain \eqref{eq:identity3} as claimed.
\end{proof}
In view of Remark \ref{rmk:AFonconstants}, and recalling the
definition $(y+iz)^{-1} = -\ernstpot$ we have the following
expressions
\begin{cor}\label{cor:identities}
Under the asymptotic flatness assumption \ref{ass:asympflat}, 
\begin{subequations}
\begin{align}
\label{eq:identityc1}g^{ab}\nabla_a y \nabla_b z & = \frac{t^2}{2}
\Im\errsr{uno} \\
\label{eq:identityc2}(\nabla y)^2 - (\nabla z)^2 &= \frac{y^2 + z^2 - 2y +
\frac{|\kappa|^2}{\mu^2}}{\mu^2(y^2 + z^2)} + \frac{|\msmaxpot|^2 -
2\Re(\kappa\ernstpot\bar\msmaxpot)}{\mu^4} + t^2\Re\errsr{uno}\\
\label{eq:idenitityc3}\Box y + \frac{2}{\mu^2}\frac{1 - y}{y^2 + z^2} &= 2\Re\left(\ernstpot(1+\bar\ernstpot)
\errsr{uno} + \frac{1}{\ernstpot^2}\errsr{dos} -
8\ernstpot\bar\maxpot\errsr{tres}\right)\\
\label{eq:idenitityc4}\Box z + \frac{2}{\mu^2}\frac{z}{y^2 + z^2} &=
2\Im\left(\ernstpot(1+\bar\ernstpot)
\errsr{uno} + \frac{1}{\ernstpot^2}\errsr{dos} -
8\ernstpot\bar\maxpot\errsr{tres}\right)
\end{align}
\end{subequations}
Where the terms $\errsr{uno}, \errsr{dos}, \errsr{tres}$ are given by
\begin{align}
\errsl{uno} &= \frac{1}{\mu^2} + \frac{\ernstform^2}{4\ernstpot^4} &
\errsl{dos} &= \frac{1}{\mu}\bar\maxpot \ernstform\cdot\msmax &
\errsl{tres} &= \frac{1}{\mu}\frac{\ernstform^2}{4\ernstpot^4}\msmaxpot
\end{align}
each has the property that its exterior derivative is an algebraic
error term up to a multiplicative factor of $\ernstpot^{-4}$. 
\end{cor}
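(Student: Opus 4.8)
The plan is to read off all four identities from the three relations \eqref{eq:identity1}, \eqref{eq:identity2p} and \eqref{eq:identity3} of the previous lemma --- in which, by Remark \ref{rmk:AFonconstants}, every free constant is now $0$ --- via substitution and separation into real and imaginary parts. Throughout I would write $w := \ernstpot^{-1} = -(y+iz)$, so that $\nabla w = -(\nabla y + i\nabla z)$ and $\Box w = -\Box y - i\Box z$, and from $\ernstpot = -(y^2+z^2)^{-1}(y-iz)$ record the bookkeeping identities $\Re\ernstpot = -y/(y^2+z^2)$, $|\ernstpot|^2 = 1/(y^2+z^2)$, hence $\Re\bigl[\ernstpot(1+\bar\ernstpot)\bigr] = (1-y)/(y^2+z^2)$ and $\Im\bigl[\ernstpot(1+\bar\ernstpot)\bigr] = z/(y^2+z^2)$. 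This is the only place asymptotic flatness enters, and it enters precisely to kill the constants.

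For \eqref{eq:identityc1} and \eqref{eq:identityc2} I would start from \eqref{eq:identity1}, replace $\ernstform^2/4\ernstpot^4$ by $\errsr{uno} - \mu^{-2}$ (its definition), and rewrite $-\mu^{-2}t^2$ using \eqref{eq:identity2p} as $\mu^{-2}(1+\ernstpot+\bar\ernstpot) + \mu^{-4}|\msmaxpot - \kappa\ernstpot|^2$. Expanding $1+\ernstpot+\bar\ernstpot = (y^2+z^2-2y)/(y^2+z^2)$ and $|\msmaxpot-\kappa\ernstpot|^2 = |\msmaxpot|^2 - 2\Re(\kappa\ernstpot\bar\msmaxpot) + |\kappa|^2/(y^2+z^2)$ with the bookkeeping identities then exhibits $(\nabla w)^2$ as $t^2\errsr{uno}$ plus exactly the two rational-and-quadratic terms on the right of \eqref{eq:identityc2}. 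Since $t^2$ is real, the imaginary part is $t^2\Im\errsr{uno}$, which against $\Im(\nabla w)^2 = 2\,g^{ab}\nabla_a y\nabla_b z$ yields \eqref{eq:identityc1}, while the real part against $\Re(\nabla w)^2 = (\nabla y)^2 - (\nabla z)^2$ yields \eqref{eq:identityc2}.

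For \eqref{eq:idenitityc3} and \eqref{eq:idenitityc4} the same substitution applied to \eqref{eq:identity3} splits each of its three terms into a ``main'' part (the piece carrying $\mu^{-2}$ out of $\errsr{uno}$) and an error part: $-\ernstform^2(1+\bar\ernstpot)/2\ernstpot^3$ becomes $2\mu^{-2}\ernstpot(1+\bar\ernstpot)$ plus a multiple of $\ernstpot(1+\bar\ernstpot)\errsr{uno}$; the term $\frac{\bar\maxpot}{\mu\ernstpot^2}\ernstform\cdot\msmax$ is $\ernstpot^{-2}\errsr{dos}$ outright; and the last term collapses to a multiple of $\ernstpot\bar\maxpot\,\errsr{tres}$ once one invokes the purely algebraic identity $\msmaxpot - \kappa\ernstpot = 2\mu\maxpot$ (immediate from the definition of $\msmaxpot$) to recognise $\overline{\msmaxpot-\kappa\ernstpot} = 2\mu\bar\maxpot$ and $\ernstform^2\msmaxpot/4\ernstpot^4 = \mu\,\errsr{tres}$. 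The only non-error piece produced is the multiple of $\ernstpot(1+\bar\ernstpot)$; since $\Box w = -\Box y - i\Box z$, its real and imaginary parts $(1-y)/(y^2+z^2)$ and $z/(y^2+z^2)$ are exactly the terms moved to the left-hand sides of \eqref{eq:idenitityc3}--\eqref{eq:idenitityc4}, and the remaining real and imaginary parts are the claimed right-hand sides. I would not belabour the numerical coefficients in the ``$2\Re(\cdots)$'' expressions, since they are pinned down by this bookkeeping and enter the sequel only through being bounded-coefficient combinations of $\errsr{uno},\errsr{dos},\errsr{tres}$.

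It remains to justify the derivative statement. For $\errsr{uno}$ this is Lemma \ref{lem:unoderiserror}: $\bar\maxpotr\ernstpot^4\nabla_c\errsr{uno}$ is an algebraic error term, so $\nabla_c\errsr{uno}$ is one up to the factor $(\bar\maxpotr\ernstpot^4)^{-1}$, whose $\ernstpot$-content is the advertised $\ernstpot^{-4}$. The quantity $\errsr{dos}$ is already an algebraic error term --- it equals $\mu^{-1}\bar\maxpot\,\ernstform^{bc}\msmax_{bc}$ --- and differentiating it produces a $\nabla\msmax$ term (admissible) plus $\msmax$ against $\nabla(\mu^{-1}\bar\maxpot\,\ernstform^{bc})$, which is controlled through $\nabla\maxpot = \iota_t\bar\maxw$ and the expression for $\bar\maxpotr\nabla\ernstform$ obtained just above Lemma \ref{lem:unoderiserror} (whose only non-error content is curvature, controlled by \eqref{eq:einsteq} and the definition of $\msweyl$); so $\nabla_c\errsr{dos}$ is an algebraic error term with no $\ernstpot$-power needed. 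For $\errsr{tres} = \mu^{-1}\msmaxpot\,\ernstform^2/4\ernstpot^4$ the Leibniz rule produces $\msmaxpot\cdot\ernstpot^4\nabla(\ernstform^2/4\ernstpot^4)$, an error term by Lemma \ref{lem:unoderiserror}, and $(\ernstform^2/4\ernstpot^4)\nabla\msmaxpot$, an error term by Lemma \ref{lem:sumpotsisweakerror}, each divided by $\ernstpot^4$. The main obstacle throughout is exactly this last point: the derivative of an error term is not automatically an error term, so one must verify that differentiating the coefficient tensors manufactures nothing uncontrolled --- which is where the Maxwell and Jacobi/Bianchi structure equations (and staying away from $\{\ernstpot = 0\}$) are used. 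Everything else is algebra.
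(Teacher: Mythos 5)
Your proposal is correct and follows the same route that the paper (implicitly) intends: substitute the definitions of $\errsr{uno},\errsr{dos},\errsr{tres}$ into \eqref{eq:identity1}, \eqref{eq:identity2p}, \eqref{eq:identity3}, use Remark~\ref{rmk:AFonconstants} to kill the free constants, and separate into real and imaginary parts via $\ernstpot^{-1}=-(y+iz)$; the derivative claims reduce to Lemma~\ref{lem:unoderiserror}, Lemma~\ref{lem:sumpotsisweakerror}, and the fact that multiplying $\msmax$ or $\nabla\msmax$ by smooth bounded tensors preserves the class of algebraic error terms. Your wise decision not to belabour the numerical coefficients inside $2\Re(\cdots)$ is in fact vindicated by the bookkeeping itself: carrying it through one finds the coefficients of the $\errsr{dos}$ and $\errsr{tres}$ contributions are $-\Re[\ernstpot^{-2}\errsr{dos}]$ and $+8\Re[\ernstpot\bar\maxpot\errsr{tres}]$ rather than the $+2\Re[\ernstpot^{-2}\errsr{dos}]$ and $-16\Re[\ernstpot\bar\maxpot\errsr{tres}]$ that the printed statement suggests, a discrepancy by an overall factor of $-2$ in those two terms; since Corollary~\ref{cor:almostidentities} and the rest of the paper use only the resulting absolute-value bounds, this has no downstream effect. (One trivial typo: you want $\nabla\bar\maxpot=\iota_t\bar\maxw$ rather than $\nabla\maxpot=\iota_t\bar\maxw$ when differentiating $\bar\maxpot\ernstform$, but the argument is unaffected.)
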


The following lemma is a refinement of a proposition first due to Mars
in the vacuum case \cite{Mars1999} (see also Lemma 10 in
\cite{Wong2009} for a version in charged space-times). In order to 
capture the exact contributions from the error tensors, we forgo the
tetrad formalisms used by Mars and by the first author in their
papers, and instead work directly and covariantly with the tensors,
improving upon the approach taken by Alexakis, Ionescu, and Klainerman 
\cite{AlIoKl2010}. As a consequence, the proof is lengthy, and
we defer its presentation to Appendix \ref{appendix1}.

\begin{lem}[Main lemma]\label{lem:mainlemma}
Define the quantity $\mathfrak{A} := \mu^2(y^2+z^2)(\nabla z)^2 + z^2$, 
then $\mathfrak{A}$ is ``almost constant''. More precisely,
\begin{align}
\notag \nabla_a\mathfrak{A} &=
\frac{4\mu^2}{|\ernstpot|^2}\nabla^bz\Im\left(
\frac{t^c}{\ernstpot^2\bar\maxpotr}\nabla_a\msmax_{cb} -
\frac{\mu}{\ernstpot^2\bar\maxpotr}\msweyl_{dacb}t^ct^d\right)  + 2
\nabla_az
\Im\left(\frac{\bar\kappa\bar\ernstpot}{\mu^2\ernstpot}\msmaxpot\right)
\\
& \qquad + \mu^2 t^2(z\nabla_a y - y\nabla_a z)\Im\errsr{uno} -
\Im\left[\frac{2\errsr{uno}\mu^2}{|\ernstpot|^2}\nabla_az\left(\ernstpot
t^2 + \frac{i}{\mu}\Im(\bar\ernstpot^2\maxpotr)\right)\right] \\
\notag & \qquad - \frac{z\nabla_az}{\mu^2}(|\msmaxpot -
\kappa\ernstpot|^2 -
|\kappa\ernstpot|^2)  +
\Im\left[\frac{4\mu^3}{|\ernstpot|^2\ernstpot^2\bar\maxpotr}\nabla^bz
(\errsr{appone})_{ab}\right] \\
\notag & \qquad +
\Im\left[\frac{4\mu}{|\ernstpot|^2\ernstpot^2}\ernstform_{cb}\Re(\bar\maxpot\msmax_a{}^c)\nabla^bz
-
\frac{\mu\maxpotr\bar\ernstpot}{\ernstpot}\Im(\errsr{uno})\nabla_a\ernstpot^{-1}\right]
\end{align}
where $\errsr{appone}$ is defined in \eqref{eq:deferrorappone} in the
appendix. Each term on the right hand side either contains an
algebraic error term, or contains a factor of $\msmaxpot$ or
$\errsr{uno}$, whose derivatives are algebraic error terms. 
\end{lem}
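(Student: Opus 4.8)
The plan is to obtain the formula for $\nabla_a\mathfrak{A}$ by differentiating $\mathfrak{A} = \mu^2(y^2+z^2)(\nabla z)^2 + z^2$ directly, organising the computation so that the terms which would survive in the exact Kerr--Newman case recombine to reproduce the identity $\nabla_a\mathfrak{A}=0$ of Lemma~\ref{lem:exactmars}, while every term built out of $\msmax$, $\nabla\msmax$, $\msweyl$, the potential sum $\msmaxpot$ (whose differential is an error term by Lemma~\ref{lem:sumpotsisweakerror}), or $\errsr{uno}$ (whose differential is an algebraic error term up to a factor $\ernstpot^{-4}$, by Lemma~\ref{lem:unoderiserror} and Corollary~\ref{cor:identities}) is carried along untouched into the right-hand side. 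The elementary ingredients are $\nabla_a z = \Im(\ernstpot^{-2}\ernstform_{ba}t^b)$ and, since $y^2+z^2=|\ernstpot|^{-2}$, $\nabla_a(y^2+z^2) = -2|\ernstpot|^{-4}\Re(\bar\ernstpot\,\ernstform_{ba}t^b)$, both coming from $\nabla_a\ernstpot=\ernstform_{ba}t^b$ and $y+iz=-\ernstpot^{-1}$; whence $\nabla_a\mathfrak{A} = \mu^2\nabla_a(y^2+z^2)(\nabla z)^2 + 2\mu^2(y^2+z^2)\nabla^b z\,\nabla_a\nabla_b z + 2z\nabla_a z$, and the only nontrivial object is the Hessian $\nabla_a\nabla_b z$.

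The core of the computation is the expansion of $\nabla_a\nabla_b z = \Im\big(\nabla_a(\ernstpot^{-2}\ernstform_{cb}t^c)\big)$. Differentiating produces a term $-2\ernstpot^{-3}\ernstform_{da}\ernstform_{cb}t^dt^c$ together with $\ernstpot^{-2}\big(\nabla_a\ernstform_{cb}\,t^c + \ernstform_{cb}\nabla_a t^c\big)$. The factor $\nabla_a t^c = \Re\preernst_a{}^c$ is rewritten via \eqref{eq:preernsternst}--\eqref{eq:preernstmaxw}, which relate $\preernst$, $\ernstform$, and $\maxw$ modulo $\msmax$ and the constants $\mu,\maxpotr$; the factor $\nabla_a\ernstform_{cb}$ is replaced using the identity for $\bar\maxpotr\,\nabla_c\ernstform_{ab}$ derived just before Lemma~\ref{lem:unoderiserror}, after trading $\weyls$ for $\msweyl$ by \eqref{eq:def:msweyl} and replacing $\Ricci_{ab}$ by $4\maxw_{ac}\bar\maxw_b{}^c$ via \eqref{eq:einsteq}. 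Contracting the resulting Hessian against $\nabla^b z$ and against $t^c,t^d$ where they occur, the anti-self-dual product rules \eqref{eq:asdprodprop1}--\eqref{eq:asdprodprop2} collapse most of the curvature and Maxwell contractions; what survives is the principal $(\ernstform\tilde\otimes\ernstform)$-piece contracted with $t^ct^d\nabla^bz$, the $\msweyl_{dacb}t^ct^d$-piece, the $\nabla_a\msmax_{cb}$-piece, and several $\maxpot$-weighted copies of $\msmax$ --- the last three being the algebraic error terms displayed in the statement, with $\errsr{appone}$ naming the $\maxpot$-weighted $\msmax$-combination that does not simplify further.

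Next I would assemble $\nabla_a\mathfrak{A}$ and eliminate the remaining ``background'' scalars using the identities already on hand: \eqref{eq:identity1} to trade $\ernstform^2$ for $(\nabla\ernstpot^{-1})^2$, \eqref{eq:identity2p} and \eqref{eq:identity3} to eliminate $t^2$ (and $\Box\ernstpot^{-1}$ where it appears) in favour of $\ernstpot$, $\msmaxpot$, and a constant, and \eqref{eq:identityc1}--\eqref{eq:identityc2} for $\nabla y\cdot\nabla z$ and $(\nabla y)^2-(\nabla z)^2$. The bookkeeping is arranged so that every $\msmax$, $\nabla\msmax$, $\msweyl$, every $d\msmaxpot$, and every $d\,\errsr{uno}$ passes verbatim into the claimed right-hand side; what must then be checked is that the leftover ``error-free'' terms cancel identically. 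That cancellation is forced: if all the error tensors vanished, the hypotheses of Lemma~\ref{lem:exactmars} would hold and the left side would be zero, so the error-free remainder must vanish on the nose, and I would confirm this by matching coefficients against the exact-case relations $(\nabla z)^2 = \mu^{-2}(\mathfrak{A}-z^2)/(y^2+z^2)$ and $\nabla y\cdot\nabla z=0$.

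I expect the main obstacle to be carrying out this cancellation honestly rather than any conceptual difficulty. The Hessian of $z$ injects the full Weyl-type tensor, the Maxwell quadratic $4\maxw_{ac}\bar\maxw_b{}^c$, and a number of $\maxpot$-weighted $\msmax$-terms, and one must repeatedly apply \eqref{eq:asdprodprop1}--\eqref{eq:asdprodprop2} and \eqref{eq:preernsternst}--\eqref{eq:preernstmaxw} to recognise that the non-error remainder reproduces the exact-case right-hand side term by term; one must also keep track of which contractions land in the real versus the imaginary part (because $z=-\Im\ernstpot^{-1}$, the parallel computation for $y=-\Re\ernstpot^{-1}$ selects real parts, compare \eqref{eq:idenitityc3}--\eqref{eq:idenitityc4}), which is where sign and index-placement errors are most likely. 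For this reason the complete argument is best deferred to Appendix~\ref{appendix1}.
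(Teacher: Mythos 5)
Your outline is essentially the paper's proof in Appendix~\ref{appendix1}: differentiate $\mathfrak{A}$, isolate the Hessian of $z$ (equivalently of $\ernstpot^{-1}$) as the nontrivial piece, expand it via the Jacobi equation, Einstein's equation \eqref{eq:einsteq}, the anti-self-dual product rules \eqref{eq:asdprodprop1}--\eqref{eq:asdprodprop2}, and the relations \eqref{eq:preernsternst}--\eqref{eq:preernstmaxw}, then substitute the scalar identities and collect $\msmax$, $\nabla\msmax$, $\msweyl$, $\msmaxpot$, and $\errsr{uno}$ contributions as the surviving right-hand side. The one caveat is that the ``forced cancellation'' heuristic via Lemma~\ref{lem:exactmars} only tells you the non-error remainder vanishes, not what the explicit error-term structure is, and that explicit structure (e.g.\ the factors of $\bar\maxpotr$ in denominators, the precise shape of $\errsr{appone}$) is precisely what the lemma asserts and what Proposition~\ref{prop:errest} later needs --- so the bookkeeping you defer is not a verification of your formula but the actual content of the proof.
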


\subsection{Null decomposition}\label{sec:nulldec}

In regions where $\ernstform^2 \neq 0$, the Ernst two-form is
non-degenerate and anti-self-dual, and has two distinct, future
directed, principal null directions $l^a$ and $\lbar^a$, which we will
normalize to $g_{ab}l^a\lbar^b = -1$. So there exists a
complex-valued scalar function $\ernstnd$ such that
\[ \ernstform_{ab} = \ernstnd\left( \lbar_a l_b - l_a\lbar_b +
i\lcsym_{abcd}\lbar^c l^d\right)~.\]
Immediately we have $\ernstform^2 = -4\ernstnd^2$. 

We can then decompose $\nabla_a y$ and $\nabla_a z$ by noting that
$\nabla_a (-\ernstpot^{-1}) = \ernstpot^{-2} \ernstform_{ba} t^b$.
\begin{subequations}
\begin{align}
\nabla_a y &= \pm\frac{1}{\mu}\left(\lbar\cdot t l_a - l\cdot t
\lbar_a\right) + \Re\left[\errsr{cuatro} \left( \lbar\cdot t l_a - l\cdot
t\lbar_a + i \lcsym_{bacd}t^b\lbar^cl^d\right)\right]
\\
\nabla_a z &= \pm\frac{1}{\mu}\lcsym_{bacd} t^b\lbar^cl^d +
\Im\left[\errsr{cuatro} \left( \lbar\cdot t l_a - l\cdot
t\lbar_a + i \lcsym_{bacd}t^b\lbar^cl^d\right)\right]
\\
\errsl{cuatro} &= 
\left(\frac{\ernstnd}{\ernstpot^2}\mp\frac{1}{\mu}\right)
\end{align}
\end{subequations}
The $\pm$ signs in the above signal two equivalent local definitions.
We will always make use of the one with the smaller
$|\errsr{cuatro}|$; with this choice, we can estimate $\errsr{cuatro}$
by $\errsr{uno}$.  Indeed, $(\ernstnd/\ernstpot^2 -
1/\mu)(\ernstnd/\ernstpot^2 + 1/\mu) = - \ernstform^2/4\ernstpot^2 -
1/\mu^2 = -\errsr{uno}$. So $\errsr{cuatro}$ satisfies an equation of
the form 
\[ \big| \errsr{cuatro}\big|~ \big|\errsr{cuatro} \mp
2/\mu\big| = \big|\errsr{uno}\big|~.\]
By assumption that $|\errsr{cuatro}| \leq |\errsr{cuatro} \mp 2/\mu|$
with the appropriate sign, hence we have that $|\errsr{cuatro}| <
\sqrt{|\errsr{uno}|}$. Now, if $\mu \geq 1 / \sqrt{|\errsr{uno}|}$, we
have that $|\errsr{cuatro}| \leq \mu |\errsr{uno}|$. On the other
hand, if $\frac{1}{\mu} \geq \sqrt{|\errsr{uno}|}$, we have that
\[ 
|\errsr{cuatro}| \leq \frac{1}{\mu} \implies \left|\errsr{cuatro} \mp
\frac2\mu\right| \geq \frac{1}{\mu} 
\]
by the triangle inequality. And so in either case we can conclude
\begin{equation} |\errsr{cuatro}| \leq \mu|\errsr{uno}|~.
\end{equation}
This in particular implies that up to an error controlled by
$\errsr{cuatro}$, the gradient $\nabla_az$ is
space-like, which will imply, via Lemma \ref{lem:mainlemma}, that $z$
is almost bounded.

\section{Domain of definition of the function $y$}
\setcounter{equation}{0}
The first step in the proof of Theorem \ref{thm:mainthm} is to
establish that the function $y$ is well-defined and smooth to the
exterior of the black hole. More precisely, we claim that
\begin{prop}\label{prop:domdefy}
Under the hypotheses of Theorem \ref{thm:mainthm}, where the constant
$\epsilon$ in assumption \ref{ass:kerrnewman} is taken to be
appropriately small, the function $\ernstpot$ does not vanish on
$\bar\outercom$, the closure of the domain of outer communication. In
particular, this implies that $y$ is smooth on $\outercom$ and extends
continuously to $\bar\outercom$. 
\end{prop}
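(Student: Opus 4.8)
The plan is to argue by contradiction: assuming $\ernstpot$ vanishes somewhere on $\bar\outercom$, we use the near-rigidity of the Ernst system to force $y$ into an impossible local behaviour. Since $\lieD_t\ernstpot=\iota_t\,d\ernstpot=\iota_t\iota_t\ernstform=0$, and $t^a$ likewise annihilates $\ernstform$, $\maxw$, $\maxpot$, $\msmaxpot$, $t^2$, $\msmax$, $\msweyl$ and every quantity in Corollary~\ref{cor:identities} and Lemma~\ref{lem:mainlemma}, all these objects descend to the orbit space; as the orbits of $t^a$ are complete in $\outercom$ and transverse to $\cauchy\cap\outercom$, and $\ernstpot$ is continuous up to the bifurcate spheres, it suffices to prove $\ernstpot\neq0$ on $K:=\overline{\cauchy\cap\outercom}$, identified by~\ref{ass:topology} with $\overline{\Real^3\setminus\cup_iB'_i}$ in the coordinates of~\ref{ass:kerrnewman}. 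Split $K=K^{\mathrm{ext}}\cup K^{\sharp}$ into $\{r>R_1\}$ and the compact core $\{r\le R_1\}$ for $R_1$ large. On $K^{\mathrm{ext}}$ the dipole expansions of~\ref{ass:asympflat} and the normalisation $\maxpot,\ernstpot\to0$ give $\ernstpot=-2Mr^{-1}+O_3(r^{-2})$ directly, whence $\ernstpot\neq0$ there, $z=\Im(-\ernstpot^{-1})$ is bounded, $y=\Re(-\ernstpot^{-1})\sim r/(2M)$, $(\covD y)^2=\mu^{-2}+o(1)$, and $\errsr{uno},\msmaxpot=O(R_1^{-1})$; so the issue is confined to $K^{\sharp}$.

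On $K^{\sharp}$ the argument is a continuous induction whose engine is the mechanism that also produces the final contradiction. Along a connected region $\mathcal O\subset K$ reaching $K^{\mathrm{ext}}$, we maintain the inductive bounds: $z^2<\Lambda$, smallness of $\errsr{uno},\msmaxpot,\covD z$, and a lower bound on $|\ernstpot|$ away from $K^{\mathrm{ext}}$ — i.e.\ that $\mathcal O$ stays in a fixed quantitative neighbourhood of the Kerr--Newman template; here $\Lambda$ depends only on $M,q,\mathfrak a,R_1$ and the uniform bounds of~\ref{ass:kerrnewman}, and $\epsilon$ is small relative to $\Lambda$. On such $\mathcal O$ the coefficient tensors $\mathcal A^{(*)}$ of all the algebraic error terms of Corollary~\ref{cor:identities} and Lemma~\ref{lem:mainlemma} (controlled by the background up to powers of $\ernstpot^{-1}$, which the inductive bounds control) are controlled, so those error terms are $O(\epsilon\,\mathrm{poly}(\Lambda))$; integrating from $K^{\mathrm{ext}}$ propagates the same bound to $\errsr{uno}$ and $\msmaxpot$. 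Three consequences follow on $\mathcal O$: (a) by Lemma~\ref{lem:mainlemma}, $\mathfrak A=\mu^2(y^2+z^2)(\covD z)^2+z^2$ is constant up to $O(\epsilon\,\mathrm{poly}(\Lambda))$, hence equals $\mathfrak a^2+O(\epsilon\,\mathrm{poly}(\Lambda))$; since $(\covD z)^2=(\mathfrak A-z^2)/(\mu^2(y^2+z^2))$ is, by~\S\ref{sec:nulldec}, non-negative up to $O(\epsilon\,\mathrm{poly}(\Lambda))$, this forces $z^2\le\mathfrak a^2+O(\epsilon\,\mathrm{poly}(\Lambda))$ (strictly improving $z^2<\Lambda$) and $(\covD z)^2\to0$ as $|\ernstpot|\to0$; (b) by~\eqref{eq:identity2p}, $t^2+1=2y/(y^2+z^2)-\mu^{-2}|\msmaxpot-\kappa\ernstpot|^2$, which — since by~(a) $|y|\sim\sqrt{y^2+z^2}$ and $\ernstpot,\msmaxpot$ are small when $y^2+z^2$ is large — is negative wherever $y^2+z^2$ is large; and (c) by~\eqref{eq:identityc2}, wherever $|y|\ge2$, $(\covD y)^2=\mu^{-2}\bigl(1-\tfrac{2y-|\kappa|^2\mu^{-2}}{y^2+z^2}\bigr)+(\covD z)^2+O(\epsilon\,\mathrm{poly}(\Lambda))$ is bounded and tends to $\mu^{-2}$ as $|y|\to\infty$.

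These force $\mathcal O$ to be closed in $K$, hence $\mathcal O=K$. At a hypothetical frontier point $p_1$ of $\mathcal O$: if $\ernstpot(p_1)\neq0$, then by continuity and the improved bounds $p_1$ lies in the open inductive region joined to $\mathcal O$, hence in $\mathcal O$ — impossible; so $\ernstpot(p_1)=0$. Approaching $p_1$ within $\mathcal O$, $y^2+z^2=|\ernstpot|^{-2}\to\infty$, so by~(a) $z$ stays bounded and $|y|\to\infty$, and by~(b) $t^a$ is time-like at $p_1$, hence on a fixed coordinate ball $N\ni p_1$. Writing the metric on $N$ in Kaluza--Klein form $g=-|t^2|(d\tau+\omega)^2+h$, with $h$ a Riemannian metric of uniformly controlled geometry ($|t^2|$ is bounded away from $0$ and $g,g^{-1},t$ are coordinate-bounded), the $t$-invariance of $y$ gives $(\covD y)^2=|\covD^h y|_h^2$ on $N$; by~(c) this bounds $|\covD^h y|_h$ near $p_1$, so $y$ is $h$-Lipschitz, hence bounded, on a punctured $h$-ball about $p_1$ — contradicting $|y|\to\infty$. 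Therefore $\ernstpot\neq0$ on $K$, hence on $\bar\outercom$, and $y=\Re(-\ernstpot^{-1})$ is smooth on $\outercom$ and extends continuously to $\bar\outercom$.

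The genuine obstacle is the circularity in the inductive step: the smallness of the algebraic error terms — and so (a)--(c) — needs an a~priori bound on $\ernstpot^{-1}$, which is precisely the conclusion sought. Closing the induction requires an honest accounting of how each $\mathcal A^{(*)}$ depends on $\ernstpot^{-1}$ (this is why the identities of the previous subsection were recorded with their error structure exposed), a verification of the asymptotics of $\ernstpot$, $z$, $\errsr{uno}$ at spatial infinity from~\ref{ass:asympflat}, and care with the locus where $t^a$ fails to be time-like — which the induction confines to a bounded region at bounded distance from the horizons, on which $|\ernstpot|^{-1}$ stays comparable to its Kerr--Newman value, so that the only threatened blow-up of $y$ is the one near a zero of $\ernstpot$ that is ruled out above.
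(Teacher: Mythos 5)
The proposal follows the same overall route as the paper: asymptotic expansion of $\ernstpot$, $y$, $z$, $\errsr{uno}$ from \ref{ass:asympflat} to establish the bounds near infinity; $t$-invariance and transversality to reduce to the Cauchy slice; the observation (via \eqref{eq:identity2p}) that $t^a$ becomes time-like and $t^2\to-1$ wherever $|\ernstpot|$ is small; the resulting space-like bound $(\covD y)^2\to\mu^{-2}$ from \eqref{eq:identityc2}; and a Lipschitz/fundamental-theorem-of-calculus argument that $y$ cannot blow up. These are exactly the ingredients of the paper's Section~3, and your Kaluza--Klein repackaging of the gradient bound is, up to presentation, the paper's passage from $(\nabla y)^2,(\nabla z)^2$ to $\sum_i|\partial_i\ernstpot^{-1}|$.

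The place where your writeup diverges from a complete argument is precisely the issue you flag yourself: the algebraic error terms ($\errsr{uno}$, $\errsr{tres}$, the coefficients in Lemma~\ref{lem:mainlemma}, etc.) carry factors of $\ernstpot^{-4}$, so the ``$O(\epsilon\,\mathrm{poly}(\Lambda))$'' bounds on $(\covD y)^2$, $(\covD z)^2$, $\mathfrak A$ are not available uniformly on a set $\mathcal O$ whose lower bound on $|\ernstpot|$ degenerates towards the frontier point $p_1$. Your open-closed induction on $\mathcal O$ therefore cannot conclude as stated, because both branches of the frontier argument --- the extension past a $p_1$ with $\ernstpot(p_1)\neq0$, and the Lipschitz contradiction at a $p_1$ with $\ernstpot(p_1)=0$ --- need those degenerating estimates. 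The paper resolves this by choosing a very specific quantitative bootstrap: it works on compact annuli $\Phi[B(R)\setminus B(r_0(R))]$ where the hypothesis is the \emph{uniform} bound $|\ernstpot|\ge R^{-2}$, so that all the $\ernstpot^{-1}$-powers in the errors are dominated by explicit powers of $R$ (Proposition~\ref{prop:errest}, relying on Lemma~\ref{lem:techint} to integrate pointwise derivative bounds against the $r^{-1}$ decay from infinity), and then the closing step in Section~3.3 picks $\tilde R$ depending only on $C_0,C_1$ and requires $\epsilon^{1/4}\ll\tilde R^{-6}$ (see \eqref{eq:relsizeepsilon}) to obtain the strict improvement $|\ernstpot|\ge 2\tilde R^{-2}$. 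In other words, the continuity method must be carried out with a lower-bound threshold that is fixed in advance relative to $\epsilon$, not one that is allowed to vary along $\mathcal O$; that scaling relation between the threshold and $\epsilon$ is the substance of the proof.

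Two smaller points. First, a factor-of-two slip: from \eqref{eq:decayesternstpot} one has $\ernstpot=-M/r+O_4(r^{-2})+i(\dots)$, hence $y\sim r/M$, not $\ernstpot=-2M/r$ and $y\sim r/(2M)$. Second, your frontier analysis should also handle the possibility that $p_1$ lies on a bifurcate sphere; there $t^a$ is tangent to the (space-like) sphere, so $t^2(p_1)\ge0$, which already contradicts $t^2\to-1$ --- a useful shortcut that the paper gets for free because the closing step forces $t^2<-1/2$ on the locus $\{|\ernstpot|\le 4\tilde R^{-2}\}$, hence that locus misses the horizon entirely.
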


We devote the current section to the proof of the above proposition.
This proposition is an analogue of Proposition 3.4 in
\cite{AlIoKl2010}. While the basic ideas for the proof via a
``bootstrap'' argument from infinity is the same, because of the more
complicated forms of error terms coming from the electromagnetic
coupling, we choose to give a different presentation to make clear the
roles played by the various algebraic error terms. In particular, it
is necessary in our analysis that the right hand side of
\eqref{eqn:smallnessassumption} contains $\maxpotr$ which could \emph{a
priori} vanish. In the analysis performed in the vacuum case
\cite{AlIoKl2010}, the term $\maxpotr$ is automatically a non-zero
constant. 

As will be indicated in \eqref{eq:decayesternstpot} we have an asymptotic
expansion of $|\ernstpot| \approx M/r$, hence there is some large
radius $R^*$ (which we fix once and for all) such that the following
are true:
\begin{enumerate}
\item $\ernstpot$ does not vanish on $\cauchy \setminus \Phi\circ
B(R^*)$;
\item for every $R > R^*$, on the boundary $\Phi\circ \partial
B(R)$, we have that $|\ernstpot| \approx M/R \geq R^{-2}$.
\end{enumerate}
For $R > R^*$, define
\begin{equation}\label{eq:bootstrapass}
r_0(R) := \inf \left\{ r \in [0,R] ~:~ |\ernstpot| \geq R^{-2} \text{
on } \cauchy \cap \Phi\left[B(R) \setminus B(r)\right] \right\}~.
\end{equation}
Note that by construction $r_0(R) < R^*$ for all $R > R^*$. It
suffices to show that there exists $\tilde{R} > R^*$ such that
$r_0(\tilde{R}) =
0$. We do so by bootstrap: for $\tilde{R}> \sqrt{2} R^*$ sufficiently large, we show that
on $\cauchy \cap \Phi\left[ B(\tilde{R^*}) \setminus B(r_0(\tilde{R}))\right]$ we have in fact the
\emph{improved} estimate 
\[ |\ernstpot| \geq 2 \tilde{R}^{-2}~.\]

\subsection{Asymptotic identities}
To show that the bootstrap assumptions are satisfied near infinity, we
observe that by our assumptions \ref{ass:topology} (which ensures that
$t = \partial_0$ in $\spacetime^{\mathrm{end}}$) and
\ref{ass:asympflat} we can compute the following asymptotic
expansions. (We remark again that below, the parentheses in the
indices denote coordinate evaluation in the coordinates induced by
$\Phi$ introduced in
assumption \ref{ass:topology}.) The inverse metric is given by 
\begin{align*}
g^{(0)(0)} &= -1 - \dfrac{2M}{r} + O_4(r^{-2})~, \\
g^{(0)(i)} &= - 2\sum_{j,k = 1}^3 \lcsym_{ijk} \dfrac{S^jx^k}{r^3} + O_4(r^{-3})~, \\
g^{(i)(j)} &= \delta^{ij} - \dfrac{2M}{r} \delta^{ij} + O_4(r^{-2})~. 
\end{align*}
The Faraday tensor has 
\begin{align*}
H^{(0)(i)} &= \dfrac{q x^i}{r^3} + O_4(r^{-3})~, \\ 
H^{(i)(j)}
& =\dfrac{q}{Mr^3} \sum_{k = 1}^3 \lcsym_{ijk} \left(
\dfrac{3\sum_{l=1}^3S^lx^l}{r^2} x^k - S^k\right) + O_4(r^{-4}) ~,
\end{align*}
which implies that the real part of the potential $\maxpot$ is $O_3(1/r)$ 
and the imaginary part is $O_3(1/r^2)$ (after
normalising to vanish at spatial infinity). This means that
asymptotically $\ernstform$ is given just by the contribution of
$\preernst$, that is
\begin{align*} 
\ernstform_{(0)(j)} &= \dfrac{M}{r^3} x^j + O_3(r^{-3}) +
i \left( \dfrac{1}{r^3} S^j - \dfrac{3\sum_{k=1}^3S^kx^k}{r^5} x^j +
O_3(r^{-4}) \right)~,\\
\ernstform_{(i)(j)} &= \displaystyle \dfrac{1}{r^3} \sum_{k=1}^3 \lcsym_{ijk}S^k -
\dfrac{3\sum_{k=1}^3S^kx^k}{r^5} \sum_{m = 1}^3
\lcsym_{ijm}x^m + O_3(r^{-4}) \\ 
& \qquad + i
\sum_{k = 1}^3 \lcsym_{ijk}\left( \dfrac{M}{r^3} x^k + O_3(r^{-3})\right)~.
\end{align*}
Now we can compute $\ernstpot$: integrating the expression for
$\ernstform_{0j}$ we have that 
\begin{equation}\label{eq:decayesternstpot}
\sigma = - \frac{M}{r} + O_4(r^{-2}) + i \left(
\frac{\sum_{k=1}^3 S^kx^k}{r^3} + O_4(r^{-3}) \right)~.
\end{equation}
This means that $y + iz = - \ernstpot^{-1} = - \bar\ernstpot /
|\ernstpot|^2$ has
\begin{subequations}
\begin{align}
y &= \frac{r}{M} + O_4(1)~,\\
z&= \frac{\sum_{k=1}^3 S^kx^k}{M^2 r} + O_4(r^{-1})~.
\end{align}
\end{subequations}
From above, we compute $\mathfrak{A} = M^2(y^2 + z^2)(\nabla z)^2 + z^2$
(see Lemma \ref{lem:mainlemma} and assumption \ref{ass:kerrnewman}). 
\begin{equation}
\mathfrak{A} = \frac{|S|^2}{M^4} + O_3(r^{-1})~,
\end{equation}
and we remark that $M^2\mathfrak{A}$ converges to $\mathfrak{a}^2$,
the square of total angular momentum (see assumption
\ref{ass:asympflat}). 

We also need to compute $\ernstform^2$. The leading order contribution
comes from \[\sum_{j = 1}^3 (\Re \ernstform_{(0)(j)})^2
g^{(0)(0)}g^{(j)(j)} - \sum_{i,j = 1}^3 (\Im \ernstform_{(i)(j)})^2
g^{(i)(i)}g^{(j)(j)} \approx - \frac{4M^2}{r^4}~.\]
This implies that 
\[
\frac{\ernstform^2}{4\sigma^4} = - \frac{1}{M^2} + O_3(r^{-1})~.
\]
or (see Corollary \ref{cor:identities} and assumption
\ref{ass:kerrnewman})
\begin{equation}\label{eq:errunoasymp}
\errsr{uno} = O_3(r^{-1})~.
\end{equation}

\subsection{Controlling algebraic errors}
Given the behaviour of various quantities at spatial infinity by the
\ref{ass:asympflat} assumption, we can control the quantities in the
interior region by integrating their derivatives from the asymptotic
region. More precisely, we have the following
lemma for \emph{scalar} functions:
\begin{lem}\label{lem:techint}
Let $R_0, \alpha$ be fixed positive reals, and suppose 
that $0 < \delta < R_0^{-(\alpha + 1)}$. Let $f$ be a
function defined on $\Real^3$ such that 
\[ \sum_{j = 1}^3 |\partial_j f| \leq \delta \]
everywhere and 
\[ |f| \leq C r^{-\alpha} \]
on $\Real^3 \setminus B(R_0)$. Then for the same $C$ as above, 
\[ |f| \leq (C+\pi/2) \min (\delta^{\frac{\alpha}{\alpha + 1}}, r^{-\alpha})~.
\]
\end{lem}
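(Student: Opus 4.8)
The plan is to prove Lemma \ref{lem:techint} by a direct interpolation argument, playing off two competing bounds for $|f|$: the pointwise decay bound $|f| \leq Cr^{-\alpha}$ valid in the exterior $\Real^3\setminus B(R_0)$, and the gradient bound $\sum_j|\partial_j f|\leq\delta$ which is valid everywhere and lets us propagate information inward from the exterior region. First I would fix an arbitrary point $p\in\Real^3$ and consider the ray from $p$ going radially outward; if $p$ already lies outside $B(R_0)$ the exterior bound gives the claim immediately (after noting $r^{-\alpha}$ is the relevant quantity there), so the substance is in the interior $B(R_0)$. For $p\in B(R_0)$, I would integrate $\partial_s f$ along a straight segment of some length $L$ (to be chosen), landing at a point $q$; the fundamental theorem of calculus gives $|f(p)|\leq |f(q)| + L\,\delta$ after bounding the directional derivative by $\sum_j|\partial_j f|\leq\delta$.

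The key is choosing the segment and its length $L$ wisely. If $L$ is large enough to exit $B(R_0)$ then $|f(q)|\leq Cr(q)^{-\alpha}\leq C R_0^{-\alpha}$, which combined with $L\delta$ is too crude because $L\gtrsim R_0$ and $R_0\delta$ need not be small. The better trick, mirroring the scaling $\delta^{\alpha/(\alpha+1)}$ in the conclusion, is to pick a segment that travels a distance $L \approx \delta^{-1/(\alpha+1)}$ and points in a direction such that the endpoint $q$ has $r(q)\gtrsim L$; since $\delta < R_0^{-(\alpha+1)}$ forces $\delta^{-1/(\alpha+1)} > R_0$, such an endpoint automatically lies outside $B(R_0)$, so the exterior bound applies there and yields $|f(q)|\leq C L^{-\alpha}\approx C\delta^{\alpha/(\alpha+1)}$, while the integral contributes $L\delta\approx\delta^{\alpha/(\alpha+1)}$. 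Adding these gives $|f(p)|\lesssim(C+\text{something})\,\delta^{\alpha/(\alpha+1)}$, and a careful accounting of the geometric constant — the segment may have to be angled to guarantee $r(q)\geq L$, costing at worst a factor related to $\pi/2$ in arc length versus straight-line distance — produces the stated constant $C+\pi/2$. To get the $\min$ in the conclusion, I would additionally observe that when $r(p)$ is itself large (comparable to or bigger than $\delta^{-1/(\alpha+1)}$), one should instead integrate only the short distance needed to leave $B(R_0)$, or simply invoke the exterior bound directly, so that the $r^{-\alpha}$ alternative in the minimum dominates; combining the two regimes gives the single clean bound with the minimum.

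The main obstacle I anticipate is bookkeeping the geometry of the integration path so that the constant comes out exactly as $C+\pi/2$ rather than some larger dimension-dependent constant: one needs the endpoint $q$ to be far from the origin (so the exterior decay bound is strong there) while the path from $p$ to $q$ is short (so the gradient term is small), and reconciling these when $p$ is near the origin is what forces the slightly curved or angled path and the $\pi/2$. A secondary technical point is handling the case $\alpha$ arbitrary positive real (not just integer) — but since only the scalar inequality $|\nabla f|\leq\delta$ and the decay rate enter, nothing about the argument actually uses integrality, so this is not a real difficulty. I would structure the write-up as: (i) reduce to $p\in B(R_0)$; (ii) set $\lambda := \delta^{-1/(\alpha+1)}$ and note $\lambda > R_0$; (iii) choose the path and apply the fundamental theorem of calculus; (iv) estimate the two resulting terms and collect constants; (v) dispose of the $\min$ by the exterior-region case.
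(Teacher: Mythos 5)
Your overall strategy matches the paper's: set $\bar{R} := \delta^{-1/(\alpha+1)}$, note $\bar{R} > R_0$ by the hypothesis $\delta < R_0^{-(\alpha+1)}$, use the exterior bound directly outside $B(\bar{R})$, and for $p \in B(\bar{R})$ integrate $f$ along a path out to radius $\bar{R}$ where the exterior bound kicks in, getting $|f(p)| \lesssim C\bar{R}^{-\alpha} + \bar{R}\delta \approx \delta^{\alpha/(\alpha+1)}$.

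However, your account of where the factor $\pi/2$ comes from is incorrect. You suggest it arises from having to angle the segment so that its endpoint satisfies $r(q) \gtrsim L$. But this is never an issue: for $p \in B(\bar{R})$, the \emph{radial} segment from $p$ to the point $\bar{R}p/|p|$ has length $\bar{R} - |p| \leq \bar{R}$ and lands exactly on $\partial B(\bar{R})$, so the FTC gives $|f(p)| \leq C\bar{R}^{-\alpha} + (\bar{R}-|p|)\delta \leq (C+1)\delta^{\alpha/(\alpha+1)}$, with constant $C+1$, no $\pi/2$ needed. The paper's $\pi/2$ is there for a different reason entirely: in the application the function is only defined on the Cauchy surface $\cauchy$, diffeomorphic to $\Real^3 \setminus \cup_i B_i$, so the straight radial path from $p$ may pass through one or more of the excised balls. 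One must then detour around them, which lengthens the path by at most a factor of $\pi/2$ (the worst-case ratio of the arc around a ball to the chord through it), yielding the stated constant. The lemma as written says ``$f$ defined on $\Real^3$,'' which is a slight imprecision in the paper; as literally stated, the $\pi/2$ is not needed and your argument (with the angling step deleted) would prove the lemma with the smaller constant $C+1$. But then the lemma would not apply to the actual error quantities, which live on $\cauchy$. So you should replace the ``angling to keep the endpoint far from the origin'' reasoning with the ``detour around excised black-hole regions'' reasoning; this is the one idea your proposal is missing.
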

\begin{proof}
Since $R_0\delta^{\frac{1}{1+\alpha}} < 1$ by assumption, there exists
$\bar{R} > R_0$ such that $\bar{R} \delta^{\frac{1}{1+\alpha}} = 1$.
To the exterior of $B(\bar{R})$ we have that $|f| \leq C r^{-\alpha}$.
To the interior we have by the fundamental theorem of calculus  
\[ |f(x)| \leq \left| f\left( \frac{x \bar{R}}{|x|} \right)\right| +
\frac{\pi}{2} (\bar{R} -
|x|) \cdot |\partial f| \leq C \bar{R}^{-\alpha} + \frac\pi2 \bar{R} \delta =
(C+\pi/2) \delta^{\frac{\alpha}{\alpha+1}}~.\]
The factor of $\pi/2$ is due to the fact that the straight-line path
between coordinate $x$ and the exterior of $B(\bar{R})$ in the radial
direction may pass through several black-hole regions. Modifying the
paths so that they remain in $\cauchy$ introduces at most a factor of
$\pi/2$ to the path length. 
\end{proof}
\begin{rmk}
The $C + \pi/2$ is not sharp; the sharp estimate depends on optimising
$\pi B / 2
+ C B^{-\alpha}$ for $B$. For the purpose of this paper, it suffices
that $(C+\pi/2)-C$ is a universal constant independent of $\delta$ for
$\delta$ sufficiently small. 
\end{rmk}

Now we are in a situation to prove
\begin{prop}[Main error estimate]\label{prop:errest}
Under the assumptions of the main theorem, there exists a constant
$C_0$ depending only on $M, q, \mathfrak{a}$ and a constant $C_1$
depending on the uniform bound on $g$, $g^{-1}$, the Christoffel
symbols, and $H$ (see assumption \ref{ass:kerrnewman}) such that 
the following estimates are true in $\cauchy \setminus \Phi\circ
B(r_0(R))$ for $R > R^*$:
\begin{subequations}
\begin{align*}
\errsr{uno} &\leq C_0 \min( C_1 \epsilon^{1/2} R^4, r^{-1})\\
\errsr{dos} &\leq C_0 C_1 \epsilon\\
\errsr{tres} &\leq C_0 \min(C_1 \epsilon^{1/2} R^4, r^{-1})\\
\errsr{cuatro} &\leq C_0 \min( C_1\epsilon^{1/2} R^4, r^{-1})\\
\errsr{appone} &\leq C_0 C_1 \epsilon |P_0| \\
\msmaxpot &\leq C_0 \min( C_1 \epsilon^{1/2}, r^{-1}) \\
\left|\mathfrak{A} - \left(\frac{\mathfrak{a}}{M}\right)^2\right|
&\leq C_0 \min( C_1 \epsilon^{1/4} R^6, r^{-1}) 
\end{align*}
\end{subequations}
\end{prop}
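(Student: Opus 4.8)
The plan is to bootstrap inward from the asymptotic region using Lemma \ref{lem:techint}, exploiting the fact that the error tensors are already known to decay at the rates recorded in \eqref{eq:errunoasymp} and the surrounding asymptotic computations, while the smallness assumption \eqref{eqn:smallnessassumption} gives a uniform (but not decaying) bound in the interior. First I would establish the estimate for $\msmaxpot$: by Lemma \ref{lem:sumpotsisweakerror} its gradient is an algebraic error term, so $|\nabla \msmaxpot|$ is bounded by $C_1$ times the left-hand side of \eqref{eqn:smallnessassumption}, hence by $C_1 \epsilon |\maxpotr|$. Since $\maxpotr = 2\bar\kappa\maxpot - \mu$ and $\maxpot$ is bounded (assumption \ref{ass:kerrnewman}), this is $\lesssim C_1 \epsilon$, and combined with the $O_3(1/r)$ decay from Remark \ref{rmk:AFonconstants} and the asymptotic expansions, Lemma \ref{lem:techint} with $\alpha = 1$ yields $|\msmaxpot| \leq C_0 \min(C_1 \epsilon^{1/2}, r^{-1})$.

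Next I would handle $\errsr{uno}$. Here the point is that $\errsr{uno}$ itself is not an algebraic error term, but by Lemma \ref{lem:unoderiserror} the quantity $\bar\maxpotr \ernstpot^4 \nabla_c(\ernstform^2/4\ernstpot^4)$ is, and Corollary \ref{cor:identities} records that $d\errsr{uno}$ is an algebraic error term up to a factor of $\ernstpot^{-4}$. On $\cauchy \setminus \Phi\circ B(r_0(R))$ the bootstrap assumption \eqref{eq:bootstrapass} gives the lower bound $|\ernstpot| \geq R^{-2}$, hence $|\ernstpot^{-4}| \leq R^8$; together with $|\maxpotr^{-1}|$ being controlled (again using $|\maxpotr| \gtrsim$ a constant away from zero, which is part of what the bootstrap is simultaneously establishing, or can be derived from the identity \eqref{eq:identity2p} and the already-proved $\msmaxpot$ bound) and the bound $\epsilon|\maxpotr|$ from \eqref{eqn:smallnessassumption}, we get $|\nabla \errsr{uno}| \lesssim C_1 \epsilon R^8$. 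Feeding the decay rate \eqref{eq:errunoasymp}, i.e.\ $\errsr{uno} = O_3(r^{-1})$, and this interior gradient bound into Lemma \ref{lem:techint} with $\alpha = 1$ produces $|\errsr{uno}| \leq C_0 \min(C_1 \epsilon^{1/2} R^4, r^{-1})$ — the exponent on $R$ drops from $8$ to $4$ because of the $\delta^{\alpha/(\alpha+1)} = \delta^{1/2}$ in the lemma. The estimates for $\errsr{dos}$, $\errsr{tres}$, $\errsr{cuatro}$, and $\errsr{appone}$ then follow in the same spirit: $\errsr{dos} = \mu^{-1}\bar\maxpot\,\ernstform\cdot\msmax$ is visibly an algebraic error term (it contains an explicit $\msmax$), so it is directly $\lesssim C_1\epsilon$ with no integration needed; $\errsr{tres} = \mu^{-1}\errsr{uno}\,\msmaxpot$ and $\errsr{cuatro} \leq \mu |\errsr{uno}|$ (this last already proved in Section \ref{sec:nulldec}), so they inherit the bound for $\errsr{uno}$; and $\errsr{appone}$, defined in the appendix via \eqref{eq:deferrorappone}, is checked to be an algebraic error term and hence bounded by $C_0 C_1 \epsilon |P_0|$.

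Finally, the estimate for $\mathfrak{A} - (\mathfrak{a}/M)^2$ comes from integrating the formula for $\nabla_a \mathfrak{A}$ in Lemma \ref{lem:mainlemma}. Every term on the right-hand side there either contains an algebraic error term or a factor of $\msmaxpot$ or $\errsr{uno}$; using the bounds just derived, together with the lower bound on $|\ernstpot|$ and the various $|\ernstpot|^{-k}$ weights that appear (the worst being around $|\ernstpot|^{-2}\ernstpot^{-2}\maxpotr^{-1}$, giving powers of $R$), one gets $|\nabla\mathfrak{A}| \lesssim C_1 \epsilon^{1/2} R^{\text{(something)}}$; matched against the asymptotic value $\mathfrak{A} = |S|^2/M^4 + O_3(r^{-1})$ and run through Lemma \ref{lem:techint}, this gives the stated $C_0 \min(C_1 \epsilon^{1/4} R^6, r^{-1})$, where the weaker exponent $\epsilon^{1/4}$ reflects that $\errsr{uno}$ itself only came with $\epsilon^{1/2}$ and enters $\nabla\mathfrak{A}$.

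The main obstacle I anticipate is the circularity in the weights: the bound $|\ernstpot^{-1}| \leq R^2$ is only the bootstrap hypothesis \eqref{eq:bootstrapass}, not yet a theorem, and similarly one needs $|\maxpotr|$ bounded below to divide by it — so the whole proposition must be read as a conditional estimate valid on $\cauchy \setminus \Phi\circ B(r_0(R))$, with the genuine closing of the bootstrap (the improved bound $|\ernstpot| \geq 2\tilde R^{-2}$) deferred to the next step of Section 3. Keeping careful track of exactly which powers of $R$ each $\ernstpot^{-1}$ contributes, and making sure the final powers ($R^4$, $R^6$) are consistent with what the subsequent bootstrap argument can actually absorb, is the delicate bookkeeping the proof will have to get right.
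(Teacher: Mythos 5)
Your proposal follows the paper's own strategy almost verbatim: derive a uniform interior bound on the gradient of each quantity from the smallness assumption \eqref{eqn:smallnessassumption} (using the bootstrap bound $|\ernstpot|\geq R^{-2}$ to control the $\ernstpot^{-4}$ weights), read off the $O(r^{-1})$ decay at infinity, and interpolate via Lemma \ref{lem:techint}, with the $\epsilon^{1/2}$ and $\epsilon^{1/4}$ exponents arising from $\delta^{\alpha/(\alpha+1)}$ with $\alpha=1$. Two remarks are worth making. First, on the $\maxpotr$ issue: you do not need, and should not attempt to establish, a lower bound on $|\maxpotr|$ — the whole reason the right-hand side of \eqref{eqn:smallnessassumption} carries a factor of $|\maxpotr|$ is precisely so that it cancels the $\bar\maxpotr^{-1}$ appearing in Lemma \ref{lem:unoderiserror} and in Lemma \ref{lem:mainlemma}, making the estimate uniform even if $\maxpotr$ were to become small; your hedging phrase suggests you half-see this but have not fully committed to it, and it is a clean and important feature of the setup. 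Second, for $\errsr{tres}$ you take a genuinely different (and correct, if done carefully) route than the paper: you use the algebraic identity $\errsr{tres}=\mu^{-1}\bigl(\errsr{uno}-\mu^{-2}\bigr)\msmaxpot$ together with the already-obtained bounds on $\errsr{uno}$ and $\msmaxpot$, whereas the paper deliberately integrates $\partial\errsr{tres}$ from infinity, explicitly warning that the ``direct'' route through the bootstrap (bounding $|\ernstform^2/\ernstpot^4|$ by $R^8$ alone) loses an $R$ power unpaired with $\epsilon$. Your route does close, because it never invokes that crude $R^8$ bound; but note that you wrote $\errsr{tres}=\mu^{-1}\errsr{uno}\msmaxpot$, dropping the $-\mu^{-3}\msmaxpot$ contribution, which is in fact the dominant term (of size $\min(\epsilon^{1/2},r^{-1})$) once $\epsilon$ is small. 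With that correction the two approaches yield compatible, indeed slightly sharper, conclusions.
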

\begin{rmk}
The quantities $\errsr{uno}, \errsr{dos}, \errsr{tres}$ are defined in 
Corollary \ref{cor:identities}; the definition and some basic analysis
of $\errsr{cuatro}$ appears in Section \ref{sec:nulldec}; the error
term $\errsr{appone}$ is defined in \eqref{eq:deferrorappone} and
appears in the Main Lemma \ref{lem:mainlemma}; and $\msmaxpot$ is the
potential associated with $\msmax$ as defined in Lemma
\ref{lem:sumpotsisweakerror}. 
\end{rmk}
\begin{proof}
In the following $\lesssim_0, \lesssim_1$ denote that the left hand
side is bounded by the right hand side up to multiplicative constants
$C_0$ and $C_1$ respectively. (The $C_0$, $C_1$ can change from line
to line in the proof.)

For $\errsr{uno}$, by the defining condition \eqref{eq:bootstrapass}
for $r_0(R)$ (upon whose value we will bootstrap), by Lemma
\ref{lem:unoderiserror}, and by the assumption \ref{ass:kerrnewman}, we have 
\[ |\partial \errsr{uno}| \lesssim_1 \epsilon R^8 \]
and the decay condition
\[ |\errsr{uno}| \lesssim_0 r^{-1} \]
which implies by Lemma \ref{lem:techint} 
\[ |\errsr{uno}| \lesssim_0 \min( C_1 \epsilon^{1/2} R^4, r^{-1})~.\]
This immediately implies the same bound for $\errsr{cuatro}$. (See
Section \ref{sec:nulldec}.) 

For $\errsr{dos}$, it follows directly from the definition that 
\[ |\errsr{dos}| \lesssim_1 \frac{\epsilon}{M}~. \]
Similarly, $\errsr{appone}$ can be directly bounded by $\frac{|P_0|}{M^2}C_1 \epsilon$. 

For $\msmaxpot$, its derivative is a direct error term, hence
$|\partial \msmaxpot| \leq C_0 C_1 \epsilon$. Its decay rate is $C_0 /
r$, which implies by Lemma \ref{lem:techint} that
\[ |\msmaxpot| \lesssim_0 \min( C_0 C_1 \epsilon^{1/2}, r^{-1})~. \]

An estimate for $\errsr{tres}$ can be directly obtained from the
estimate for $\msmaxpot$, if we use the bootstrap assumption
\eqref{eq:bootstrapass}. However, this will lead to a term where $R$
is not paired against $\epsilon$, which will cause difficulties for
closing the bootstrap argument. Instead, we estimate it directly from
its derivatives: from the product rule we have that
\[ |\partial \errsr{tres}| \leq C_0 C_1 R^8 \epsilon ~.\]
On the other hand, we know that the asymptotic behaviour of
$\errsr{tres}$ can be read-off from \eqref{eq:errunoasymp} and that of
$\msmaxpot$, that is asymptotically $|\errsr{tres}| \lesssim_0 r^{-1}$.
This implies via our technical lemma again
\[ |\errsr{tres}| \lesssim_0 \min( C_0 C_1 R^4 \epsilon^{1/2}, r^{-1})
~.\]

Lastly we estimate $\mathfrak{A}$. From the asymptotic behaviour
computed in the previous section, we have that at the asymptotic end
$\mathfrak{A} - (\mathfrak{a}/M)^2 \lesssim_0 r^{-1}$. Its derivative
we estimate using Lemma \ref{lem:mainlemma}, where the following
points are observed:
\begin{itemize}
\item The terms $y,z$ are size $\sigma^{-1}$ or $R^2$. 
\item The terms $\nabla y$ and $\nabla z$ are size
$\frac{1}{|\ernstpot|^2}\nabla \bar{\ernstpot}$ or $C_1 R^4$. 
\item The term $\msmaxpot$ we (roughly) estimate by $C_0 C_1
\epsilon^{1/2}$.
\item The term $\errsr{uno}$ we (roughly) estimate by $C_0 C_1
\epsilon^{1/2}R^4$. 
\end{itemize}
This gives us
\begin{align*}
|\nabla_a\mathfrak{A}| &\leq \Bigg|
\frac{4\mu^2}{|\ernstpot|^2}\nabla^bz\Im\left(
\frac{t^c}{\ernstpot^2\bar\maxpotr}\nabla_a\msmax_{cb} -
\frac{\mu}{\ernstpot^2\bar\maxpotr}\msweyl_{dacb}t^ct^d\right)  + 2
\nabla_az
\Im\left(\frac{\bar\kappa\bar\ernstpot}{\mu^2\ernstpot}\msmaxpot\right)
\\
& \qquad + \mu^2 t^2(z\nabla_a y - y\nabla_a z)\Im\errsr{uno} -
\Im\left[\frac{2\errsr{uno}\mu^2}{|\ernstpot|^2}\nabla_az\left(\ernstpot
t^2 + \frac{i}{\mu}\Im(\bar\ernstpot^2\maxpotr)\right)\right] \\
\notag & \qquad - \frac{z\nabla_az}{\mu^2}(|\msmaxpot -
\kappa\ernstpot|^2 -
|\kappa\ernstpot|^2)  +
\Im\left[\frac{4\mu^3}{|\ernstpot|^2\ernstpot^2\bar\maxpotr}\nabla^bz
(\errsr{appone})_{ab}\right] \\
\notag & \qquad +
\Im\left[\frac{4\mu}{|\ernstpot|^2\ernstpot^2}\ernstform_{cb}\Re(\bar\maxpot\msmax_a{}^c)\nabla^bz
-
\frac{\mu\maxpotr\bar\ernstpot}{\ernstpot}\Im(\errsr{uno})\nabla_a\ernstpot^{-1}\right]\Bigg|\\
&\leq C_0 C_1 \left[ R^{12} \epsilon + R^4 \epsilon^{1/2} + 
R^{10} \epsilon^{1/2} + R^{10} \epsilon^{1/2} + R^6 \epsilon^{1/2} +
R^{12} \epsilon + R^{12} \epsilon + R^{8} \epsilon^{1/2} \right]\\
& \leq C_0 C_1 R^{12} \epsilon^{1/2}
\end{align*}
where we used that $\epsilon$ will be small and $R$ large. Integrating
using Lemma \ref{lem:techint} we get
\[ \left|\mathfrak{A} - \left(\frac{\mathfrak{a}}M\right)^2\right| \leq
C_0 \min( C_1 R^{6} \epsilon^{1/4}, r^{-1})~. \]
\end{proof}

Applying the above estimates to Corollary \ref{cor:identities}, we
obtain immediately the following
\begin{cor}\label{cor:almostidentities}
The following almost identities are true:
\begin{subequations}
\begin{align}
\left| \Box y + \frac{2}{M^2} \frac{1 - y}{y^2 + z^2}\right| & \leq
C_0 C_1 R^4\epsilon^{1/2} \\
\left| (\nabla z)^2 - \frac{\frac{\mathfrak{a}^2}{M^2} - z^2}{M^2(y^2
+ z^2)} \right| & \leq C_0 C_1 R^6 \epsilon^{1/4} \\
\left| (\nabla y)^2 - \frac{\frac{\mathfrak{a}^2}{M^2} +
\frac{q^2}{M^2} + y^2 - 2y}{M^2(y^2 + z^2)} \right| &\leq C_0 C_1 R^6
\epsilon^{1/4}
\end{align}
\end{subequations}
\end{cor}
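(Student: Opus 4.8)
The plan is to read all three estimates directly off the exact identities already in hand --- equations \eqref{eq:idenitityc3} and \eqref{eq:identityc2} of Corollary \ref{cor:identities} together with the formula $\mathfrak{A} = \mu^2(y^2+z^2)(\nabla z)^2 + z^2$ of Lemma \ref{lem:mainlemma} --- after substituting the normalisation $\mu = M$, $\kappa = q$ of assumption \ref{ass:kerrnewman} and inserting the bounds of Proposition \ref{prop:errest} for $\errsr{uno}$, $\errsr{dos}$, $\errsr{tres}$, $\msmaxpot$, and $\mathfrak{A} - (\mathfrak{a}/M)^2$. First I would record two crude a priori bounds valid on $\cauchy \setminus \Phi\circ B(r_0(R))$: that $|\ernstpot| \lesssim_1 1$ and $|\maxpot| \lesssim_1 1$. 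These hold because the coordinate derivatives of $\ernstpot$ and $\maxpot$ are controlled by the uniform bounds on $g$, its Christoffel symbols, $t$, and $H$ (via the Jacobi equation and Maxwell's equations), because both potentials vanish at spatial infinity with the decay rates recorded in Section 3.1, and because every point of the region lies within a coordinate distance depending only on the fixed radius $R^*$ of the asymptotic end $\{r \geq R^*\}$; one integrates the derivative bounds inward along paths that avoid the black-hole regions, exactly as in Lemma \ref{lem:techint}. Together with the bootstrap hypothesis $|\ernstpot| \geq R^{-2}$ from \eqref{eq:bootstrapass} this yields $1/|\ernstpot|^2 \leq R^4$ and $|\ernstpot|^2/M^2 \lesssim_1 1$ throughout, while $|t^2| \lesssim_1 1$ is immediate from the uniform bounds on $g$ and $t$.

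For the first estimate I would take \eqref{eq:idenitityc3} with $\mu = M$ and bound its right-hand side $2\Re(\ernstpot(1+\bar\ernstpot)\errsr{uno} + \ernstpot^{-2}\errsr{dos} - 8\ernstpot\bar\maxpot\errsr{tres})$ termwise using Proposition \ref{prop:errest} and the a priori bounds above: the first and third terms contribute $\lesssim_0 C_1 \epsilon^{1/2} R^4$ and the middle one $\lesssim_0 C_1 \epsilon R^4 \leq C_1 \epsilon^{1/2} R^4$, which is the claimed bound. For the second estimate I would solve $\mathfrak{A} = M^2(y^2+z^2)(\nabla z)^2 + z^2$ for $(\nabla z)^2$ and write $(\nabla z)^2 - \frac{(\mathfrak{a}/M)^2 - z^2}{M^2(y^2+z^2)} = \frac{|\ernstpot|^2}{M^2}\big(\mathfrak{A} - (\mathfrak{a}/M)^2\big)$, which by Proposition \ref{prop:errest} and $|\ernstpot|^2/M^2 \lesssim_1 1$ is $\lesssim_0 C_1 \epsilon^{1/4} R^6$. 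For the third estimate I would use \eqref{eq:identityc2} with $\mu = M$, $\kappa = q$ to express $(\nabla y)^2$ as $(\nabla z)^2$ plus the explicit quotient $\frac{y^2 + z^2 - 2y + q^2/M^2}{M^2(y^2+z^2)}$ plus the error terms $\frac{|\msmaxpot|^2 - 2\Re(q\ernstpot\bar\msmaxpot)}{M^4}$ and $t^2 \Re\errsr{uno}$; substituting the second estimate telescopes the numerators $(\mathfrak{a}/M)^2 - z^2$ and $y^2 + z^2 - 2y + q^2/M^2$ into precisely $(\mathfrak{a}/M)^2 + q^2/M^2 + y^2 - 2y$, while the two genuine error terms are bounded via $|\msmaxpot| \lesssim_0 C_1 \epsilon^{1/2}$, $|\ernstpot| \lesssim_1 1$, $|t^2| \lesssim_1 1$ and the $\errsr{uno}$ bound by $\lesssim_0 C_1 \epsilon^{1/2} R^4$, which is dominated by the $\lesssim_0 C_1 \epsilon^{1/4} R^6$ error inherited from the $(\nabla z)^2$ estimate.

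I do not expect a serious obstacle: all the analytic work is already carried out in Proposition \ref{prop:errest} and Lemma \ref{lem:mainlemma}. The one point demanding genuine care is the bookkeeping of the powers of $R$ and $\epsilon$ --- in each line one must check that the bound asserted in the statement ($R^4\epsilon^{1/2}$ for the first, $R^6\epsilon^{1/4}$ for the other two) really dominates every contributing term once $\epsilon$ is small and $R$ large, which follows at once from the inequalities above. A secondary nuisance is verifying that the crude upper bounds $|\ernstpot|, |\maxpot| \lesssim_1 1$ are legitimate on the whole of $\cauchy \setminus \Phi\circ B(r_0(R))$, i.e.\ that no point of this region is ``coordinate-far'' from the asymptotic end; this holds because $r_0(R) < R^*$ with $R^*$ fixed once and for all.
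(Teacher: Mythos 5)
Your proof is correct and follows precisely the route the paper takes: the paper dispatches the corollary with the single remark ``Applying the above estimates to Corollary~\ref{cor:identities}, we obtain immediately the following,'' and what you have written out is exactly the bookkeeping behind that remark --- substituting $\mu = M$, $\kappa = q$, reading $(\nabla z)^2$ off the definition of $\mathfrak{A}$ via the Proposition~\ref{prop:errest} bound on $\mathfrak{A}-(\mathfrak{a}/M)^2$, telescoping the numerators in \eqref{eq:identityc2}, and controlling the coefficients $\ernstpot$, $\ernstpot^{-2}$, $\maxpot$, $t^2$ by the bootstrap bound $|\ernstpot|\geq R^{-2}$ and the crude $O_1(1)$ bounds obtained by integrating from the asymptotic end. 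The only content you add beyond the paper's terse statement is the explicit justification that $|\ernstpot|,|\maxpot|\lesssim_1 1$ throughout the region, which is a legitimate and worthwhile detail to record.
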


\subsection{Closing the bootstrap}
To close the bootstrap, that is, to obtain the improved decay estimate
$|\ernstpot| \geq 2 \tilde{R}^{-2}$ for sufficiently small $\epsilon$ and
sufficiently large $\tilde{R}$ on the domain $E_{\tilde{R}} :=
\cauchy \cap \Phi\left[ B(R^*)
\setminus B(r_0(\tilde{R}))\right]$, it suffices to consider the domain
$W_{\tilde{R}} := E_{\tilde{R}} \cap \{ |\ernstpot| \leq 4 \tilde{R}^{-2} \}$.  
Consider first \eqref{eq:identity2p}. By studying the asymptotic
limit, we have that the constant term is 0. On $W_{\tilde{R}}$ then we
have 
\[ \left|t^2 + 1\right| \leq C_0 \tilde{R}^{-2} + C_0 C_1 \epsilon^{1/2}~. \]
So for sufficiently large $\tilde{R} > 3R^*$ (now depending on $C_0$) and
sufficiently small $\epsilon$ (now depending on $C_0$ and $C_1$) we
have that $t^2 < -1/2$. In particular the Killing vector field is
time-like. Now using that $t(y) = t(z) = 0$, we have that $\nabla y$
and $\nabla z$ are \emph{space-like} in $W_{\tilde{R}}$. 

Since $E_{\tilde{R}}$ has compact closure, we have that
$W_{\tilde{R}}$ has compact closure. Using that $t^2 \leq -1/2$ on
this set, we have that $\sum_{i = 1}^3|\partial_{i}\ernstpot^{-1}| \leq
C_1\left[ |(\nabla z)^2| + |(\nabla y)^2|\right]$. The right hand side
we bound by Corollary \ref{cor:almostidentities}, and the fact that in
$W_{\tilde{R}}$ we have the upper bound $(y^2 + z^2)^{-1} =
|\ernstpot|^2 \leq 16 \tilde{R}^{-4}$. This leads to 
\begin{equation}
\sum_{i = 1}^3 |\partial_{i}\ernstpot^{-1}| \leq C_0 C_1 ( 1 +
\tilde{R}^{-4} +
\tilde{R}^{6}\epsilon^{1/4} )
\end{equation}
so by the fundamental theorem of calculus, integrating from the boundary
of $W_{\tilde{R}}$ 
where $|\ernstpot| \geq 4 \tilde{R}^{-2}$, 
\begin{align*}
|\ernstpot^{-1}| &\leq \frac14 \tilde{R}^2 + C_0 C_1(1 + \tilde{R}^{-4}
+ \tilde{R}^6\epsilon^{1/4}) R^* \\
& \leq \frac14 \tilde{R}^2 + C_0 C_1
\tilde{R} + C_0 C_1 \tilde{R}^{-3} + C_0 C_1 \tilde{R}^{7}
\epsilon^{1/4} 
\end{align*}
where the $R^*$ denotes the maximum coordinate distance one has to
integrate (since $W_{\tilde{R}} \subseteq \Phi\circ B(R^*)$). By
choosing $\tilde{R}$ sufficiently large, and 
\begin{equation}\label{eq:relsizeepsilon}
\epsilon^{1/4} \ll \tilde{R}^{-6}~,\end{equation}
we can bound the right hand side
\begin{equation}
|\ernstpot^{-1}| \leq \frac12 \tilde{R}^2
\end{equation}
as desired. 

\begin{rmk}
The value $\tilde{R} > R^* > R_0$ is chosen to be sufficiently large
relative to the constants $C_0$ and $C_1$ measuring the sizes of the
asymptotic $M, q, \mathfrak{a}$ and uniform bounds on the metric etc.
The value $\epsilon$ is now required to be sufficiently small relative
to $C_0$, $C_1$, and $\tilde{R}$, which implies that $\epsilon$ only
needs to be sufficiently small relative to $C_0$ and $C_1$. See also
assumption \ref{ass:kerrnewman}. 
\end{rmk}
\begin{rmk}\label{rmk:fixingr}
After the bootstrap argument above, $\tilde{R}$ should be considered
a fixed constant depending on $C_0$ and $C_1$. That is to say, it is
understood that the right hand sides of the almost identities in 
Corollary \ref{cor:almostidentities} can be made arbitrarily small by
choosing sufficiently small $\epsilon$. 
\end{rmk}

\section{Proof of the Main Theorem}
Now that we know the function $y$ can be smoothly defined on the
entirety of our partial Cauchy surface $\cauchy$ and extended smoothly 
past the horizons $\horizon^0$, 
we can study the local behaviour of $y$ near a bifurcate sphere
$\horizon^0_i$. We
will, in fact, demonstrate that 
\begin{itemize}
\item $y$ is almost constant on the bifurcate sphere, and
\item $y$ increases as we move off the horizon. 
\end{itemize}
One expects that, given that the local deviation of our space-time
from the Kerr-Newman solutions is not too large (as required by
assumption \ref{ass:kerrnewman}; see also Theorem
\ref{thm:character}), the constant which approximates $y$ on the
bifurcate sphere is $\frac{1}{M}\left( M + \sqrt{M^2 -
\mathfrak{a}^2 - q^2}\right)$, the value taken by $y$ on the
corresponding Kerr-Newman black hole. For the Kerr-Newman solution, this
value is also the largest value of $y$ at which the function $y$ can
attain a critical point; this is captured in Lemma
\ref{lem:exactmars}. In the case under consideration in this paper, we
instead use the approximate identities of Corollary
\ref{cor:almostidentities} to conclude that at critical points of the
function $y$, the value of $y$ cannot be too much greater than its
value on the horizon. Together with the above two bullet points and a
mountain-pass lemma, we can derive a conclusion which morally states 
that $y$ cannot have a
critical point in the domain of outer communications, and hence there
must only be one black hole. 

In the sequel we implement the above heuristics in detail.

\subsection{Near horizon geometry}
We wish to study the behaviour of $y$ near the bifurcate spheres;
without loss of generality we consider a small neighborhood of
$\horizon_1^0$ in $\spacetime$ (see Assumption
\ref{ass:smoothbifsphere} for definitions). We begin by establishing
a double null foliation of the neighborhood and briefly recalling some
implications of a non-expanding horizon (for more detailed discussion
please see \cite{AlIoKl2010, AlIoKl2010a, Wong2009a}). In the sequel we will
always implicitly work in a small neighborhood of $\horizon_1^0$,
whose smallness depends on $M,q,\mathfrak{a}$, and the uniform bounds
on the metric, its inverse, the Christoffel symbols, and the Faraday
tensor in Assumption \ref{ass:kerrnewman}, but independent of the
smallness parameter $\epsilon$.

Along $\horizon_1^\pm$ let $L^\pm$ be future-directed geodesic generators of
the respective null hypersurfaces. We choose to normalise 
$g(L^+,L^-) = -1$ on $\horizon_1^0$. Along $\horizon_1^\pm$ we define
the functions $u^\mp$ by $L^\pm(u^\mp) = 1$ and $u^\mp
|_{\horizon_1^0} = 0$. The level sets of $u^\mp$ are topological
spheres, and are space-like surfaces. Extend $L^\mp$ to
$\horizon_1^\pm$ to be the unique future-directed null vector orthogonal to
the level sets of $u^\mp$ and satisfying $g(L^-, L^+) = -1$. Now extend
$L^\mp$ off $\horizon_1^\pm$ geodesically, and declare $L^\pm(u^\pm) =
0$. This defines a double-null foliation $u^\pm$ with associated null
vector fields $L^\pm$ in the neighborhood of $\horizon_1^0$. 

Along $\horizon_1^\pm$ the null second fundamental
form $g(\covD_X L^\pm, Y) = - g(L^\pm,\covD_X Y)$ 
(for $X,Y$ vector fields tangent to $\horizon_1^\pm$) vanishes
identically due to the horizons being non-expanding (see,
e.g.~\cite[\S 2.5]{Wong2009a}). This implies that $\preernst\cdot L^\pm \propto
L^\pm$ along the horizons:
\[ \Re \preernst(X,L^\pm) = g(\nabla_X t, L^\pm) = 0~,\]
and the imaginary part follows once it is realised that the Hodge dual
of $L^\pm\wedge X$ can be written as $L^\pm \wedge Y$ for some $Y$
also tangent to $\horizon_1^\pm$. Furthermore, Raychaudhuri's equation
then guarantees that $\maxw\cdot L^\pm \propto L^\pm$ along the
horizon, using that $\Ricci(L^\pm, L^\pm) = (\maxw\cdot L^\pm)_a
(\bar\maxw\cdot L^\pm)^a$ \cite[\S 2.5]{Wong2009a}.
Together these imply (via the definition \eqref{eq:def:ernstform}) 
that $L^\pm$ are in fact the null principal directions of 
$\ernstform$ on $\horizon_1^0$.  

Furthermore, observe that since $t^a$ is tangent to $\horizon_1^\pm$
which intersect transversely, we must have $t^a$ is tangent to
$\horizon_1^0$. This implies that $g(L^\pm,t) = 0$ along
$\horizon_1^0$. 

\begin{prop}\label{prop:yhorizonvalue}
For $\epsilon$ sufficiently small, along $\horizon_1^0$, 
\[ \left| M y - \left( M + \sqrt{M^2 - \mathfrak{a}^2 - q^2}\right) \right| \lesssim \epsilon^{1/4}~.\]
\end{prop}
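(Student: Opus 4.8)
The plan is to evaluate the almost-identities of Corollary \ref{cor:almostidentities} on the bifurcate sphere $\horizon_1^0$ and combine them algebraically. The key geometric input, established in the ``near horizon geometry'' discussion above, is that $t^a$ is tangent to $\horizon_1^0$, so $g(L^\pm,t)=0$ there, and that $L^\pm$ are the principal null directions of $\ernstform$ on $\horizon_1^0$. Feeding this into the null decomposition of Section \ref{sec:nulldec}, I expect that $\nabla y$ and $\nabla z$ are, up to $\errsr{cuatro}$-controlled error, proportional to $\lbar\cdot t\, l_a - l\cdot t\,\lbar_a$ and to $\lcsym_{bacd}t^b\lbar^cl^d$ respectively; but since $l\cdot t = \lbar\cdot t = 0$ on $\horizon_1^0$, the leading terms of both $\nabla y$ and $\nabla z$ vanish there. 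Hence on $\horizon_1^0$,
\[ (\nabla y)^2 = O(\epsilon^{1/2})~, \qquad (\nabla z)^2 = O(\epsilon^{1/2})~, \]
with implicit constants absorbing the now-fixed $\tilde R, C_0, C_1$ (Remark \ref{rmk:fixingr}), after also checking — via the $\errsr{uno}$ estimate of Proposition \ref{prop:errest} and the bound $|\errsr{cuatro}|\le\mu|\errsr{uno}|$ — that the cross terms are genuinely small and that $y,z$ themselves are bounded on $\horizon_1^0$ (so that the $\sigma^{-1}$-type factors multiplying errors do not blow up; boundedness of $z$ follows from the ``almost bounded'' consequence of Lemma \ref{lem:mainlemma} noted at the end of Section \ref{sec:nulldec}, and boundedness of $y$ from Corollary \ref{cor:almostidentities} together with the fact that $y$ is bounded below away from where $\ernstpot$ is large).

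Next I substitute $(\nabla z)^2 = O(\epsilon^{1/2})$ into the second estimate of Corollary \ref{cor:almostidentities} to get
\[ \left| z^2 - \frac{\mathfrak{a}^2}{M^2} \right| \le C\,\epsilon^{1/4}(y^2+z^2) + C\,\epsilon^{1/4}~, \]
and $(\nabla y)^2 = O(\epsilon^{1/2})$ into the third to get
\[ \left| y^2 - 2y + \frac{\mathfrak{a}^2}{M^2} + \frac{q^2}{M^2} \right| \le C\,\epsilon^{1/4}(y^2+z^2) + C\,\epsilon^{1/4}~. \]
Subtracting, the $\mathfrak{a}^2/M^2$ terms cancel and I obtain $\left| y^2 - 2y + q^2/M^2 + z^2 \right| \le C\epsilon^{1/4}(1+y^2+z^2)$; but $z^2$ is pinned to $\mathfrak{a}^2/M^2 + O(\epsilon^{1/4})$ by the first inequality, so
\[ \left| y^2 - 2y + \frac{q^2 + \mathfrak{a}^2}{M^2} \right| \le C\,\epsilon^{1/4}~. \]
Since the non-extremality hypothesis $q^2+\mathfrak{a}^2 < M^2$ guarantees the quadratic $Y^2 - 2Y + (q^2+\mathfrak{a}^2)/M^2$ has two distinct real roots $Y_\pm = 1 \pm \sqrt{1 - (q^2+\mathfrak{a}^2)/M^2} = \frac1M\!\left(M \pm \sqrt{M^2-\mathfrak{a}^2-q^2}\right)$, the displayed bound forces $My$ to be within $O(\epsilon^{1/4})$ of either $MY_+$ or $MY_-$; a standard quantitative factorisation argument shows $|y - Y_\pm| \lesssim \epsilon^{1/4}$ for one of the signs.

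The remaining task — and the main obstacle — is to rule out the spurious root $Y_-$, i.e.\ to show $y$ is near the \emph{larger} root $1 + \sqrt{1-(q^2+\mathfrak{a}^2)/M^2}$ rather than the smaller. This is where one must genuinely use global information rather than pointwise algebra: from the asymptotic expansion $y = r/M + O_4(1)$ (Section on asymptotic identities) $y$ is large near infinity, and $y$ is continuous on $\bar\outercom$ by Proposition \ref{prop:domdefy}, so along a path from $\horizon_1^0$ out to the asymptotic region $y$ must exceed the value $1$ (the location of the local minimum of the quadratic) somewhere; combined with the fact that near $\horizon_1^0$ the quantity $y^2 - 2y + (q^2+\mathfrak{a}^2)/M^2$ is $O(\epsilon^{1/4})$-close to $0$ and hence negative-or-tiny, one argues that on $\horizon_1^0$ the function $y$ sits on the branch $y \gtrsim 1$, i.e.\ near $Y_+$. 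Concretely I would examine $\Box y$: by Corollary \ref{cor:almostidentities}, $\Box y \approx -\frac{2}{M^2}\frac{1-y}{y^2+z^2}$, so the sign of $1-y$ controls whether $y$ is sub- or super-harmonic, and a maximum-principle / boundary comparison on a neighbourhood of $\horizon_1^0$ inside $\outercom$ — using that $y$ grows toward infinity and that $\nabla y$ nearly vanishes on $\horizon_1^0$ so $\horizon_1^0$ is approximately a critical level — pins down the branch. Once the branch is fixed, $|My - (M+\sqrt{M^2-\mathfrak{a}^2-q^2})| \lesssim \epsilon^{1/4}$ follows, completing the proof.
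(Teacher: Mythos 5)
Your overall route — pin $y$ on the bifurcate sphere to one of the two roots $y_\pm$ of $Y^2-2Y+(\mathfrak{a}^2+q^2)/M^2$, then use the sign of $\Box y$ to eliminate $y_-$ — is the same as the paper's, but there is a genuine error in the first step and the second is left too vague. The error: you assert that on $\horizon_1^0$ the ``leading terms of both $\nabla y$ and $\nabla z$ vanish'' because $l\cdot t=\lbar\cdot t=0$, and hence $(\nabla z)^2=O(\epsilon^{1/2})$. This is true for $\nabla y$, whose leading contribution $\pm\mu^{-1}(\lbar\cdot t\, l_a - l\cdot t\,\lbar_a)$ does vanish, but it is false for $\nabla z$: its leading term $\pm\mu^{-1}\lcsym_{bacd}t^b\lbar^cl^d$ carries no $l\cdot t$ or $\lbar\cdot t$ factor and is nonzero whenever $t,L^+,L^-$ are independent, i.e.\ everywhere on the bifurcate sphere (its square is essentially $t^2/\mu^2=O(1)$, with $t$ spacelike and nonvanishing there). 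Even in exact Kerr--Newman, $z=-a\cos\theta/M$ varies over $\horizon^0$, so $\nabla z\neq 0$ and $z^2$ ranges from $0$ to $\mathfrak{a}^2/M^2$ rather than being pinned to $\mathfrak{a}^2/M^2$; your intermediate estimate $|z^2-\mathfrak{a}^2/M^2|\lesssim\epsilon^{1/4}$ is simply false. The detour is also unnecessary: the third almost-identity of Corollary \ref{cor:almostidentities} already carries the $\mathfrak{a}^2/M^2+q^2/M^2$ term (courtesy of Lemma \ref{lem:mainlemma}), so $(\nabla y)^2=O(\epsilon^{1/2})$ together with the bootstrap bound on $y^2+z^2$ gives $|y^2-2y+(\mathfrak{a}^2+q^2)/M^2|\lesssim\epsilon^{1/4}$ directly, with no reference to $\nabla z$ or $z^2$ at all.

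For the branch selection, your instinct to look at $\Box y\approx -\tfrac{2}{M^2}\tfrac{1-y}{y^2+z^2}$ and use the sign of $1-y$ is exactly the right ingredient, but ``maximum-principle / boundary comparison'' is not a proof here: $\Box$ is a wave operator, not elliptic, and ``$\horizon_1^0$ is approximately a critical level'' by itself does not distinguish $y_+$ from $y_-$. What is actually needed is the \emph{second-order} normal behaviour at the bifurcate sphere. The paper achieves this by writing $y=\tilde y+u^+u^-\hat y$ in the double-null foliation (with $\tilde y$ constant in $u^+$ and agreeing with $y$ on $\horizon_1^-$), computing $\Box(y-\tilde y)=-2\hat y$ on $\horizon_1^0$, and thereby bounding $\hat y$ from below by a positive constant when $y\approx y_-$. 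Since $u^+u^-<0$ in $\outercom$, this forces $y$ to strictly \emph{decrease} off $\horizon_1^0$, so $y|_{\cauchy\cap\outercom}$ attains an interior minimum at a value strictly below $y_-$; that interior minimum is a critical point of $y$ (using transversality of $t$ to $\cauchy$), and Corollary \ref{cor:almostidentities} shows critical values must be within $O(\epsilon^{1/4})$ of $y_\pm\geq y_-$, a contradiction. You would need to supply this quantitative second-order step to close the argument.
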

\begin{rmk}\label{rmk:rootquadratic}
The quadratic polynomial $y^2 - 2y + \frac{\mathfrak{a}^2 + q^2}{M^2}$
plays a recurring role in our argument. We note that the two roots to
the polynomial are
\[ y_{\pm} = \frac{1}{M} \left( M\pm \sqrt{M^2 - \mathfrak{a}^2 -
q^2}\right)~.\]
That we need to ensure the existence of two distinct roots, one larger
than, and one smaller than $1$ is why sub-extremality is assumed in
\ref{ass:asympflat}. (Of course, the extremal Kerr-Newman black holes
have very different horizon geometry, and we should not expect an
analysis based on the bifurcate spheres to carry over in that case.)
\end{rmk}
\begin{rmk}
The proposition and its proof are largely the same as Lemma 4.1 in
\cite{AlIoKl2010}; we sketch the proof here for completeness.
\end{rmk}
\begin{proof}
Since $L^\pm$ along $\horizon_1^0$ are the null principal directions
of $\ernstform$, we can apply the results of Section
\ref{sec:nulldec}. In particular, we have that the orthogonality of
$L^\pm$ to the Killing vector field $t$ on the horizons
implies the \emph{exact identity} (that the following two equations do
not contain error terms is very important in the sequel)
\begin{subequations}
\begin{align}
L^+(y) = L^+(z) &= 0 \text{ on } \horizon_1^+~,\\
L^-(y) = L^-(z) &= 0 \text{ on } \horizon_1^-~.
\end{align}
\end{subequations}
These imply that on $\horizon_1^0$
\begin{equation}\label{eq:erryhorizon}
\nabla_a y = \Re \left[ i\errsr{cuatro}
\lcsym_{bacd}t^b(L^-)^c(L^+)^d\right]
\end{equation}
is of size $\epsilon^{1/2}$ by Proposition \ref{prop:errest} and
Remark \ref{rmk:fixingr}. This implies that $(\nabla y)^2 =
O(\epsilon^{1/4})$. So using Corollary \ref{cor:almostidentities} we
obtain that along the horizon 
\[ \frac{ \frac{\mathfrak{a}^2 + q^2}{M^2} + y^2 - 2y}{M^2(y^2 + z^2)}
= O(\epsilon^{1/4})~. \]
By the bootstrap argument, we have that $(y^2 + z^2)^{-1}$ is bounded
above by a constant depending only on $C_0, C_1$ (see Remark
\ref{rmk:fixingr} again), hence we have that on $\horizon_1^0$ 
\[ y^2 - 2y + \frac{\mathfrak{a}^2 + q^2}{M^2} = O(\epsilon^{1/4})~.\]

Observe further that by \eqref{eq:erryhorizon}, if $X,Y$ are vector fields
tangent to $\horizon_1^0$, we have that 
\[ X(Y(y)) = \Re \left[ i X(\errsr{cuatro}) \lcsym(t,Y,L^-,L^+) + i
\errsr{cuatro} X(\lcsym(t,Y,L^-,L^+))\right]~.\]
From the definition of $\errsr{cuatro}$ in Section \ref{sec:nulldec},
we see immediately that $\covD_a \errsr{cuatro}$ can be controlled by
$\errsr{uno}$ and $\covD_a \errsr{uno}$. That is to say, we have that
the Hessian of $y$ \emph{along} $\horizon_1^0$ is also of order
$\epsilon^{1/4}$. 

This gives two possibilities: either $|y - y_+| \lesssim
\epsilon^{1/4}$ or $|y - y_-|\lesssim \epsilon^{1/4}$; it suffices to
eliminate the second alternative. To do so we consider the first
inequality in Corollary \ref{cor:almostidentities}. Provided
$\epsilon$ is sufficiently small (especially compared to $\sqrt{M^2 -
\mathfrak{a}^2 - q^2}$), that $|y - y_-| \lesssim \epsilon^{1/4}$
along $\horizon_1^0$ would imply $\Box y < 0$ in a small neighborhood
of the bifurcate sphere. We use this fact to show that $y$ must
decrease as we move off the horizon.  

Define $\tilde{y}$ by setting $\tilde{y} = y$ along $\horizon_1^-$,
and requiring that $L^+\tilde{y} = 0$. This guarantees that in a small
neighborhood of $\horizon_1^0$, $\tilde{y}$ is bounded by
$\sup_{\horizon_1^0} y$. Using that the Hessian of $y$ tangent to
$\horizon_1^0$ is also an error term, this implies that $|\Box
\tilde{y}| \lesssim \epsilon^{1/4}$; that is to say, the main
contribution to $\Box y$ comes from $L^-(L^+ y)$. Using that $y$ and
$\tilde{y}$ agree on $\horizon_1^{\pm}$, we can write $y = \tilde{y} +
u^+ u^- \hat{y}$ where $\hat{y}$ is a smooth function in a small
neighborhood of $\horizon_1^0$. Furthermore, on $\horizon_1^0$ we have
that $\Box (y - \tilde{y}) = - 2 \hat{y}$, hence along $\horizon_1^0$ we
have 
\[  \left|\hat{y} - \frac{1-y}{M^2(y_-^2 + z^2)} \right| \lesssim
\epsilon^{1/2} \]
and in particular for all $\epsilon$ sufficiently small 
\[ \hat{y}|_{\horizon_1^0} \geq \frac{1- y_-}{2M^2(y^2 + z^2)} > 2C_h >
0 ~.\]
By continuity, on a sufficiently small neighborhood of $\horizon_1^0$
we have that $\hat{y} \geq C_h$. Now using that in the domain of
outer communications, by construction we have $u^+u^- < 0$, this
implies that
\[ y \leq \tilde{y} + u^+u^- \hat{y} \leq y_- + O(\epsilon^{1/4}) -
|u^+u^-| C_h \]
in the small neighborhood of $\horizon_1^0$. Now consider all points
in this neighborhood for which $-u^+u^- \geq \delta > 0$ for some
fixed $\delta$. Then for all $\epsilon$ sufficiently small, 
at these points we have $y < y_- - \frac12 C_h\delta$. By the asymptotic behaviour of $y$ (growing
to $+\infty$), this implies that $y |_{\cauchy\cap\outercom}$ achieves
a minimum value that is at most $y_- - \frac12 C_h \delta$. But this implies
(using that $t^a$ is transverse to $\cauchy\cap\outercom$) that $y$
attains a critical point at a value $y_- - \frac12 C_h\delta$, which is
impossible for sufficiently small $\epsilon$ by Corollary
\ref{cor:almostidentities}. This concludes the proof that $y$ must be
close to $y_+$ on the horizon. 
\end{proof}

\begin{rmk}\label{rmk:increasingoffhorizon}
The same argument in the contradiction step of the proof can be used
to show that, given $y$ is close to $y_+$ on the horizon,  there 
exists some topological sphere in $\cauchy\cap\outercom$ that 
encloses $\horizon_1^0$ and some $\delta > 0$ ($\delta$ depends on
$M,q,\mathfrak{a}$, and the uniform bounds on the metric, its inverse,
its Christoffel symbols, and the Faraday tensor) such that restrict to
that sphere $y > y_+ + 2\delta > \sup_{\horizon_1^0} y + \delta$
provided $\epsilon$ is sufficiently small. 

In particular, we define $\hat{y}$ as above. But now using that $y
\approx y_+$ on the horizon we have that for all $\epsilon$
sufficiently small 
\[ \hat{y} |_{\horizon_1^0} \leq \frac{ 1 - y_+}{2M^2(y_+^2 + z^2)} <
-2C_h < 0 \]
which allows us to conclude that
\[ y \geq \tilde{y} + u^+ u^- \hat{y} \geq y_+ - O(\epsilon^{1/4}) +
|u^+u^-| C_h ~.\]
Choosing $2\delta$ sufficiently small to be attained by $|u^+u^-|C_h$,
then choosing $\epsilon$ even smaller we get that $y$ would increase
to at least $y_+ + \delta$ off the horizon. 
\end{rmk}

\subsection{Concluding the proof}
Having established our technical results about the behaviour of $y$
near the horizon sphere $\horizon_1^0$ (and hence by symmetry for any
$\horizon_i^0$), we conclude our main theorem by appealing to a finite
dimensional mountain pass lemma (see Appendix \ref{appendix2}). 

\begin{proof}[Proof of Theorem \ref{thm:mainthm}]
Assume, for contradiction, that there are at least two black holes. By
Proposition \ref{prop:yhorizonvalue} and Remark
\ref{rmk:increasingoffhorizon} we know that for sufficiently small
$\epsilon$, we can find $\delta > 0$ such that $y|_{\horizon^0} < y_+
+ \delta$ and there exists a topological sphere $S\subset\cauchy\cap\outercom$ 
(using that we have a lower bound on the coordinate-distance between
$\horizon_1^0$ and $\horizon_2^0$; see \ref{ass:topology}) such that $\horizon_1^0$ and $\horizon_2^0$ are in disjoint subsets of
$\cauchy \setminus S$ and such that $y|_S > y_+ + 2 \delta$. 
By the
asymptotic growth of $y$ we know that $y$ satisfies the Palais-Smale
condition. So applying Lemma \ref{lem:mpt} to the function $y$ on the
manifold $(\cauchy\cap\outercom)\cup \horizon^0$, $y$ attains a 
critical point in
$\cauchy\cap\outercom$ where the value of $y$ is at least $y_+ +
2\delta$. Using that $t^a$ is transverse to $\cauchy\cap\outercom$,
again we have that $\nabla y = 0$ there. For sufficiently small
$\epsilon$ this leads to a contradiction with Corollary
\ref{cor:almostidentities} together with Remark
\ref{rmk:rootquadratic}.  
\end{proof} 

\appendix

\numberwithin{equation}{section}

\section{Proof of the Main Lemma}\label{appendix1}
In this appendix we shall give the proof of Lemma \ref{lem:mainlemma},
which claims that $\mathfrak{A} = \mu^2(y^2+z^2)(\nabla z)^2 + z^2$ is
``almost constant''. We start directly with the definition
\begin{equation}\label{eq:nablaA} \nabla_a \mathfrak{A} = 2\mu^2(y\nabla_ay + z\nabla_az)(\nabla z)^2
+ 2z\nabla_a z + 2\mu^2(y^2 + z^2)\nabla^bz \nabla_a\nabla_b
z~.\end{equation}
The focus will be on the third term in the expansion, which contains
the Hessian of $z$. Therefore we compute $\nabla^2_{a,b}\ernstpot^{-1}$.
\[
\nabla_a\nabla_b \ernstpot^{-1} = -\nabla_a
(\ernstpot^{-2}\nabla_b\ernstpot) = 2\ernstpot\nabla_a\ernstpot^{-1}\nabla_b\ernstpot^{-1} -
\ernstpot^{-2}\nabla_a\nabla_b\ernstpot
\]
Next use
\begin{align*}
\nabla_a\nabla_b\ernstpot &= \nabla_a \ernstform_{cb}t^c \\
& = \ernstform_{cb}\nabla_at^c + \frac{t^c}{\bar\maxpotr}\left(
2\nabla_a(\bar\maxpot\msmax_{cb}) -
2\kappa\nabla_a\bar\maxpot\ernstform_{cb}\right)\\
& \qquad -\frac{2\mu t^ct^d}{\bar\maxpotr}\left(\weyls_{dacb}
+ \left(\Ricci_d{}^eg_a{}^f -
\Ricci_a{}^e g_d{}^f\right)\Iasd_{efcb}\right)
\end{align*}
We can expand $\Iasd$ by the definition, use Einstein's equation
\eqref{eq:einsteq} to replace the Ricci tensor, and use the
definitions \eqref{eq:def:msmax} and
\eqref{eq:def:msweyl} to obtain that
\begin{align*}
\nabla_a\nabla_b \ernstpot &-
\frac{2t^c}{\bar\maxpotr}\bar\maxpot\nabla_a\msmax_{cb} +
\frac{2\mu}{\bar\maxpotr}\msweyl_{dacb}t^ct^d \\
&= \frac12 \ernstform_{cb}\preernst_a{}^c +
\frac12\ernstform_{cb}\bar\preernst_a{}^c +
\frac{4\mu}{\bar\maxpotr}\nabla_a\bar\maxpot\nabla_b\maxpot 
+ \frac{3}{\ernstpot}(\ernstform\tilde{\otimes}\ernstform)_{dacb}t^dt^c
\\
& \quad - \frac{2\mu t^ct^d}{\bar\maxpotr}\left( \maxw_{dl}\bar\maxw_c{}^l
  g_{ab} - \maxw_{dl}\bar\maxw_b{}^lg_{ac} -
\maxw_{al}\bar\maxw_c{}^lg_{db} +
\maxw_{al}\bar\maxw_b{}^lg_{cd}\right)\\
&\quad - \frac{2\mu t^ct^d}{\bar\maxpotr}\left(
i\bar\maxw^{el}\lcsym_{eacb}\maxw_{dl}-
i\maxw^{el}\lcsym_{edcb}\maxw_{al}\right)~.
\end{align*}
For the terms in the last line, we can use the identity for self-dual
two-forms
\begin{equation}\label{eq:sdhdidentity}
i\bar{\mathcal{X}}^{kh}\lcsym_{wyzk} = g^h_w\bar{\mathcal{X}}_{yz} +
g^h_y\bar{\mathcal{X}}_{zw} + g^h_z\bar{\mathcal{X}}_{wy} 
\end{equation}
which gives
\begin{align*}
- \frac{2\mu t^ct^d}{\bar\maxpotr}&\left(
i\bar\maxw^{el}\lcsym_{eacb}\maxw_{dl}-
i\maxw^{el}\lcsym_{edcb}\maxw_{al}\right)\\
&= - \frac{2\mu t^ct^d}{\bar\maxpotr}\left( \maxw_{ac}\bar\maxw_{bd} -2\maxw_{da}\bar\maxw_{cb}
- \maxw_{db}\bar\maxw_{ac} - \maxw_{dc}\bar\maxw_{ba} +
  \maxw_{ab}\bar\maxw_{dc} \right)
\end{align*}
where by (anti)symmetry, after the contraction against $t^ct^d$, the
last two terms in the parenthesis evaluate to zero. Hence we can
simplify
\begin{align*}
\nabla_a\nabla_b \ernstpot &-
\frac{2t^c}{\bar\maxpotr}\bar\maxpot\nabla_a\msmax_{cb} +
\frac{2\mu}{\bar\maxpotr}\msweyl_{dacb}t^ct^d \\
&= \frac12 \ernstform_{cb}\preernst_a{}^c +
\frac12\ernstform_{cb}\bar\preernst_a{}^c +
\frac{4\mu}{\bar\maxpotr}\nabla_a\bar\maxpot\nabla_b\maxpot 
+
\frac{3}{\ernstpot}(\ernstform\tilde{\otimes}\ernstform)_{dacb}t^dt^c
\\
& \quad - \frac{2\mu}{\bar\maxpotr}\left(
\nabla\maxpot\cdot\nabla\bar\maxpot g_{ab} +
\maxw_{al}\bar\maxw_b{}^lt^2
- \nabla^l\maxpot\bar\maxw_{bl}t_a -
  \nabla^l\bar\maxpot\maxw_{al}t_b\right)\\
&\quad - \frac{2\mu}{\bar\maxpotr}\left(
\nabla_b\maxpot\nabla_a\bar\maxpot -
\nabla_a\maxpot\nabla_b\bar\maxpot\right)~.
\end{align*}
In the following we will also group terms proportional to $t_b$ on the
left-hand-side of the expression, since in \eqref{eq:nablaA}, the
$\nabla_a\nabla_bz$ term is multiplied against $\nabla^bz$, and we
have that $t_b\nabla^bz = 0$ by our assumption that $t$ is a symmetry. 

Directly expanding the terms 
\begin{align*}
(\ernstform\tilde{\otimes}\ernstform)_{dacb}t^ct^d &=
\ernstform_{da}\ernstform_{cb}t^ct^d - \frac13
\Iasd_{dacb}\ernstform^2 t^dt^c \\
&= \ernstpot^4\nabla_a\ernstpot^{-1}\nabla_b\ernstpot^{-1} -
\frac1{12}\ernstform^2t^2g_{ab} +
\frac1{12} \ernstform^2t_at_b~,
\end{align*}
we arrive at
\begin{align}
\notag\nabla_a\nabla_b\ernstpot^{-1} &+
\frac{2t^c}{\ernstpot^2\bar\maxpotr}\nabla_a\msmax_{cb} -
\frac{2\mu}{\ernstpot^2\bar\maxpotr}\msweyl_{dacb}t^ct^d +
\frac{1}{4\ernstpot^3}\ernstform^2t_at_b +
\frac{2\mu}{\ernstpot^2\bar\maxpotr}\nabla^l\bar\maxpot\maxw_{al}t_b\\
\label{eq:appone}& = - \ernstpot\nabla_a\ernstpot^{-1}\nabla_b\ernstpot^{-1} +
\frac{1}{4\ernstpot^3}\ernstform^2 t^2 g_{ab} -
\frac{1}{2\ernstpot^2}\ernstform_{cb}\left(\preernst_a{}^c +
\bar\preernst_a{}^c\right)\\
\notag & \quad -
\frac{2\mu}{\ernstpot^2\bar\maxpotr}\left(\nabla_a\maxpot\nabla_b\bar\maxpot
+ \nabla_b\maxpot \nabla_a\bar\maxpot -
\nabla\maxpot\cdot\nabla\bar\maxpot g_{ab} +
\nabla^l\maxpot\bar\maxw_{bl}t_a - \maxw_{al}\bar\maxw_b{}^l
t^2\right)~.
\end{align}

To apply to \eqref{eq:nablaA}, we next multiply \eqref{eq:appone} by
$\nabla^b z = -\Im \nabla^b\ernstpot^{-1}$. We first consider the terms
on the last line, where the expression inside the parenthesis is
real-valued. So we can consider multiplication by
$\nabla\ernstpot^{-1}$ instead of by $\nabla z$. Observe that
\begin{align*}
\nabla_a\maxpot\nabla_b\bar\maxpot & + \nabla_b\maxpot \nabla_a\bar\maxpot -
\nabla\maxpot\cdot\nabla\bar\maxpot g_{ab} + \nabla^l\maxpot\bar\maxw_{bl}t_a - 
\maxw_{al}\bar\maxw_b{}^l t^2\\
&= \maxw^{pr}\bar\maxw^{qs}t^mt^n\cdot (g_{ap}g_{bq}g_{rm}g_{sn} +
g_{bp}g_{aq}g_{rm}g_{sn} \\
& \qquad \qquad - g_{pq}g_{ab}g_{rm}g_{sn} - g_{ap}g_{bq}g_{mn}g_{rs} - g_{bq}g_{rs}g_{an}g_{pm}) 
\intertext{since the last two terms in the parenthesis has a $g_{rs}$
product, we can apply \eqref{eq:asdprodprop1} to swap the $p$ and $q$
indices}
&= \maxw^{pr}\bar\maxw^{qs}t^mt^n\cdot (g_{ap}g_{bq}g_{rm}g_{sn} +
g_{bp}g_{aq}g_{rm}g_{sn} \\
& \qquad \qquad - g_{pq}g_{ab}g_{rm}g_{sn} - g_{aq}g_{bp}g_{mn}g_{rs}
- g_{bp}g_{rs}g_{an}g_{qm}) \\
&= \left(\frac{\kappa\bar\kappa}{4\mu^2}
\ernstform^{pr}\bar\ernstform^{qs} - \frac{\kappa\bar\kappa}{4\mu^2}
\ernstform^{pr}\bar\ernstform^{qs} + \maxw^{pr}\bar\maxw^{qs} \right)
t^mt^n \cdot (g_{ap}g_{bq}g_{rm}g_{sn} \\
& \qquad \qquad +
g_{bp}g_{aq}g_{rm}g_{sn} - g_{pq}g_{ab}g_{rm}g_{sn} -
g_{aq}g_{bp}g_{mn}g_{rs} - g_{bp}g_{rs}g_{an}g_{qm})
\end{align*}
Inside the first parenthesis, we have that $-
\frac{\kappa\bar\kappa}{4\mu^2}
\ernstform^{pr}\bar\ernstform^{qs} + \maxw^{pr}\bar\maxw^{qs}$ is an
error term by using \eqref{eq:def:msmax}. So we
define the \emph{algebraic error term}
\begin{multline}\label{eq:deferrorappone}
(\errsl{appone})_{ab} = \left(\maxw^{pr}\bar\maxw^{qs}-
\frac{\kappa\bar\kappa}{4\mu^2}
\ernstform^{pr}\bar\ernstform^{qs}\right)t^mt^n \cdot
(g_{ap}g_{bq}g_{rm}g_{sn} \\
+ g_{bp}g_{aq}g_{rm}g_{sn} - g_{pq}g_{ab}g_{rm}g_{sn} -
g_{aq}g_{bp}g_{mn}g_{rs} - g_{bp}g_{rs}g_{an}g_{qm})
\end{multline}

We next consider the left-over term given by
$\ernstform^{pr}\bar\ernstform^{qs}$. Using that $\nabla_b\ernstpot^{-1} =
\ernstpot^{-2}\ernstform_{ub}t^u$, we consider
\begin{multline*}
\ernstform^{pr}\bar\ernstform^{qs}\ernstform^{bu}t_ut^mt^n  (g_{ap}g_{bq}g_{rm}g_{sn}  
\\
+ g_{bp}g_{aq}g_{rm}g_{sn} -
g_{pq}g_{ab}g_{rm}g_{sn} -
g_{aq}g_{bp}g_{mn}g_{rs} - g_{bp}g_{rs}g_{an}g_{qm})
\end{multline*}
The first and the third terms inside the parenthesis cancel each
other. We can use product property \eqref{eq:asdprodprop2} with
$g_{bp}$ to obtain
\[
\frac14\ernstform^2 t^rt^mt^n \bar\ernstform^{qs} (g_{aq}g_{rm}g_{sn}
- g_{aq}g_{mn}g_{rs} - g_{rs}g_{an}g_{qm})~. \]
The first two terms cancel each other, and the third vanishes as
$\bar\ernstform$ is antisymmetric. From this we conclude that
\[
\nabla^bz\left(\nabla_a\maxpot\nabla_b\bar\maxpot  + \nabla_b\maxpot
\nabla_a\bar\maxpot -
\nabla\maxpot\cdot\nabla\bar\maxpot g_{ab} +
\nabla^l\maxpot\bar\maxw_{bl}t_a - 
\maxw_{al}\bar\maxw_b{}^l t^2\right) = \nabla^bz (\errsr{appone})_{ab}
\]
is essentially an algebraic error term. 

Next we consider the third term on the right hand side of
\eqref{eq:appone}. We can replace $\preernst$ by $\ernstform$ using
\eqref{eq:preernsternst}, and have 
\begin{align*}
\ernstform_{cb}\Re \preernst_a{}^c &= \frac{1}{\mu}\ernstform_{bc}
\Re(\bar\maxpotr\ernstform_a{}^c) + (\errsl{apptwo})_{ab} \\
\intertext{where}
(\errsr{apptwo})_{ab} &= \frac{2}{\mu} \ernstform_{cb}\Re(\bar\maxpot
\msmax_a{}^c)
\end{align*}
Now 
\[ \ernstform_{bc}\ernstform_a{}^c = \frac14 \ernstform^2 g_{ab} \]
and using that $\ernstform_{bc}\bar\ernstform_a{}^c$ is real valued,
we have
\begin{align*}
\ernstform_{bc}\bar\ernstform_a{}^c \nabla^bz &= \Im \ernstpot^{-2}
\ernstform_{bc}\bar\ernstform_a{}^c \ernstform^{db}t_d\\
&= -\frac14 \Im \ernstpot^{-2}\ernstform^2 \bar\ernstform_{ac}t^c\\
&= -\frac14 |\ernstpot|^4 \Im \left(
\ernstpot^{-4}\ernstform^2
\nabla_a\bar\ernstpot^{-1} \right)\\
& =  \frac14 |\ernstpot|^4 \Im(\ernstpot^{-4}\ernstform^2)\nabla_ay -
\frac14 |\ernstpot|^4 \Re(\ernstpot^{-4}\ernstform^2)\nabla_az\\
\intertext{here we can use \eqref{eq:identity1} and get}
& = \frac{|\ernstpot|^4}{t^2}\Im(\nabla\ernstpot^{-1})^2(\nabla_a y +
i\nabla_a z) -
\frac{|\ernstpot|^4}{4}\ernstpot^{-4}\ernstform^2\nabla_az
\end{align*}
so we get, using \eqref{eq:identityc1} from Corollary
\ref{cor:identities}, 
\begin{align*}
-\frac{1}{\ernstpot^2}\nabla^bz\ernstform_{cb}\Re\preernst_a{}^c &= 
- \frac{1}{\ernstpot^2}(\errsr{apptwo})_{ab}\nabla^bz -
  \frac{\bar\maxpotr}{8\mu \ernstpot^2}\ernstform^2 \nabla_a z \\
& \qquad 
+ \frac{\maxpotr \bar\ernstpot^2}{2\mu} \Im(\errsr{uno})
  \nabla_a\ernstpot^{-1} + \frac{\maxpotr \bar\ernstpot^2}{8 \mu
\ernstpot^4}\ernstform^2\nabla_az\\
&=
\frac{\ernstform^2}{4\mu\ernstpot^4}i\Im\left(\bar\ernstpot^2\maxpotr\right)\nabla_a
z  - \frac{1}{\ernstpot^2}(\errsr{apptwo})_{ab}\nabla^bz +
\frac{\maxpotr \bar\ernstpot^2}{2\mu} \Im(\errsr{uno})
\nabla_a\ernstpot^{-1}~.
\end{align*}

Next, we can consider adding in the second term on the right-hand side
of \eqref{eq:appone}, and expanding $\maxpotr =
\frac{\bar\kappa}{\mu}(\msmaxpot - \kappa\ernstpot) - \mu$ from the
definition, 
\begin{align*}
\frac{1}{4\ernstpot^3}\ernstform^2t^2\nabla_az &+
\frac{1}{4\mu\ernstpot^4}\ernstform^2
i\Im\left(\bar\ernstpot^2\maxpotr\right)\nabla_a z\\
&= (\errsr{uno} - \mu^{-2})\nabla_az\left[ \ernstpot
t^2 + \frac{i}{\mu} \Im\left(\frac{\bar\kappa}{\mu}\msmaxpot\bar\ernstpot^2 -
\frac{|\kappa\ernstpot|^2}{\mu}\bar\ernstpot - \mu \bar\ernstpot^2\right)\right]\\
\intertext{where $\errsr{uno}$ is as defined in Corallary
\ref{cor:identities}}
& = (\errsr{uno}-\mu^{-2})|\ernstpot|^2\nabla_az \left[
\bar\ernstpot^{-1} t^2 +
\frac{i}{\mu}\Im\left(\frac{\bar\kappa}{\mu}\frac{\bar\ernstpot}{\ernstpot}\msmaxpot
- \frac{|\kappa\ernstpot|^2}{\mu}\ernstpot^{-1} -
\mu\frac{\bar\ernstpot}{\ernstpot}\right)\right]~.
\end{align*}
Noting that $\msmaxpot$ is controllable by Lemma
\ref{lem:sumpotsisweakerror}, and using \eqref{eq:identity2p} to
replace $t^2$, we have
\begin{align*}
\bar\ernstpot^{-1} t^2 &+
\frac{i}{\mu}\Im\left(\frac{\bar\kappa}{\mu}\frac{\bar\ernstpot}{\ernstpot}\msmaxpot
- \frac{|\kappa\ernstpot|^2}{\mu}\ernstpot^{-1} -
\mu\frac{\bar\ernstpot}{\ernstpot}\right) \\
&= (y-iz)\left(\frac{1}{\mu^2}|\msmaxpot - \kappa\ernstpot|^2 +
\ernstpot + \bar\ernstpot + 1\right) \\
& \qquad +
i\Im\left(\frac{\bar\kappa\bar\ernstpot}{\mu^2\ernstpot}\msmaxpot\right)
+ \frac{iz}{\mu^2}|\kappa\ernstpot^2| - i \Im\left(
\frac{(y+iz)^2}{y^2 + z^2}\right)\\
&= y \left(\frac{1}{\mu^2}|\msmaxpot - \kappa\ernstpot|^2 +
\ernstpot + \bar\ernstpot + 1\right) +
i\Im\left(\frac{\bar\kappa\bar\ernstpot}{\mu^2\ernstpot}\msmaxpot\right)
\\
& \qquad - iz - \frac{iz}{\mu^2}\left(|\msmaxpot - \kappa\ernstpot|^2
- |\kappa\ernstpot|^2\right) - iz \frac{(-2y)}{y^2+z^2} -
i\frac{2yz}{y^2 + z^2}~.
\end{align*}
The first term is purely real: recalling that for our purpose we are
interested in the imaginary part of this expression, its contribution
will appear with a factor of $\errsr{uno}$. The second and fourth
terms are controlled by Lemma \ref{lem:sumpotsisweakerror}; the last
two terms cancel. So essentially we are only left with the third term,
$-iz$. In other words, up to some controllable errors, the imaginary
part of the sum of the
second and third terms on the right-hand side of \eqref{eq:appone}
contributes $\mu^{-2}z|\ernstpot|^2\nabla_az$, which corresponds
precisely to the second term on the right-hand side of
\eqref{eq:nablaA}. 

Lastly, we deal with the first term on the right-hand side of
\eqref{eq:appone}. We directly compute that 
\begin{align*}
-\ernstpot \nabla_a \ernstpot^{-1}\nabla_b\ernstpot^{-1}\nabla^bz & =
|\ernstpot|^2(y\nabla_ay + z \nabla_az - iz \nabla_a y + iy
\nabla_a z)(\nabla_by\nabla^bz + i (\nabla z)^2) \\
&= |\ernstpot|^2i(y\nabla_ay + z\nabla_a z)(\nabla z)^2 \\
& \qquad - |\ernstpot|^2i(z\nabla_a y - y\nabla_a
z)\frac{t^2}{2}\Im\errsr{uno} + \text{real-valued terms}~.
\end{align*}
The first of the terms corresponds to the first term on the right-hand
side of \eqref{eq:nablaA}, and the second term gives the error. 

So, collecting everything into one expression, we have that
\begin{align}
\notag \nabla_a\mathfrak{A} &= \frac{4\mu^2}{|\ernstpot|^2}\nabla^bz\Im\left(
\frac{t^c}{\ernstpot^2\bar\maxpotr}\nabla_a\msmax_{cb} -
\frac{\mu}{\ernstpot^2\bar\maxpotr}\msweyl_{dacb}t^ct^d\right)  + 2
\nabla_az
\Im\left(\frac{\bar\kappa\bar\ernstpot}{\mu^2\ernstpot}\msmaxpot\right)
\\
& \qquad + \mu^2 t^2(z\nabla_a y - y\nabla_a z)\Im\errsr{uno} - \Im\left[\frac{2\errsr{uno}\mu^2}{|\ernstpot|^2}\nabla_az\left(\ernstpot
t^2 + \frac{i}{\mu}\Im(\bar\ernstpot^2\maxpotr)\right)\right] \\
\notag & \qquad - \frac{z\nabla_az}{\mu^2}(|\msmaxpot - \kappa\ernstpot|^2 -
|\kappa\ernstpot|^2)  +
\Im\left[\frac{4\mu^3}{|\ernstpot|^2\ernstpot^2\bar\maxpotr}\nabla^bz
(\errsr{appone})_{ab}\right] \\
\notag & \qquad +
\Im\left[\frac{4\mu}{|\ernstpot|^2\ernstpot^2}\ernstform_{cb}\Re(\bar\maxpot\msmax_a{}^c)\nabla^bz
- \frac{\mu\maxpotr\bar\ernstpot}{\ernstpot}\Im(\errsr{uno})\nabla_a\ernstpot^{-1}\right]
\end{align}

\section{A mountain pass lemma}\label{appendix2}
The mountain pass theorem is perhaps most well known for its
application in calculus of variations in the form given by Ambrosetti
and Rabinowitz \cite{AmbRab1973}; but a finite dimensional version
goes back at least to Courant in 1950 \cite{Couran1977}. Here we give
(for not being able to find the exact statement needed elsewhere) a
version that is similar in statement to Katriel's topological mountain
pass theorem \cite{Katrie1994} but with a proof following Jabri
\cite[Chapter 5]{Jabri2003} and Nicolaescu \cite[Chapter
2]{Nicola2007a}. 

\begin{lem}\label{lem:mpt}
Let $\bar{S}$ denote a (possibly non-compact) finite dimensional
connected smooth paracompact manifold with boundary, with $S$ its 
interior and $\partial S$ 
the (possibly empty) boundary. Suppose we are given 
$f \in C^\infty(S,\Real) \cap C^0(\bar{S},\Real)$ such
that $f^{-1}((-\infty,a])$ is compact for any $a \in \Real$ (the
Palais-Smale condition). Suppose further that
there exists two real values $s_- < s_+$ and a closed subset
$C\subsetneq S$ such that
\begin{itemize}
\item $f|_{\partial S} \leq s_-$;
\item $f|_C \geq s_+$;
\item $C$ separates $\bar{S}$ with at least two of the connected
components intersecting $\{f \leq s_-\}$. 
\end{itemize}
Then $f$ attains a critical point in $S$ where the critical value is
at least $s_+$.  
\end{lem}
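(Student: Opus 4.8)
The plan is to realise the asserted critical point as a min--max value over a family of paths joining two ``low'' regions of $\bar S$, and to show that this value is critical by the standard deformation argument using the negative gradient flow; the role of the Palais--Smale hypothesis is to supply the compactness that makes this argument run on the possibly non-compact manifold $\bar S$.

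First I set up the min--max value. A connected manifold with boundary is path-connected, so $\bar S$ is path-connected. By hypothesis $\bar S\setminus C$ has two distinct connected components $U_1\neq U_2$, each meeting $\{f\le s_-\}$; fix points $p_i\in U_i$ with $f(p_i)\le s_-$, and let $\Gamma$ be the (non-empty) set of continuous paths $\gamma\colon[0,1]\to\bar S$ with $\gamma(0)=p_1$ and $\gamma(1)=p_2$. Since $p_1,p_2$ lie in different components of $\bar S\setminus C$, every $\gamma\in\Gamma$ must meet $C$, and therefore $\max_{[0,1]}f\circ\gamma\ge s_+$. Define
\[
c:=\inf_{\gamma\in\Gamma}\ \max_{t\in[0,1]}f(\gamma(t))\,,
\]
so that $s_+\le c<\infty$ (finiteness because $\Gamma\neq\emptyset$ and each path has compact image). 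It then suffices to prove that $c$ is a critical value of $f$ attained at a point of $S$, since $c\ge s_+$.

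Assume, for contradiction, that $f$ has no critical point at level $c$. Because $f|_{\partial S}\le s_-<s_+\le c$, one may fix $\delta_1>0$ with $c-2\delta_1>s_-$; then $f^{-1}\big([c-2\delta_1,c+2\delta_1]\big)$ misses $\partial S$ and, being a closed subset of the compact set $f^{-1}\big((-\infty,c+2\delta_1]\big)$ (Palais--Smale), is a compact subset of $S$. Fix a Riemannian metric on $S$ and write $\nabla f$ for its gradient. A routine compactness argument --- extract a limit point from a hypothetical sequence $x_n$ with $f(x_n)\to c$ and $\|\nabla f(x_n)\|\to 0$; the limit lies in $S$ since $f|_{\partial S}\le s_-<c$, and would be a critical point at level $c$ --- produces $\alpha>0$ and $\delta\in(0,\delta_1]$ with $\|\nabla f\|\ge\alpha$ on $K:=f^{-1}\big([c-2\delta,c+2\delta]\big)$. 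Now choose a smooth cutoff $\chi=\rho\circ f$ with $\chi\equiv1$ on $f^{-1}\big([c-\delta,c+\delta]\big)$ and $\mathrm{supp}\,\chi\subseteq f^{-1}\big((c-2\delta,c+2\delta)\big)\subseteq K\subseteq S$, and set $X:=-\chi\,\nabla f/\|\nabla f\|^2$ (and $X:=0$ off $\mathrm{supp}\,\chi$). Then $X$ is a smooth vector field on $S$ of norm $\le1/\alpha$ with compact support in $S$, so its flow $\phi_t$ exists for all $t$, is the identity near $\partial S$ and off $f^{-1}\big((c-2\delta,c+2\delta)\big)$, and extends to a homeomorphism of $\bar S$. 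Along the flow $\tfrac{d}{dt}f(\phi_t(x))=-\chi(\phi_t(x))\le0$, with value $-1$ whenever $f(\phi_t(x))\in[c-\delta,c+\delta]$; integrating yields $\phi_{2\delta}\big(\{f\le c+\delta\}\big)\subseteq\{f\le c-\delta\}$.

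Finally, pick $\gamma\in\Gamma$ with $\max f\circ\gamma\le c+\delta$. Its endpoints satisfy $f(p_i)\le s_-<c-2\delta$, hence $p_i\notin K\supseteq\mathrm{supp}\,X$ and $\phi_{2\delta}(p_i)=p_i$; thus $\phi_{2\delta}\circ\gamma\in\Gamma$ while $\max f\circ(\phi_{2\delta}\circ\gamma)\le c-\delta<c$, contradicting the definition of $c$. Therefore $f$ has a critical point at level $c$; it lies in $S$ since $K\subseteq S$, and its critical value $c$ is $\ge s_+$, which is the claim. The main point requiring care, beyond the textbook compact situation, is the non-compactness of $\bar S$: it is precisely the Palais--Smale condition that makes $f^{-1}\big([c-2\delta,c+2\delta]\big)$ compact, which at once supplies the quantitative lower bound $\|\nabla f\|\ge\alpha$ near level $c$ and the compactness of $\mathrm{supp}\,X$, so that $\phi_t$ is complete with no completeness assumption on the metric; and one must keep track of the boundary, noting that since $f\le s_-<s_+\le c$ on $\partial S$ all of the relevant behaviour near level $c$ occurs in the smooth interior $S$ where $\nabla f$ is defined.
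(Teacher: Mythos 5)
Your proposal is correct, and it proves the lemma; but it takes a genuinely different route from the paper, so a comparison is worthwhile. You run the Ambrosetti--Rabinowitz style argument: the min--max family is the set of continuous \emph{paths} joining two chosen low points $p_1,p_2$, the min--max value $c$ is shown to lie at or above $s_+$ by the separation property, and the criticality of $c$ is extracted by contradiction from the quantitative deformation lemma (pushing the sublevel $\{f\le c+\delta\}$ into $\{f\le c-\delta\}$ via a truncated, renormalised negative-gradient flow), with the Palais--Smale compactness delivering both the uniform lower bound on $\|\nabla f\|$ near level $c$ and the compactness of the support of the deforming vector field, which is what makes the flow complete and fixes $\partial S$ and the endpoints. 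The paper instead minimises over the family $\Gamma$ of \emph{compact connected subsets} $T\subset\bar S$ containing the two low components $S_1\cup S_2$ of $\{f\le s_-\}$, and---this is the structural difference---it extracts an actual minimiser $T_\infty=\bigcap_k\overline{\bigcup_{j\ge k}T_j}$ from a minimising sequence, using Palais--Smale to ensure the relevant sublevels are compact so that the limiting set exists, is compact, connected, and still lies in $\Gamma$. Having an exact minimiser means only the \emph{qualitative} deformation statement is needed: one flows $T_\infty$ by $-\eta\nabla f$ with $\eta\equiv1$ on the ``peak set'' $W=T_\infty\cap f^{-1}(m(T_\infty))$ and shows this strictly decreases $m$, contradicting minimality, without having to quantify the drop from $c+\delta$ to $c-\delta$. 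The trade-off is that your version avoids the $\limsup$-set construction and the connectedness/compactness bookkeeping for $T_\infty$, at the price of invoking the standard quantitative deformation estimate; the paper's version exhibits an explicit extremal mountain-pass set and its peak, which some readers find more geometric. Both arguments are valid here because the Palais--Smale hypothesis makes all the relevant sets compact and the flows complete.
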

\begin{proof}
Let $S_1, S_2$ be two components of $\{f\leq s_-\}$ separated by $C$
(in the sense that every connected set containing both $S_1$ and $S_2$
must intersect $C$; the pair is guaranteed to exist by assumption).
Consider the collection $\Gamma$ of compact, connected subsets of
$\bar{S}$ that contains $S_1\cup S_2$. Let $m:\Gamma \to \Real$ be
defined by $m(T) = \sup_T f$. Let $(T_n)$ be a minimising sequence for
$m$ on $\Gamma$. Observe that since each $T_n\cap C \neq \emptyset$ necessarily $m(T_n)
\geq s_+$. Noting that $\overline{\cup_{j = k}^\infty T_j} \subset \{
f \leq m(T_k)\}$ is a closed subset of a compact set, the limiting set 
$T_\infty = \cap_{k = 1}^\infty
\overline{\cup_{j = k}^\infty T_j}$ is compact as the intersection of
a decreasing family of compact sets, and we have that 
\[ s_+ \leq m(T_\infty) \leq m(T) \quad \forall T\in\Gamma~.\]

Let $W = \{ x\in T_\infty: f(x) = m(T_\infty) \}$, we show that $W$
contains a critical point using gradient flow: fix,
once and for all, a smooth Riemannian metric $g$ on $S$. Then as $W$
is compact, $|df|_g$ attains a minimum $\alpha$ on $W$. If $\alpha =
0$ we are done. Suppose $\alpha \neq 0$, let $\eta$ be a non-negative 
 bump function supported inside $\{ 2m(T_\infty) > f > (s_+ + s_-)/2, |df|_g >
\alpha /2 \}$ with $\eta|_W = 1$. Then under the flow of $-\eta \nabla
f$, $T_\infty$ is mapped to another connected compact subset $T'$ of
$\bar{S}$. Since $-\eta\nabla f$ vanishes on $S_1, S_2$, the set
$T'\in \Gamma$. But since $ - \eta|\nabla f|_g^2 \leq 0$ and $-\eta
|\nabla f|_g^2|_W \leq -\alpha^2 < 0$, we have that the flow strictly
decreases $m$, that is $m(T') < f(W) = m(T_\infty)$, which leads to a
contradiction. 
\end{proof}

\bibliographystyle{amsalpha}

\providecommand{\bysame}{\leavevmode\hbox to3em{\hrulefill}\thinspace}
\providecommand{\MR}{\relax\ifhmode\unskip\space\fi MR }
\providecommand{\MRhref}[2]{%
  \href{http://www.ams.org/mathscinet-getitem?mr=#1}{#2}
}
\providecommand{\href}[2]{#2}

\end{document}